\def\doi{9(4:8)2013}
\renewcommand{\phi}{\varphi}
\def\dotdiv{\mathop{\xy \POS (0,0)*{-}, (0,1)*{.}\endxy}}
\def\op{{\mathit{op}}}
\def\co{{\mathit{co}}}
\def\id{{\mathit{id}}}
\def\Id{{\mathit{Id}}}
\def\can{{\sf{can}}}
\def\ev{{\sf{ev}}}
\def\ol#1{\overline{#1}}
\def\strukt#1{\langle #1\rangle}
\def\colim{\mathop{\mathrm{colim}}\limits}
\def\eps{\varepsilon}
\def\tensor{\otimes}
\def\Epi{{\mathcal{E}}}
\def\Mono{{\mathcal{M}}}
\def\kat#1{{\mathscr{#1}}}
\def\K{\kat{K}}
\def\A{\kat{A}}
\def\B{\kat{B}}
\def\C{\kat{C}}
\def\D{\kat{D}}
\def\E{\kat{E}}
\def\F{\kat{F}}
\def\P{\kat{P}}
\def\X{\kat{X}}
\def\Y{\kat{Y}}
\def\V{\kat{V}}
\def\M{\kat{M}}
\def\KK{{\mathsf{K}}}
\def\LL{{\mathbb{L}}}
\def\UU{{\mathbb{U}}}
\def\PP{{\mathbb{P}}}
\def\Vcat{\V\!\!\mbox{-}{\mathsf{cat}}}
\def\Vmod{\V\!\!\mbox{-}{\mathsf{mod}}}
\def\Set{{\mathsf{Set}}}
\def\Pre{{\mathsf{Pre}}}
\def\Pos{{\mathsf{Pos}}}
\def\GUlt{{\mathsf{GUlt}}}
\def\GMet{{\mathsf{GMet}}}
\def\Two{{\mathbb{2}}}
\def\yon{{\mathbb{y}}}
\def\mult{{\mathbb{m}}}
\def\strukt#1{\langle #1\rangle}
\def\less{\sqsubseteq}
\def\Coll#1{{\mathsf{Coll}}(#1)}
\def\const{{\mathrm{const}}}
\newcommand{\Coalg}{\mathsf{Coalg}}
\newcommand{\lsem}{\mathopen{[\![}}
\newcommand{\rsem}{\mathclose{]\!]}}
\newcommand{\sem}[1]{\lsem #1 \rsem}
\renewcommand{\to}{\longrightarrow}
\definecolor{darkgreen}{rgb}{0,0.5,0}
\definecolor{darkblue}{rgb}{0,0,0.8}
\definecolor{darkred}{rgb}{0.9,0,0}
\theoremstyle{plain}
\newtheorem{theorem}{Theorem}[section]
\newtheorem{proposition}[theorem]{Proposition}
\newtheorem{corollary}[theorem]{Corollary}
\newtheorem{lemma}[theorem]{Lemma}
\theoremstyle{definition}
\newtheorem{definition}[theorem]{Definition}
\newtheorem{example}[theorem]{Example}
\newtheorem{remark}[theorem]{Remark}
\newtheorem{notation}[theorem]{Notation}
\numberwithin{equation}{section}
\def\refeq#1{{\rm (\ref{#1})}}
\begin{document}
\title[Relation lifting]
      {Relation lifting, with an application to the many-valued cover modality}
\author[M.~B\'{\i}lkov\'{a}]{Marta B\'{\i}lkov\'{a}\rsuper a}
\address{{\lsuper a}Institute of Computer Science, Academy of Sciences of the Czech Republic}
\email{bilkova@cs.cas.cz}

\author[A.~Kurz]{Alexander Kurz\rsuper b}
\address{{\lsuper{b,c}}Department of Computer Science, University of Leicester,
        United Kingdom}
\email{kurz@mcs.le.ac.uk, daniela.petrisan@gmail.com}

\author[D.~Petri\c{s}an]{Daniela Petri\c{s}an\rsuper c}
\address{\vskip-6 pt}

\author[J.~Velebil]{Ji\v{r}\'{\i} Velebil\rsuper d}
\address{{\lsuper d}Faculty of Electrical Engineering, Czech Technical University
         in Prague, Czech Republic}
\email{velebil@math.feld.cvut.cz}

\thanks{{\lsuper{a,d}}Marta B\'{\i}lkov\'{a} and Ji\v{r}\'{\i} Velebil
        acknowledge the support
        of the grant  No.~P202/11/1632
        of the Czech Science Foundation.}
\keywords{Relation lifting, module, exact square, enriched categories,
  commutative quantale, coalgebra, modal logic, cover modality}
\amsclass{18A15, 18A32}
\subjclass{F.4.1 Modal Logic}
\ACMCCS{[{\bf Theory of computiation}]:  Logic---Modal and temporal logics}

\maketitle

\begin{abstract}
  We introduce basic notions and results about relation liftings on
  categories enriched in a commutative quantale. We derive two
  necessary and sufficient conditions for a 2-functor $T$ to admit a
  functorial relation lifting: one is the existence of a distributive
  law of $T$ over the ``powerset monad'' on categories, one is the
  preservation by $T$ of ``exactness'' of certain squares. Both
  characterisations are generalisations of the ``classical'' results
  known for set functors: the first characterisation generalises the
  existence of a distributive law over the genuine powerset monad, the
  second generalises preservation of weak pullbacks.

  The results presented in this paper enable us to compute predicate
  liftings of endofunctors of, for example, generalised (ultra)metric
  spaces. We illustrate this by studying the coalgebraic cover
  modality in this setting.
\end{abstract}

\section{Introduction}
\label{sec:intro}
Relation lifting~\cite{barr:rel-alg,ckw:wpb,herm-jaco:pred-lift} plays a
crucial role in coalgebraic logic, see, e.g.,
\cite{moss:cl,baltag:cmcs00,venema:coalg-aut}.
On the one hand, it is used to explain bisimulation: If
$T:\Set\to\Set$ is a functor, then the largest bisimulation on a
coalgebra $\xi:X\to TX$ is the largest fixed point of the operator
$(\xi\times\xi)^{-1}\cdot\ol{T}$ on relations on $X$, where $\ol{T}$
is the lifting of $T$ from the category of sets and functions to the
category of sets and relations. (The precise meaning of `lifting' will
be given in our setting as Definition~\ref{def:rel-lift}.)

On the other hand, Moss's coalgebraic logic~\cite{moss:cl} is given by
adding to propositional logic a modal operator $\nabla$, the semantics
of which is given by applying $\ol{T}$ to the forcing relation
${\Vdash}\subseteq X\times\mathcal{L}$, where $\mathcal L$ is the set of
formulas: If $\alpha\in T(\mathcal L)$, then $x\Vdash\nabla\alpha\
\Leftrightarrow \ \xi(x)\mathrel{\ol{T}(\Vdash)}\alpha$.

This paper presents a fundamental study of the relation lifting of a
functor $T:\Vcat\to\Vcat$, where $\V$ is a commutative quantale. Of
particular importance to coalgebra is the case $\V=\Two$, the two-element 
chain. In that case, $\Vcat$ is the category $\Pre$ of
preorders. In the same way as $\Set$-coalgebras capture bisimulation,
$\Pre$-coalgebras (and $\Pos$-coalgebras) capture
simulation~\cite{rutten:cmcs98,worrell:cmcs00,hugh-jaco:simulations,klin:phd,levy:fossacs11,bala-kurz:calco11}.
This suggests that, in analogy with the $\Set$-based case, a
coalgebraic understanding of logics for simulations should derive from
the study of $\Pos$-functors together with on the one hand their
predicate liftings and on the other hand their $\nabla$-operator. The
beginnings of such a study where carried out in~\cite{bkpv:calco11}
and the purpose of this paper is to show how that work generalises
from $\Two\mbox{-}{\mathsf{cat}}$ to $\Vcat$, {thus extending
  the scope of our work from coalgebras over preorders to coalgebras
  over generalised (ultra)metric spaces.}

In addressing the problem of lifting a locally monotone endofunctor on
the category of preorders/posets to an endofunctor of the category of
monotone relations, \cite{bkpv:calco11} used the representation of
monotone relations as certain spans, called two-sided discrete
fibrations. That such a representation is possible is due to the fact
that preorders can be viewed as small categories enriched in the
two-element chain $\Two$. Hence one works with discrete fibrations in
the 2-category $\Two\mbox{-}{\mathsf{cat}}$ and the above mentioned
correspondence of monotone relations and two-sided discrete fibrations
is manifested by the so-called {\em Grothendieck construction\/}.

The Grothendieck construction, however, is not available for
a general (cocomplete, symmetric monoidal closed) base category
$\V$, hence one cannot hope for the correspondence of ``relations''
in $\Vcat$ and two-sided discrete fibrations. There is a remedy
to this problem, that goes back to Ross
Street~\cite{street:fibrations}: the ``relations'' in $\Vcat$
correspond to the two-sided {\em codiscrete cofibrations\/}
in $\Vcat$, i.e., to two sided-discrete fibrations in
$(\Vcat)^\op$.

For a general base category $\V$, a ``relation''
$$
\xymatrix{
R:
\A
\ar[0,1]|-{\object @{/}}
&
\B
}
$$
from a $\V$-category $\A$ to a $\V$-category $\B$ is
a $\V$-functor of the form
$$
R:\B^\op\tensor\A\to\V
$$
called a {\em module\/} and
it is represented by a {\em cospan\/}
$$
\xymatrixrowsep{1pc}
\xymatrix{
\B
\ar[1,1]_{i_0}
&
&
\A
\ar[1,-1]^{i_1}
\\
&
\Coll{R}
&
}
$$
called the {\em collage\/} of $R$ that becomes a two-sided
discrete fibration in $(\Vcat)^\op$.

Given a 2-functor $T:\Vcat\to\Vcat$, the desired
{\em relation lifting\/} $\ol{T}:\Vmod\to\Vmod$,
where by $\Vmod$ we denote the 2-category of $\V$-modules
(= ``relations''), is defined, on a module $R$ from
$\A$ to $\B$, as follows:

\begin{enumerate}[\quad]
\item
Represent the module $R$ as the collage $(i_0,\Coll{R},i_1)$ and
put $\ol{T}(R)$ to be the composite
$$
\xymatrix{
\A
\ar[0,1]|-{\object @{/}}^-{(Ti_1)_\diamond}
&
\Coll{R}
\ar[0,1]|-{\object @{/}}^-{(Ti_0)^\diamond}
&
\B
}
$$
of modules, where the upper and lower diamonds are certain
canonical ways of making a functor into a module (the ``graph''
constructions).
\end{enumerate}

\noindent Although one can define $\ol{T}$ in the above manner for any $T$, the
resulting $\ol{T}$ will only be a {\em lax\/} functor. {It extends $T$
  iff $T$ preserves full and faithful 1-cells as shown by
  Worrell~\cite{worrell:cmcs00}. $\ol{T}$ being a lax relation lifting
  suffices to generalise the notion of (bi)simulation from
  $\mathsf{Cat}$ to $\Vcat$. But for the applications to coalgebraic
  logic we need} that $\ol{T}$ preserves identities and composition
strictly, for which one needs to assume that $T:\Vcat\to\Vcat$
satisfies a certain condition that we call the {\em Beck-Chevalley
  Condition\/} (BCC).

Since we work with cospans rather than spans, we expect that the
cospans corresponding to modules will be ``jointly epimorphic'' (just
as the spans in sets, corresponding to binary relations, are jointly
monomorphic). This is indeed the case: collages are (contained in) the
``jointly epi'' part of a factorisation system on $\Vcat$ that has
fully faithful functors as the ``mono part''.

Hence the idea of composing two modules as cospans is as in
$$
\xymatrix{
\C
\ar[1,1]_{i^\E_0}
&
&
\B
\ar[1,-1]^{i^\E_1}
\ar[1,1]_{i^\F_0}
&
&
\A
\ar[1,-1]^{i^\F_1}
\\
&
\E
\ar[1,1]_{p_0}
&
&
\F
\ar[1,-1]^{p_1}
&
\\
&
&
\P
&
&
\\
&
&
\E\circ\F
\ar[-1,0]_{j}
\ar @{<-} `l[lluuu] [lluuu]^{i^{\E\circ\F}_0}
\ar @{<-} `r[rruuu] [rruuu]_{i^{\E\circ\F}_1}
&
&
}
$$
where the middle square is a pushout and $j:\E\circ\F\to\P$
is the ``mono part'' of the factorisation of
$[p_0 i^\E_0,p_1 i^\F_1]:\C+\A\to\P$.

The Beck-Chevalley Condition for $T$ will then ensure that
when applying $T$ to the above diagram and taking lower
and upper diamonds where appropriate will yield that
$\ol{T}(\E\circ\F)=\ol{T}(\E)\cdot\ol{T}(\F)$ holds.

\subsection*{The level of generality}
Although all what follows, at least up to and including
Section~\ref{sec:lifting}, will work for a general complete and cocomplete base
category $\V$, in view of applications and simplicity of exposition we
confine ourselves to a particularly simple choice of base category
$\V$, namely, a (complete) lattice.

Therefore, our results will enable us to compute
predicate liftings of endofunctors of, for example,
generalised (ultra)metric spaces.

\subsection*{The structure of the paper}
We recall the basic facts about categories and functors enriched in a
commutative quantale in Section~\ref{sec:V}.  In
Section~\ref{sec:modules} we argue that modules (in the sense of
enriched category theory) are the proper generalisation of
relations. We prove that the category of modules is a Kleisli category
for a certain monad and derive the first (easy) characterisation of
the existence of a relation lifting via the existence of a
distributive law, see Corollary~\ref{cor:lifting=distributive_law}
below.  For the second characterisation theorem one needs to analyse
cospans that correspond to modules more in detail: this is done in the
rest of Section~\ref{sec:modules}. In Section~\ref{sec:regularity} we
prove that the cospans, corresponding to modules, are ``jointly epi''
w.r.t.\ a well-behaved factorisation system on $\Vcat$. In fact, we
prove that $(\Vcat)^\op$ becomes a ``regular'' category.  Hence
composition of relations (=modules) can be reduced to taking
subobjects of pushouts of cospans. Having observed this, we analyse in
Section~\ref{sec:lifting} what it takes for a functor to preserve
composition of cospans, arriving at the second characterisation
theorem for a relation lifting in Corollary~\ref{cor:ext-thm} below.
Finally, we exhibit examples of various functors in
Section~\ref{sec:examples} {and indicate an application to coalgebraic
logic in Section~\ref{sec:nabla}.}

\subsection*{Related work}
The basic categorical machinery on codiscrete cofibrations is
contained in~\cite{street:fibrations} and the idea of using
factorisation systems for composition of collages is
in~\cite{carboni+johnson+street+verity}, see also~\cite{riehl} for a
more elementary treatment.  The essential ideas about coregularity of
$\Vcat$ w.r.t. codiscrete cofibrations appear in~\cite{nlab}, we have
included the proofs in our simpler setting for the sake of
self-containedness.  Coalgebras over a category enriched over a
commutative quantale have been suggested for the study of simulations
  in~\cite{rutten:cmcs98} and studied in detail
  in~\cite{worrell:thesis,worrell:cmcs00}. For the purpose of
  simulations, lax liftings are of interest and
  where~\cite{worrell:cmcs00} characterises lax liftings, we
  characterise non-lax (i.e., strict) liftings.

\subsection*{Acknowledgement}
We thank the anonymous referees for their helpful suggestions and comments.

\section{The base category $\V$}
\label{sec:V}
We will enrich all our categories, functors, etc.
in a particularly simple (co)complete symmetric
monoidal closed base category $\V$. Namely, we assume that
$$
\V_o
$$
is a complete lattice with the lattice order written as
$\leq$, the symbol $\bot$ denotes the least element and
$\top$ the greatest element of $\V_o$.

We further denote by $\tensor$ the symmetric monoidal
structure on $\V_o$, having a unit element $I$.
The closed structure (the internal hom) of $\V_o$
is denoted by $[x,y]$. Hence we have adjunction
relations
$$
x\tensor y\leq z
\quad
\mbox{iff}
\quad
y\leq [x,z]
$$
for every $x$, $y$, $z$ in $\V_o$.

The whole structure as above is denoted by
$$
\V=(\V_o,\tensor,I,[{-},{-}])
$$
and it is sometimes called a {\em commutative quantale\/}.

\begin{definition}
A category $\A$ {\em enriched\/} in $\V$
(or, a {\em $\V$-category\/}) consists of
the following data:
\begin{enumerate}[(1)]
\item
A class of {\em objects\/} denoted by $a$, $b$, \dots
\item
For every pair $a$, $b$ of objects a {\em hom-object\/}
$\A(a,b)$ in $\V_o$.
\end{enumerate}
The data are subject to the following axioms:
\begin{enumerate}[(1)]
\item
For every object $a$ there is an inequality
$$
I\leq \A(a,a)
$$
witnessing the ``choice of the identity morphism on $a$''.
\item
For every triple $a$, $b$, $c$ of objects there is
an inequality
$$
\A(b,c)\tensor\A(a,b)\leq\A(a,c)
$$
witnessing ``the composition of morphisms''.
\end{enumerate}
\end{definition}

\noindent Notice that the other two usual axioms
(identity morphisms are identities w.r.t.
composition and composition is associative,
see~\cite{kelly:book})
become void, since we enrich in a poset.

Observe that $\V$ itself becomes a $\V$-category
by putting $\V(x,y)=[x,y]$.

Our previous work \cite{bkpv:calco11} was based on the following instance.

\begin{example}\label{exle:Two}
  $\V_o$ is the two-element chain $\Two$, i.e., there are two objects
  $0$ and $1$ with $0\leq 1$. The tensor in $\Two$ is the meet and the
  internal hom is implication.

  A $\V$-category is a (possibly large) preorder.
\end{example}

In this paper we are also interested in the following examples of
$\V$, going back at least to~\cite{lawvere}.

\begin{example}\label{exle:V-metric}
\label{ex:V}
\hfill
\begin{enumerate}[(1)]
\item $\V_o$ is the unit interval $[0;1]$ with
  $\le$
  being the reversed order $\ge_\mathbb{R}$ of the real numbers. The
  unit $I$ is $0$ and $x\tensor y= \max\{x,y\}$ where the maximum is
  taken w.r.t.\ the usual order $\le_\mathbb{R}$. The internal hom
  is given by $[0;1](x,y)= \texttt{\ if\ } x\ge_\mathbb{R} y \texttt{\ then\ } 0
  \texttt{\ else\ } y$.

  A $\V$-category $\A$ is a (possibly large) {\em generalised
    ultrametric space\/}: the hom-object $\A(a,b)$ is the ``distance''
  of $a$ and $b$ (notice that distance need not be symmetric), the
  axiom for identities becomes the requirement that $\A(a,a)=0$ holds
  for all $a$, and the axiom for composition is the ultrametric
  triangle inequality $\A(a,c)\leq_\mathbb{R}\max\{\A(a,b),\A(b,c)\}$
  for all $a$, $b$, $c$.
\item $\V_o$ is the interval $[0;\infty]$ with $\le$
  being the reversed order $\ge_\mathbb{R}$ of the reals. Extend the
  usual addition of nonnegative reals by putting
  $x+\infty=\infty+x=\infty$, for every $x\in [0;\infty]$ and let
  $x\tensor y = x+y$, the unit $I$ being $0$. The internal hom is
  given by truncated subtraction $[0;1](x,y)= y\dotdiv x = \texttt{\ if\ }
  x\ge_\mathbb{R} y \texttt{\ then\ } 0 \texttt{\ else\ } y-x$.

A $\V$-category $\A$ is a (possibly large)
{\em generalised metric space\/}:
the hom-object $\A(a,b)$ is the ``distance'' of $a$ and $b$
(notice that distance need not be symmetric and it may be infinite),
the axiom for identities
becomes the requirement that $\A(a,a)=0$ holds for all $a$,
and the axiom for composition is the triangle inequality
$\A(a,c)\leq_\mathbb{R}\A(a,b)+\A(b,c)$ for all $a$, $b$, $c$.
\item
An example of a quantale $\V$ that leads to
{\em probabilistic metric spaces\/} is the following one:
let $\V_o$ be the poset of all functions
$f:[0;\infty]\to [0;1]$ such that $f(x)=\bigvee_{y< x} f(y)$
with the pointwise order.
By defining
$$
f \tensor g (z) = \bigvee_{x+y\leq z} f(x)\cdot g(y)
$$
we obtain a quantale $\V$. A $\V$-category $\A$
is a (generalised) probabilistic metric space: for every
pair $a$, $a'$ of objects of $\A$, the hom-object is a
function
$$
\A(a,a'):[0;\infty]\to [0;1]
$$
with the intuitive meaning $\A(a,a')(r)=s$ holds
iff $s$ is the probability that the distance from $a$ to $a'$
is smaller than $r$. See~\cite{hofmann+reis}
and~\cite{flag-kopp:continuity-spaces}.
\end{enumerate}
\end{example}

\noindent The first two items above have been studied extensively
in~\cite{bbr:gult} and~\cite{bbr:gms}, respectively. They should serve
as a source of intuition: if $\V$ is a commutative quantale, then
$\V$-categories should be thought of as ``generalised metric spaces
with the metric valued in $\V_o$'', {see \cite{flag-kopp:continuity-spaces} for more in this direction.} {The following example
  illustrating both items above and the use of a non-symmetric metric
  is taken from \cite{bbr:gult,bbr:gms}.

\begin{example}\label{exle:A-infty}
  Given a set $A$, we can turn the set $A^\infty$ of finite and
  infinite words $v=v_1v_2\ldots$ with $v_i\in A$ into a non-symmetric
  generalised (ultra)metric space. Define $A^\infty(v,w)=0$ if $v$ is
  a prefix of $w$ and $A^\infty(v,w)=2^{-n}$ otherwise where
  $n\in\mathbb{N}$ is the length of the longest common prefix of $v$
  and $w$.
\end{example}
}

Another source of examples and intuitions comes from many-valued or
fuzzy logic \cite{hajek:fuzzy-logic}:

\begin{example}\label{exle:V-fuzzy}
  There are $\V$'s coming from the theory of {\em fuzzy sets\/}:
  suppose that $I=\top$ is the top element of $\V_o$, then a small
  $\V$-category $\A$ consists of a set of objects and the hom-objects
  satisfy the axioms
$$
\A(a,a)=\top,
\quad
\A(a,a')\tensor\A(a',a'')\leq\A(a,a'')
$$
If, moreover, $\A(a,a')=\A(a',a)$ holds for all $a$ and $a'$,
then to give a small $\V$-category
$\A$ is to give a `similarity relation'
on the set of objects of $\A$.

Examples of $\V$'s:
\begin{enumerate}[(1)]
\item The unit interval $[0;1]$ with the usual order and the
  \L ukasiewicz tensor $x\tensor y=\max\{x+y-1,0\}$. The internal hom is
  given by $[x,y] \ =\ \texttt{\ if\ } x\le y \texttt{\ then\ } 1
  \texttt{\ else\ } 1-x+y$.
\item The unit interval $[0;1]$ with the usual order and the G\"{o}del
  tensor $x\tensor y=\min\{x,y\}$. The internal hom is given by $[x,y]
  \ =\ \texttt{\ if\ } x\le y \texttt{\ then\ } 1 \texttt{\ else\ }
  y$.
\item The unit interval $[0;1]$ with the usual order and the product
  tensor $x\tensor y=x\cdot y$. The internal hom is given by $[x,y] \
  =\ \texttt{\ if\ } x\le y \texttt{\ then\ } 1 \texttt{\ else\ }
  \frac{y}{x}$.
\item
The chain $\bot=x_0<x_1<\dots<x_n=\top$ for some
natural number $n$, with the \L ukasiewicz
tensor $x_i\tensor x_j=x_{\max\{i+j-n,0\}}$.
\item
The chain $\bot=x_0<x_1<\dots<x_n=\top$ for some
natural number $n$, with the G\"{o}del
tensor $x_i\tensor x_j=x_{\min\{i,j\}}$.
\end{enumerate}
\end{example}

\begin{remark}
  Examples~\ref{exle:V-metric} and \ref{exle:V-fuzzy} are closely
  related. In fuzzy logic, binary operators that have an adjoint are
  called t-norms, in category theory they are called monoidal
  structures. For example, $x\mapsto 1-x$ is an isomorphism between
  Example~\ref{exle:V-metric}(2) and Example~\ref{exle:V-fuzzy}(2).
\end{remark}

\begin{notation}
For any $\V$-category $\A$ we denote by $\A^\op$
its {\em opposite\/} $\V$-category. $\A^\op$
has the same objects as $\A$ and the equality
$$
\A^\op(a,b)=\A(b,a)
$$
holds for all $a$ and $b$.

For two $\V$-categories $\A$ and $\B$ we denote by
$$
\A\tensor\B
$$
the $\V$-category having pairs $(a,b)$ as objects
and
$$
\A\tensor\B((a,b),(a',b'))
=
\A(a,a')\tensor\B(b,b')
$$
and we call it a {\em tensor product\/} of $\A$ and $\B$.
\end{notation}

\begin{remark}\label{rmk:order}
As always, every $\V$-category $\A$ has its
{\em underlying ordinary category\/} $\A_o$
defined as follows:
\begin{enumerate}[(1)]
\item
Objects of $\A_o$ are the same as objects of $\A$.
\item
There is a morphism $a\to b$ in $\A_o$ iff the
inequality $I\leq\A(a,b)$ holds.
\end{enumerate}
Observe that an underlying category $\A_o$ is always a (possibly
large) {\em preorder\/}.  Furthermore, the underlying ordinary
category of $\V$ is the lattice $\V_o$ (seen as a category in the
usual sense).
Conversely, any (possibly large) preorder $\strukt{A,\less}$
gives rise to a {\em free $\V$-category\/} $\A$ having elements
of $A$ as objects and $\A(a,a')=I$ if $a\less a'$ and
$\A(a,a')=\bot$ otherwise.
\end{remark}

{
  \begin{example}
    Going back to the generalised metric spaces of
    Example~\ref{exle:V-metric}, we find that there is a morphism
    $a\to b$ in $\A_o$ iff $\A(a,b)=0$, so that we have
    \begin{equation*}
      \label{eq:V-metric-order}
      a\le b \ \Leftrightarrow \ \A(a,b)=0.
    \end{equation*}
    In particular, for $A^\infty$ as in Example~\ref{exle:A-infty},
    we have that $\le$ coincides with the prefix order.
  \end{example}
}

\begin{definition}
Given $\V$-categories $\A$, $\B$, a {\em $\V$-functor\/}
$f:\A\to\B$ is given by the following data:
\begin{enumerate}[(1)]
\item
An {\em object assignment\/}: for every object $a$
in $\A$, there is a unique object $fa$ in $\B$.
\item
An {\em action on hom-objects\/}: for every pair
$a$, $a'$ of objects of $\A$ there is an inequality
$$
\A(a,a')\leq\B(fa,fa')
$$
in $\V_o$.
\end{enumerate}
\end{definition}

Again, due to enrichment in a poset, the two requirements
(preservation of identities and composition) become
void. See~\cite{kelly:book}.

\begin{example}
For every $\V$-category, the hom-object
assignment $(a,b)\mapsto\A(a,b)$
gives rise to the
{\em hom-functor\/}
$$
\A:\A^\op\tensor\A\to\V
$$
\end{example}

\begin{example}
  Observe that $\Two$-functors are exactly the {\em monotone\/} maps
  between preorders. $\V$-functors for $\V=[0;1]$ or $\V=[0;\infty]$
  as in Example~\ref{exle:V-metric} are exactly the {\em
    nonexpanding\/} maps, that is, maps $f:\A\to \B$ such that
  $\B(fa,fa')\leq_\mathbb{R}\A(a,a')$ holds for all $a, a'$ in $\A$.
\end{example}

\begin{notation}
We denote by
$$
[\A,\B]
$$
the $\V$-category of $\V$-functors and $\V$-natural
transformations, where we put
$$
[\A,\B](f,g)=\bigwedge_a \B(fa,ga)
$$
\end{notation}

\begin{remark}\label{rmk:V-nat}
  Given $\V$-functors $f:\A\to\B$ and $g:\A\to\B$, $\V$-natural
  transformations correspond to arrows $I\to [\A,\B](f,g)$. If $\V$ is
  a poset, this means that there is a $\V$-natural transformation iff
  the inequality
$$
I\leq\bigwedge_a\B(fa,ga)
$$
holds in $\V_o$, in which case we write $$f\le g.$$  $f\leq g$ means
exactly that $fa\leq ga$ holds in $\B_o$ for every $a$ in $\A$.
\end{remark}

    \newcommand{\true}{\mathsf{true}}
    \newcommand{\false}{\mathsf{false}}
    \newcommand{\length}{\mathsf{length}}
    \newcommand{\phialwaysa}{\phi_{\Box a}}
    \newcommand{\phisometimesa}{\phi_{\Diamond a}}
    \newcommand{\phialwayssometimesa}{\phi_{\Box\Diamond a}}
    \newcommand{\phisometimesalwaysa}{\phi_{\Diamond\Box a}}
{
\begin{example}
  Particularly important functor categories are the categories
  $[\A,\V]$ of \emph{$\V$-valued predicates} over $\A$. Note that $\V$
  as a complete lattice has some logical structure, with conjunctions
  corresponding to meets and disjunctions to joins. We also write $\true$
  and $\false$ for the top and the bottom element, respectively.
  \begin{enumerate}[(1)]
  \item In case that $\V=\Two$, we we can identify
    $\phi,\psi:\A\to\Two$ with up-sets $\bar\phi, \bar\psi$ and then
    we have that $[\A,\Two](\phi,\psi)=1$ iff $\phi(a)\le \psi(a)$ for
    all $a$ in $A$ iff $\bar\phi \subseteq \bar\psi$.
  \item In the generalised metric spaces of
    Example~\ref{exle:V-metric}, we have that functors
    $\phi,\psi:\A\to\V$ are many-valued predicates. We have $\true=0$
    and $\false=1$ in Example~\ref{exle:V-metric}.1 and
    $\false=\infty$ in Example~\ref{exle:V-metric}.2. We sometimes
    think of the value $\phi(a)$ as the distance of ``$a$ satisfies
    $\phi$'' from being true, or the cost of turning ``$a$ satisfies
    $\phi$'' into a true statement.
    In the particular case of $A^\infty$ as in
    Example~\ref{exle:A-infty}, examples include
    $\phialwaysa,\phisometimesa:A^\infty\to\V$ for some fixed $a\in A$
    defined by
    \begin{align*}
      \phialwaysa(v) & = 1-\sum_{n=1}^{\infty} 2^{-n}\cdot va(n)
      \textrm{
        \quad\quad where $va(n)=1$ if $v_n=a$ and $va(n)=0$ otherwise} \\
      \phisometimesa(v) & = \left\{\begin{array}{ll} 1 &
          \textrm{ \quad\quad if $a$ does not occur in
            $v$}\\
          1-2^{-n} & \textrm{ \quad\quad if $n$ is the smallest number
            such that $v_{n+1}=a$}
        \end{array}\right.
    \end{align*}
    Intuitively, $\phialwaysa$ means `always $a$' with
    $\phialwaysa(v)=\true$ iff $v=a^\omega$ is the infinite string
    consisting only of $a$'s; $\phisometimesa$ means `sometimes $a$'
    with $\phisometimesa(v)=\true$ iff the first element of $v$ is $a$
    and the value $\phisometimesa(v)$ decaying the longer fulfilling
    the `promise of $a$' is postponed. On the other
    hand  maps  $\phialwayssometimesa,\phisometimesalwaysa:\A_o\to\V$ satisfying
    $$
    \begin{array}{ll}
    \phialwayssometimesa(v) = 0
    &
    \textrm{ \quad\quad iff $v$ contains infinitely many $a$'s.}
    \\
    \phisometimesalwaysa(v) = 0
    &
    \textrm{ \quad\quad iff $v$ has an infinite suffix consisting only of $a$'s.}
    \end{array}
    $$
    cannot be made functorial (non-expansive). Indeed, let $\phi$ be
    any of the two above. Then $\phi(b^\omega)>_\mathbb{R}0$. It
    follows $A^\infty(b^na^\omega,
    b^\omega)=2^{-n}<_\mathbb{R}\phi(b^\omega)=\V(\phi(b^na^\omega),\phi(b^\omega))$
    for some large enough $n$, violating non-expansiveness. Intuitively, these properties $\phi$
    fail to be non-expansive because any prefix of finite length can
    be extended to an infinite word $v$ with $\phi(v)=\true$.

    The order on predicates induced by $\V$-natural transformations is
    implication. For example we do have
$$\false \ \le \ \phialwaysa \ \le \ \phisometimesa \ \le \ \true$$
Indeed, $[\A,\V](\phi,\psi)=\bigwedge_a
[\phi(a),\psi(a)]=\sup\{\V(\phi(a),\psi(a)) \mid a\in \A\}$ so that we have
$\phi\le\psi \ \Leftrightarrow \ \sup\{\V(\phi(a),\psi(a)) \mid a\in
\A\}=0$ where $\sup$ refers to the order of the reals.
  \end{enumerate}
\end{example}
}

It follows from Remark~\ref{rmk:V-nat} that every $[\A,\B]$, as any
  $\V$-category, is a {\em preorder\/} and we denote by
$$
\Vcat
$$
the $\Pre$-category of {\em small\/} $\V$-categories, $\V$-functors
and $\V$-natural transformations.

\begin{remark}
  That $\Vcat$ is a $\Pre$-category will simplify our proofs:
  we will not need to worry about coherence conditions, since they
  will be satisfied automatically.
\end{remark}

\begin{remark}
We omit the prefix $\V$- when speaking about individual
$\V$-categories and $\V$-functors in what follows.
\end{remark}

\begin{example}\label{exle:GUlt-GMet}
\hfill
\begin{enumerate}[(1)]
\item For $\V=\Two$, $\Vcat$ is the $\Pre$-category $\Pre$ of
  preorders, and monotone maps that are ordered pointwise.  This
  $\Pre$-category has been studied
  in~\cite{bkpv:calco11} and here we are going to
  generalise this to also cover the next examples.
\item For $\V=[0;1]$ as in Example~\ref{exle:V-metric}, $\Vcat$ is the
  $\Pre$-category $\GUlt$ of generalised ultrametric spaces and
  nonexpanding maps. For $f,g:\A\to\V$ we have $f\le g\
  \Leftrightarrow \ fa\ge_\mathbb{R} ga$ for all $a$.
\item For $\V=[0;\infty]$ as in Example~\ref{exle:V-metric}, $\Vcat$
  is the $\Pre$-category $\GMet$ of generalised metric spaces and
  nonexpanding maps. For $f,g:\A\to\V$ we have $f\leq g\
  \Leftrightarrow \ fa\ge_\mathbb{R} ga$ for all $a$.
\end{enumerate}
\end{example}

We finish by some examples of generalised metric spaces.

\begin{example}\label{exle:metric-spaces}
  \hfill
  \begin{enumerate}[(1)]
  \item Given a set $A$, we can turn the set $\A^\mathbb{N}$ of
    infinite words (or streams) over $A$ into an ultrametric space with
    the metric $A^\mathbb{N}(v,w)=2^{-n}$ where $n\in\mathbb{N}$ is
    the length of the largest common prefix of $v$ and $w$. This space
    is symmetric and well-known in metric domain
    theory~\cite{bakk-vink:cfs}.
  \item A non-symmetric version of the space above has been given as
    Example~\ref{exle:A-infty}.
  \item A variant on item (2) above which does not satisfy the
    stronger triangle inequality of ultrametric spaces but only the
    triangle inequality of metric spaces goes as follows. Writing
    $v\wedge w$ for the longest common prefix of $v,w$ and $|v|$ for
    the length of $v$, we define $\A^\infty(v,w)=\sum_{i=|v\wedge
      w|+1}^{|v|} 2^{-i}$. In particular, $\A^\infty(v,w)=0$ if $v$ is
    a prefix of $w$ and $\A^\infty(v,w)=2^{-|v\wedge w|}$ if $v$ is
    infinite and not a prefix of $w$. Intuitively, $\A^\infty(v,w)$
    measures the cost of `building up' $v$ from $w$ if the cost of
    adding a letter at stage $i$ is $2^{-i}$.
  \item Let $\A$ be in $\GMet$. Then $\A^\mathbb{N}=[\mathbb{N},\A]$
    has as carrier the set $\A_0^\mathbb{N}$ of infinite words over
    $\A$ and metric $\A^\mathbb{N}(v,w)=\sup\{\A(v_n,w_n)\mid
    n\in\mathbb{N}\}$.
  \item If distances in $\A$ are bounded by 1, then $\A_0^\mathbb{N}$
    can be equipped with a different metric $\A^\mathbb{N}(v,w) =
    \sum_{n\in\mathbb{N}} 2^{-n}\cdot \A(v_n,w_n)$. Intuitively, and
    differently to the previous examples, for two infinite words
    $v,w$, the index $i$ contributes to the cost $\A^\mathbb{N}(v,w)$
    only if $v_i\not=w_i$.
  \end{enumerate}
\end{example}

\section{The modules}
\label{sec:modules}

A relation from $A$ to $B$ in $\Set$ can be seen as a map $B\times
A\to\{0,1\}$. Similarly, a monotone relation from $\A$ to $\B$ in
$\Pre$ is a monotone function $\B^\op\times \A\to\Two$ and this was
the setting of~\cite{bkpv:calco11}. Here, we
generalise to $\B^\op\tensor \A\to\V$, which are now called modules,
following standard terminology~\cite{street:fibrations}.

We recall in this section that modules constitute a category $\Vmod$,
enriched in the category $\Pos$ of posets.
Moreover we recall that there is a natural
``graph'' functor $({-})_\diamond:\Vcat\to\Vmod$ that will enable us
to formulate the problem of a relation lifting. We also prove an
easy solution of the relation lifting problem: since $\Vmod$
is a Kleisli category of a certain monad on $\Vcat$, to have
a relation lifting is to admit a distributive law over that certain
monad, see Corollary~\ref{cor:lifting=distributive_law} below. At the end
of this section we also recall the nature of cospans that correspond
to modules.

For more details on modules and (co)fibrations, see Street's
paper~\cite{street:fibrations}. Our
Corollary~\ref{cor:lifting=distributive_law} is a slight
generalisation of results in~\cite{mrw}.

\begin{definition}
A {\em module\/} from $\A$ to $\B$, denoted by
$$
\xymatrix{
R:
\A
\ar[0,1]|-{\object @{/}}
&
\B
}
$$
is a functor $R:\B^\op\tensor\A\to\V$.

Given modules
$$
\xymatrix{
R:
\A
\ar[0,1]|-{\object @{/}}
&
\B
}
\quad
\xymatrix{
S:
\B
\ar[0,1]|-{\object @{/}}
&
\C
}
$$
we define their {\em composite\/}
$$
\xymatrix{
S\cdot R:
\A
\ar[0,1]|-{\object @{/}}
&
\C
}
$$
to be the functor with values
$$
S\cdot R(c,a)=\bigvee_b S(c,b)\tensor R(b,a)
$$
for all $c$ and $a$.
\end{definition}

\begin{example}
  If $\V$ as in Example~\ref{exle:V-metric} and $ \xymatrix@1{ R:\A
    \ar[0,1]|-{\object @{/}} & \B } $ then we can think of $R(b,a)$ as
  ``the cost of going from $a$ to $b$''. That $R$ is a module and not
  just a $\V$-valued relation on the underlying sets means precisely
  { \begin{gather*}
      \B(b',b) \tensor R(b,a) \ge_\mathbb{R} R(b',a)\\
      R(b,a) \tensor \A(a,a') \ge_\mathbb{R} R(b,a')
\end{gather*}
stating that ``going directly is cheaper than making a detour''. Here
$\tensor$ is $\max$ or $+$ depending on the choice of $\V$ in
Example~\ref{exle:V-metric}.
The reader might wish to go over Examples~\ref{exle:metric-spaces}
with $R$ being the internal hom. }
\end{example}

It is easy to prove that the composition of modules
is associative and that it has hom-functors $\A:\A^\op\tensor\A\to\V$
as units.

Define {\em module transformations\/} from
$S$ to $S'$ to be the natural transformations between
the respective functors.

\begin{remark}
Suppose
$
\xymatrix@1{
S,S':\A
\ar[0,1]|-{\object @{/}}
&
\B
}
$
are modules such that
$S\leq S'$ and $S'\leq S$ hold. In other words:
suppose the inequalities
$$
I\leq [S(b,a),S'(b,a)]
\quad
\mbox{and}
\quad
I\leq [S'(b,a),S(b,a)]
$$
hold in $\V_o$ for all $b$ and $a$.
Due to the monoidal closedness of $\V$, the above inequalities
hold if and only if the inequalities
$$
S(b,a)\leq S'(b,a)
\quad
\mbox{and}
\quad
S'(b,a)\leq S(b,a)
$$
hold in $\V_o$ for all $b$ and $a$.
Hence the equality $$S(b,a)=S'(b,a)$$ holds
for all $b$ and $a$, since $\V_o$ is a poset.
\end{remark}

Therefore small categories, modules, and module transformations
organise themselves in a $\Pos$-category denoted by
$$
\Vmod
$$

\begin{remark}
That $\Vmod$ is a $\Pos$-category will again simplify the
reasoning: we need not to worry about coherence conditions
and every isomorphism 2-cell in $\Vmod$ is identity.
\end{remark}

\begin{definition}
Given $f:\A\to\B$ in $\Vcat$, the module
$
\xymatrix@1{
f_\diamond:
\A
\ar[0,1]|-{\object @{/}}
&
\B
}
$
given by
$$
f_\diamond(b,a)=\B(b,fa)
$$
is called the {\em graph of $f$\/}.
\end{definition}

\begin{remark}\label{rmk:upper-diamond}
  It is standard to prove that every module $f_\diamond$ is a left
  adjoint in $\Vmod$, having the module
$$
f^\diamond(a,b)=\B(fa,b)
$$
as a right adjoint. The unit $\eta^f$ of $f_\diamond\dashv f^\diamond$
is given by the inequality
$\A(a,a')\leq\B(fa,fa')=\bigvee_b\B(fa,b)\tensor\B(b,fa')$
and the counit $\eps^f$ of $f_\diamond\dashv f^\diamond$
is given by the inequality
$\bigvee_a\B(b,fa)\tensor\B(fa,b')\leq\B(b,b')$.
\end{remark}

It is straightforward to see that the assignment $f\mapsto f_\diamond$
defines a 2-functor, called the {\em graph 2-functor\/}
$$
({-})_\diamond:\Vcat\to\Vmod
$$
On objects (= on categories), the 2-functor $({-})_\diamond$
is the identity. The assignment $f\mapsto f_\diamond$
clearly preserves composition and identities.
Further, if $f\leq g$, then
$f_\diamond\leq g_\diamond$ holds.

We are going to lift 2-functors $T:\Vcat\to\Vcat$
along the graph 2-functor. More precisely:

\begin{definition}\label{def:rel-lift}
A {\em relation lifting\/} of a 2-functor $T:\Vcat\to\Vcat$
is a 2-functor $\ol{T}:\Vmod\to\Vmod$, making the square
$$
\xymatrix{
\Vmod
\ar[0,2]^-{\ol{T}}
&
&
\Vmod
\\
\Vcat
\ar[-1,0]^{({-})_\diamond}
\ar[0,2]_-{T}
&
&
\Vcat
\ar[-1,0]_{({-})_\diamond}
}
$$
commutative.
\end{definition}

We will characterise functors $T$ admitting a relation
lifting in two ways, analogously to the ``classical case'':
\begin{enumerate}[(1)]
\item $T$ will have to distribute over a certain 2-monad on
  $\Vcat$. \\
  In the classical case, $T:\Set\to\Set$ has a relation lifting iff
  $T$ distributes over the powerset monad on $\Set$.
\item
$T$ will have to preserve certain squares in
$\Vcat$. \\
In the classical case, $T:\Set\to\Set$ has a relation lifting iff $T$
preserves weak pullbacks.
\end{enumerate}
While the first characterisation will follow relatively easy
(see the next part, Corollary~\ref{cor:lifting=distributive_law}),
the second characterisation will require
some technical work.

\subsection{Modules as a Kleisli category}\label{sec:Kleisli}

We prove that $({-})_\diamond:\Vcat\to\Vmod$ has a right adjoint. The
resulting 2-monad $(\LL,\yon,\mult)$ on $\Vcat$ will be proved to be
analogous to the powerset monad on $\Set$. Moreover, in the same way
as the category of ``classical'' relations is a Kleisli category for
the powerset monad on $\Set$, the 2-category $\Vmod$ will be the
Kleisli category for the monad on $\Vcat$.  These results are easy
modifications of the results for $\V=\Two$ from~\cite{mrw}.

\begin{proposition}
\label{prop:adj-KZ}
The 2-functor $({-})_\diamond:\Vcat\to\Vmod$
has a right adjoint 2-functor, denoted by
$$
({-})^\dagger:\Vmod\to\Vcat
$$
Moreover, the resulting adjunction
$({-})_\diamond\dashv ({-})^\dagger$ is of KZ type.
That is, for every category $\A$, there is an adjunction
$$
\xymatrix{
(\eta_\A)_\diamond\dashv\eps_{\A_\diamond}:
((\A_\diamond)^\dagger)_\diamond
\ar[0,1]|-{\object @{/}}
&
\A_\diamond
}
$$
where $\eta$ and $\eps$ denote
the unit and counit of $({-})_\diamond\dashv ({-})^\dagger$.
\end{proposition}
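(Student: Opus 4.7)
The plan is to take $\A^\dagger$ to be the presheaf category $[\A^\op,\V]$. Modules $R:\B^\op\tensor\A\to\V$ from $\A$ to $\B$ then correspond under currying to functors $\wh R:\A\to\B^\dagger$ via $\wh R(a)(b)=R(b,a)$; because $\V$ is a poset, both a module inequality $R\le R'$ and a $\V$-natural transformation $\wh R\le\wh{R'}$ unfold to the pointwise inequalities $R(b,a)\le R'(b,a)$, so the bijection lifts to a $\Pos$-isomorphism between the hom-posets $\Vmod(\A,\B)$ and $\Vcat(\A,\B^\dagger)$. Naturality in $\A$ is immediate from the definition of the graph $f_\diamond$. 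Naturality in $\B$ forces the definition of $({-})^\dagger$ on a module $S:\B^\op\tensor\C\to\V$: it must be the functor $S^\dagger:\B^\dagger\to\C^\dagger$ defined by $S^\dagger(\phi)(c)=\bigvee_b\phi(b)\tensor S(c,b)$. Preservation of identities and of module composition then reduces to the same associativity and unit calculations that make $\Vmod$ a $\Pos$-category, all of which are routine at the quantale level.

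Once the adjunction is set up, I would identify the unit and counit explicitly. The unit $\eta_\A:\A\to\A^\dagger$ corresponds under currying to the identity hom-module $\A:\A^\op\tensor\A\to\V$, hence is the Yoneda embedding $a\mapsto\A(-,a)$. The counit $\eps_{\A_\diamond}:\A^\dagger\to\A$, viewed as a module, corresponds to $\id_{\A^\dagger}$ and is therefore the evaluation module $\eps_{\A_\diamond}(a,\phi)=\phi(a)$. The enriched Yoneda lemma then gives
\[
\eps_{\A_\diamond}(a,\phi)\;=\;\phi(a)\;=\;\A^\dagger(\eta_\A(a),\phi)\;=\;(\eta_\A)^\diamond(a,\phi),
\]
so $\eps_{\A_\diamond}$ is literally the upper-diamond graph $(\eta_\A)^\diamond$ of the unit.

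The KZ property is now immediate: by Remark~\ref{rmk:upper-diamond} we always have $(\eta_\A)_\diamond\dashv(\eta_\A)^\diamond$ in $\Vmod$, and the identification above rewrites this as $(\eta_\A)_\diamond\dashv\eps_{\A_\diamond}$. The only real bookkeeping to watch is the variance when currying $\B^\op\tensor\A\to\V$ and the Yoneda collapse of the counit into $(\eta_\A)^\diamond$; everything beyond that is forced by poset enrichment, which suppresses every coherence condition that would otherwise need to be checked for a 2-adjunction.
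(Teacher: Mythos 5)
Your proof is correct and follows essentially the same route as the paper: $\A^\dagger=[\A^\op,\V]$, the adjunction obtained by currying modules into functors to presheaves, the unit identified as the Yoneda embedding, the counit as the evaluation module, and the KZ property read off from $(\yon_\A)_\diamond\dashv(\yon_\A)^\diamond=\ev_{\A_\diamond}$ via the Yoneda lemma. (Only a cosmetic variance slip, of the kind you yourself flag: the module $S$ for which $S^\dagger:\B^\dagger\to\C^\dagger$ and $S^\dagger(\phi)(c)=\bigvee_b\phi(b)\tensor S(c,b)$ should be declared as $S:\C^\op\tensor\B\to\V$, i.e.\ a module from $\B$ to $\C$, to match your formula.)
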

\begin{proof}
Recall that $({-})_\diamond$ is identity on objects, i.e.,
recall that $\A_\diamond=\A$, for every category $\A$.

To define the right adjoint $({-})^\dagger$, put
$\A^\dagger=[\A^\op,\V]$ and define $R^\dagger:[\A^\op,\V]\to
[\B^\op,\V]$ to be the left Kan extension of $a\mapsto R({-},a)$ along
the Yoneda embedding $a\mapsto\A({-},a)$.  In a formula, for
$f:\A^\op\to\V$, we have
$$
R^\dagger(f):
b\mapsto\bigvee_a fa\tensor R(b,a)
$$

The unit of $({-})_\diamond\dashv ({-})^\dagger$
is given by the {\em Yoneda embedding\/}
$\yon_\A:\A\to [\A^\op,\V]$. The counit
is the {\em evaluation relation\/}
$
\xymatrix@1{
\ev_\A:[\A^\op,\V]
\ar[0,1]|-{\object @{/}}
&
\A
}
$
with $\ev_\A(a,f)=f(a)$.

Observe that
$\ev_\A(a,f)=[\A^\op,\V](\yon_\A(a),f)$
by Yoneda Lemma.
Hence
we have the desired adjunction
$$
(\yon_\A)_\diamond\dashv (\yon_\A)^\diamond = \ev_{\A_\diamond}
$$
proving that
$({-})_\diamond\dashv ({-})^\dagger$ is an adjunction of a KZ type.
\end{proof}

\begin{notation}
Denote by
$$
(\LL,\yon,\mult)
$$
the 2-monad on $\Vcat$ of the 2-adjunction
$({-})_\diamond\dashv ({-})^\dagger$.
\end{notation}

\begin{remark}
Since the 2-adjunction
is of KZ type (i.e., an additional adjunction concerning units and counits
holds), the 2-monad $(\LL,\yon,\mult)$ is a {\em KZ doctrine\/},
i.e., for every category $\A$, there is an adjunction
$$
\mult_\A\dashv\yon_{\LL\A}:\LL\A\to\LL\LL\A
$$
in $\Vcat$ that has an isomorphism as its counit. See, e.g.,
\cite{marmolejo} or~\cite{kock}.

Thus $\LL\A$ should be thought of as a cocompletion
of $\A$ w.r.t.\ a certain class of colimits. This is indeed
the case: $\LL\A=[\A^\op,\V]$ are the ``presheaves on $\A$'',
$\yon_\A:\A\to [\A^\op,\V]$ is the Yoneda embedding
and $\mult_\A:\LL\LL\A\to\LL\A$ is the ``union mapping'':
$$
\mult_\A:
W
\mapsto
\Bigl(
a\mapsto\bigvee_w W(w)\tensor w(a)
\Bigr)
$$
By general facts, see~\cite{kelly:book}, $\LL$ is the
KZ doctrine of a
{\em free cocompletion under weighted colimits\/}.
\end{remark}

\begin{remark}
Above, we have again made use of the fact that we enrich in
a quantale. In particular, the KZ doctrine $\LL$
is a genuine 2-functor, i.e., $\LL$ preserves identities
and composition on the nose.

This is not the case for enrichment in a general base
category $\V$; a construction a\-na\-lo\-gous to $\LL$
is available but we end up with a mere {\em pseudofunctor\/},
i.e., such $\LL$ preserves identities and composition only
up to specified isomorphisms that satisfy certain coherence
conditions.
\end{remark}

The following result is now trivial since $\V$-modules
$
\xymatrix@1{
\A
\ar[0,1]|-{\object @{/}}
&
\B
}
$
correspond bijectively to Kleisli morphisms
$
\xymatrix@1{
\A
\to
[\B^\op,\V]
}
$
for the KZ doctrine $(\LL,\yon,\mult)$.

\begin{lemma}\label{lem:kleisli}
The 2-functor ${({-})^\dagger}:\Vmod\to\Vcat$ exhibits $\Vmod$
as the Kleisli 2-category of the KZ doctrine
$(\LL,\yon,\mult)$.
\end{lemma}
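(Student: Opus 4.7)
The plan is to invoke the standard universal property of Kleisli 2-categories: for any 2-adjunction $F\dashv U:\D\to\K$ whose left adjoint is the identity on objects, the canonical comparison 2-functor from the Kleisli 2-category $\K_T$ of the induced 2-monad $T=UF$ to $\D$ is itself identity on objects, and it is fully faithful on both 1-cells and 2-cells by the hom-set bijections of the adjunction. Consequently $\D$ is isomorphic to $\K_T$ as a 2-category, and under this isomorphism $U$ corresponds to the canonical forgetful 2-functor $\K_T \to \K$.

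By Proposition~\ref{prop:adj-KZ}, the 2-adjunction $(-)_\diamond\dashv(-)^\dagger:\Vmod\to\Vcat$ has $(-)_\diamond$ identity on objects and induces precisely the KZ doctrine $(\LL,\yon,\mult)$, so the abstract argument applies directly. The resulting isomorphism is given concretely by currying: a module $R:\B^\op\tensor\A\to\V$ is sent to its transpose $\hat R:\A\to[\B^\op,\V]=\LL\B$ with $\hat R(a)=R(-,a)$. Module transformations $R\le S$ correspond pointwise to Kleisli 2-cells $\hat R\le \hat S$, and the identity module on $\A$ (the hom-functor $\A:\A^\op\tensor\A\to\V$) corresponds, by the Yoneda lemma, to the unit $\yon_\A$.

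The one point that calls for a brief computation is that composition in $\Vmod$ agrees with Kleisli composition. Given $R:\A\to\B$ and $S:\B\to\C$, Kleisli composition of $\hat R$ and $\hat S$ produces $\mult_\C\circ \LL\hat S\circ \hat R:\A\to\LL\C$; using the ``union'' formula for $\mult_\C$ together with the formula for $\LL\hat S$ as the left Kan extension of $\hat S$ along Yoneda, one obtains $(\mult_\C\circ\LL\hat S\circ\hat R)(a)(c)=\bigvee_b S(c,b)\tensor R(b,a)=(S\cdot R)(c,a)$, which is exactly $\widehat{S\cdot R}$. This is the only nontrivial verification; because $\V$ is a poset, no coherence conditions need to be checked separately, so the proof reduces to this bookkeeping together with the invocation of the universal property.
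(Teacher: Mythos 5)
Your proposal is correct and follows essentially the same route as the paper, which simply observes that modules $\xymatrix@1{\A\ar[0,1]|-{\object @{/}}&\B}$ correspond bijectively to Kleisli morphisms $\A\to[\B^\op,\V]$ for $(\LL,\yon,\mult)$ and leaves the rest as trivial. You have merely filled in the bookkeeping (the transpose $\hat R(a)=R(-,a)$, the match of module composition with Kleisli composition via the union formula for $\mult$, and the Yoneda identification of units) that the paper omits.
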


\begin{example}
In case $\V=\Two$, $\Vmod$ can be identified
with the category of complete join-semilattices
and join-preserving maps. This follows from the fact
that $(\LL,\yon,\mult)$ is the 2-monad of complete
join-semilattices on the category of preorders.
\end{example}

Combining the results above,
we obtain the first characterisation of
functors admitting a relation lifting (Definition~\ref{def:rel-lift}):

\begin{corollary}
\label{cor:lifting=distributive_law}
For a 2-functor $T:\Vcat\to\Vcat$, there is a one-to-one correspondence between
\begin{enumerate}[(1)]
\item
Relation liftings $\ol{T}:\Vmod\to\Vmod$ and
\item
Distributive laws $\delta:T\cdot\LL\to\LL\cdot T$.
\end{enumerate}
\end{corollary}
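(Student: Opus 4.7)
The plan is to apply the classical Beck correspondence between liftings of an endofunctor to the Kleisli category of a monad and distributive laws over that monad; the extra work here lies only in making sure this correspondence transfers to the 2-categorical enriched setting.

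Lemma~\ref{lem:kleisli} already identifies $\Vmod$ with the Kleisli 2-category of the KZ doctrine $(\LL,\yon,\mult)$, with the graph 2-functor $({-})_\diamond:\Vcat\to\Vmod$ as the Kleisli inclusion. A relation lifting in the sense of Definition~\ref{def:rel-lift} is therefore precisely a 2-functor $\ol{T}:\Vmod\to\Vmod$ lifting $T$ along this inclusion, so the task reduces to establishing Beck's bijection in the present setting.

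From a relation lifting $\ol{T}$ I would extract $\delta$ as follows. The counit of $({-})_\diamond\dashv({-})^\dagger$ at $\A$ is the evaluation module $\ev_\A:\LL\A\to\A$; applying $\ol{T}$ produces a module $T\LL\A\to T\A$, whose Kleisli transpose is a functor $\delta_\A:T\LL\A\to\LL T\A$. Naturality of $\delta$ in $\A$ follows from the 2-functoriality of $\ol{T}$ together with the naturality of $\ev$; the unit axiom $\delta\cdot T\yon=\yon T$ records that $\ol{T}$ sends hom-modules to hom-modules, while the multiplication axiom $\delta\cdot T\mult=\mult T\cdot\LL\delta\cdot\delta\LL$ records that $\ol{T}$ preserves module composition.

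Conversely, given $\delta$ I would let $\ol{T}$ act as $T$ on objects and send a module $R:\A\to\B$ with Kleisli transpose $r:\A\to\LL\B$ to the module whose Kleisli transpose is $\delta_\B\cdot Tr:T\A\to\LL T\B$. The two distributive-law axioms then translate respectively into $\ol{T}$ preserving identity modules and composition, while the compatibility $\ol{T}\cdot({-})_\diamond=({-})_\diamond\cdot T$ drops out of the unit axiom. These two constructions are mutually inverse by direct inspection. The only step needing more than bookkeeping is checking that $\ol{T}$ also respects 2-cells (module transformations); this comes for free because $\Vmod$ is $\Pos$-enriched, so the 2-cell condition reduces to a monotonicity that is forced by the 1-cell data. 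The main, and essentially only genuine, obstacle is therefore the careful translation of the distributive-law axioms into functoriality of $\ol{T}$ through the Kleisli bijection.
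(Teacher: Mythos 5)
Your proposal is correct and follows exactly the route the paper intends: the corollary is presented as an immediate consequence of Lemma~\ref{lem:kleisli} via the standard Kleisli-lifting/distributive-law correspondence, and your extraction of $\delta_\A$ as the Kleisli transpose of $\ol{T}$ applied to the counit $\ev_\A=(\yon_\A)^\diamond$ is precisely the formula the paper records in Remark~\ref{rem:comp-distributive-law}. The details you supply (transposing $R$ to $r:\A\to\LL\B$, setting $\ol{T}(R)$ to correspond to $\delta_\B\cdot Tr$, and matching the unit and multiplication axioms to preservation of identities, composition, and the commuting square with $({-})_\diamond$) are the standard bookkeeping the paper leaves implicit.
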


Above, the distributive law of a 2-functor over a 2-monad
is defined in the same way as for functors and monads.
In our case, it is a 2-natural transformation
$\delta:T\cdot\LL\to\LL\cdot T$ making the following two
diagrams
$$
\xymatrix{
T\cdot\LL
\ar[0,2]^-{\delta}
&
&
\LL\cdot T
\\
&
T
\ar[-1,-1]^{T\yon}
\ar[-1,1]_{\yon T}
&
}
\quad
\quad
\xymatrix{
T\cdot\LL\cdot\LL
\ar[0,1]^-{\delta \LL}
\ar[1,0]_{T\mult}
&
\LL\cdot T\cdot\LL
\ar[0,1]^-{\LL\delta}
&
\LL\cdot\LL\cdot T
\ar[1,0]^{\mult T}
\\
T\cdot\LL
\ar[0,2]_-{\delta}
&
&
\LL\cdot T
}
$$
commutative. See, e.g., \cite{marmolejo} for details.

\begin{remark}
\label{rem:comp-distributive-law}
It should be noted that, given the relation lifting
$\ol{T}:\Vmod\to\Vmod$ of $T$,
the distributive law $\delta:T\cdot\LL\to\LL\cdot T$ is, at a category $\A$,
given as follows:
\begin{itemize}
\item[]
To give $\delta_\A:T[\A^\op,\V]\to [(T\A)^\op,\V]$
is to give a module
$
\xymatrix@1{
R:T[\A^\op,\V]
\ar[0,1]|-{\object @{/}}
&
T\A
}
$
and we put $R=\ol{T}((\yon_\A)^\diamond)$.
\end{itemize}
This process generalises the classical case,
see, e.g., \cite{kupke+kurz+venema}.
\end{remark}

\subsection{Modules as cospans}

The idea of the second characterisation theorem for
a relation lifting hinges upon a representation of
modules as certain {\em cospans\/} in $\Vcat$, called
(two-sided) {\em codiscrete cofibrations\/}.
Although this notion is a bit technical, it reflects
the basic perception that relations should be represented
as {\em spans\/} that are jointly mono. This will be
seen as follows:
\begin{enumerate}[(1)]
\item
A (two-sided) codiscrete cofibration in $\Vcat$
becomes a (two sided) discrete fibration in
$(\Vcat)^\op$.
\item
The $\Pos$-category $(\Vcat)^\op$ will be proved
to be ``regular'' w.r.t. a certain factorisation
system (see Section~\ref{sec:regularity} below),
and the (two-sided) discrete fibrations will jointly belong
to the ``monomorphism part'' of the factorisation
system on $(\Vcat)^\op$.
\end{enumerate}
We will then proceed in Section~\ref{sec:lifting}
almost exactly as in the classical case: the relation
lifting will be defined by means of graphs of relations,
represented as cospans.

\begin{definition}
\label{def:collage}
A {\em collage\/} of a module
$
\xymatrix@1{
R:\A
\ar[0,1]|-{\object @{/}}
&
\B
}
$
is a cospan
$$
\xymatrixrowsep{1pc}
\xymatrix{
\B
\ar[1,1]_{i_0}
&
&
\A
\ar[1,-1]^{i_1}
\\
&
\Coll{R}
&
}
$$
in $\Vcat$, where the category $\Coll{R}$ is defined as follows:
\begin{enumerate}[(1)]
\item
Objects of $\Coll{R}$ are the disjoint union of objects
of $\A$ and $\B$.
\item
$\Coll{R}(a,a')=\A(a,a')$ in case both $a$ and $a'$ are
in $\A$.
\item
$\Coll{R}(b,b')=\B(b,b')$ in case both $b$ and $b'$ are
in $\B$.
\item
$\Coll{R}(b,a)=R(b,a)$ in case $b$ is in $\B$ and $a$ is
in $\A$.
\item
$\Coll{R}(a,b)=\bot$ in case $a$ is in $\A$ and $b$ is
in $\B$.
\end{enumerate}
The functor $i_0:\B\to\Coll{R}$ is given by the object-assignment
$b\mapsto b$ and $\B(b,b')=\Coll{R}(b,b')$. The functor
$i_1:\A\to\Coll{R}$ is defined analogously.
\end{definition}

\begin{remark}
The collage of a module $R$ is a {\em lax colimit\/}
of $R$ as a 1-cell in $\Vmod$. More precisely, the lax triangle
$$
\xymatrixrowsep{1pc}
\xymatrix{
\B
\ar[1,1]_{(i_0)_\diamond}
\ar@{}[1,2]|{\searrow}
&
&
\A
\ar[1,-1]^{(i_1)_\diamond}
\ar[0,-2]|-{\object @{/}}_{R}
\\
&
\Coll{R}
&
}
$$
in $\Vmod$, has the appropriate universal property.
See~\cite{street:cauchy-characterisation}.
\end{remark}

\begin{example}
Suppose $f:\A\to\B$ is a functor in $\Vcat$. Then the collages
of modules $f_\diamond$ and $f^\diamond$ are computed as follows:
\begin{enumerate}[(1)]
\item
For $f_\diamond(b,a)=\B(b,fa)$, the only nontrivial
hom-objects in the category $\Coll{f_\diamond}$ are
$$
\Coll{f_\diamond}(b,a)=\B(b,fa)
$$
\item
For $f^\diamond(a,b)=\B(fa,b)$, the only nontrivial
hom-objects in the category $\Coll{f^\diamond}$ are
$$
\Coll{f^\diamond}(a,b)=\B(fa,b)
$$
\end{enumerate}
\end{example}

\noindent The proof of the following result is easy, once we employ
the general definition of a two-sided discrete fibration
in a general 2-category, see Remark~\ref{rem:fibrations}.

\begin{lemma}
Every collage $(i_0,\Coll{R},i_1)$ of a module
$
\xymatrix@1{
R:\A
\ar[0,1]|-{\object @{/}}
&
\B
}
$
is a two-sided
codiscrete cofibration in $\Vcat$, i.e., it is a two-sided
discrete fibration in $(\Vcat)^\op$.

Moreover, the equality
$R=(i_0)^\diamond\cdot (i_1)_\diamond$ holds in $\Vmod$.
\end{lemma}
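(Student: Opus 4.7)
The plan is to verify both assertions by direct computation from the pointwise description of $\Coll{R}$ given in Definition~\ref{def:collage}, together with the general 2-categorical definition of a two-sided discrete fibration, applied in $(\Vcat)^\op$.

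The composition formula $R = (i_0)^\diamond \cdot (i_1)_\diamond$ is the concrete content of the lemma and I would tackle it first. Evaluating at $b \in \B$, $a \in \A$ and using that $i_0, i_1$ act as inclusions on underlying objects,
\[
((i_0)^\diamond \cdot (i_1)_\diamond)(b, a) \;=\; \bigvee_{x \in \Coll{R}} \Coll{R}(b, x) \tensor \Coll{R}(x, a).
\]
I would then split the join along the disjoint-union decomposition of the objects of $\Coll{R}$ and substitute the hom-object formulas from Definition~\ref{def:collage}, obtaining
\[
\bigvee_{x \in \A} R(b, x) \tensor \A(x, a) \;\;\vee\;\; \bigvee_{x \in \B} \B(b, x) \tensor R(x, a).
\]
Both joins are bounded above by $R(b, a)$ via the two functoriality inequalities for the $\V$-functor $R : \B^\op \tensor \A \to \V$ (the right $\A$-action and the left $\B$-action on $R$). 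Conversely, taking $x = a$ in the first join produces the term $R(b, a) \tensor \A(a, a) \geq R(b, a) \tensor I = R(b, a)$, supplying the reverse inequality and hence equality in $\V_o$.

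For the codiscrete cofibration claim I would unfold the 2-categorical definition of a two-sided discrete fibration in the 2-category $(\Vcat)^\op$: this dualises to a universal property governing how certain 2-cells in $\Vcat$ factor through the cospan $(i_0, \Coll{R}, i_1)$. The structural feature making the required factorisations unique is precisely the defining clause $\Coll{R}(a, b) = \bot$ for $a \in \A$ and $b \in \B$, which rules out any ``reverse'' 2-cell between the two components and accounts for the ``discreteness'' direction of the axiom. I expect this to be the main obstacle, since it requires translating the abstract axioms into an inspection of the pointwise shape of $\Coll{R}$. However, the universal property of the collage as a lax colimit of $R$ (recorded in the remark preceding the lemma, following Street) captures exactly the axioms required, so the verification reduces to matching them against the case distinction of Definition~\ref{def:collage}.
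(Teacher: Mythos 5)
Your proposal is correct and follows the same route the paper intends: the paper gives no explicit proof of this lemma, merely pointing to the elementary description of two-sided discrete fibrations (Remark~\ref{rem:fibrations}) and declaring the verification easy, which is exactly the unfolding you sketch for the first claim. Your pointwise computation of $(i_0)^\diamond\cdot (i_1)_\diamond$ --- splitting the join over the objects of $\Coll{R}$, bounding each piece above by the two action inequalities of the $\V$-functor $R$ and below by taking $x=a$ together with $I\leq\A(a,a)$ --- is a correct, explicit account of the identity the paper asserts without proof.
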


\begin{remark}
\label{rem:fibrations}
Here is the ``elementary'' description
of two-sided discrete fibrations that
works in any category $\KK$ enriched
in $\Pre$. See~\cite{street:fibrations}
for more details.

A span $(d_0,\E,d_1):\B\to \A$ in $\KK$ is a
{\em two-sided discrete fibration\/},
if the following three conditions are satisfied:
\begin{enumerate}[(1)]
\item For each $m:\K\to \E$, $a,a':\K\to \A$, $b:\K\to \B$ and
      $\alpha:a'\to a$ such that triangles
      $$
      \xymatrix{
      \K
      \ar[0,1]^-{m}
      \ar[1,1]_{a}
      &
      \E
      \ar[1,0]^{d_0}
      &
      &
      \K
      \ar[0,1]^-{m}
      \ar[1,1]_{b}
      &
      \E
      \ar[1,0]^{d_1}
      \\
      &
      \A
      &
      &
      &
      \B
      }
      $$
      commute, there is a unique $\bar{m}:\K\to \E$ and a unique
      $d_0^*(\alpha):\bar{m}\to m$ such that
      $$
      \xymatrix{
      \K
      \ar[0,1]^-{\bar{m}}
      \ar[1,1]_{a'}
      &
      \E
      \ar[1,0]^{d_0}
      &
      &
      \K
      \ar[0,1]^-{\bar{m}}
      \ar[1,1]_{b}
      &
      \E
      \ar[1,0]^{d_1}
      \\
      &
      \A
      &
      &
      &
      \B
      }
      $$
      and
      $$
      \xymatrix{
      \K
      \ar @<1.3ex> [0,1]^-{\bar{m}}
      \ar @<-1.3ex> [0,1]_-{m}
      \ar @{} [0,1]|-{\downarrow d_0^*(\alpha)}
      &
      \E
      \ar[0,1]^-{d_0}
      &
      \A
      &
      =
      &
      \K
      \ar @<1.3ex> [0,1]^-{a'}
      \ar @<-1.3ex> [0,1]_-{a}
      \ar @{} [0,1]|-{\downarrow\alpha}
      &
      \A
      \\
      \K
      \ar @<1.3ex> [0,1]^-{\bar{m}}
      \ar @<-1.3ex> [0,1]_-{m}
      \ar @{} [0,1]|-{\downarrow d_0^*(\alpha)}
      &
      \E
      \ar[0,1]^-{d_1}
      &
      \B
      &
      =
      &
      \K
      \ar[0,1]^-{b}
      &
      \B
      }
      $$
      commute. The 2-cell $d_0^*(\alpha)$ is called the {\em cartesian lift\/}
      of $\alpha$.
\item For each $m:\K\to \E$, $a:\K\to \A$, $b,b':\K\to \B$ and
      $\beta:b\to b'$ such that triangles
      $$
      \xymatrix{
      \K
      \ar[0,1]^-{m}
      \ar[1,1]_{a}
      &
      \E
      \ar[1,0]^{d_0}
      &
      &
      \K
      \ar[0,1]^-{m}
      \ar[1,1]_{b}
      &
      \E
      \ar[1,0]^{d_1}
      \\
      &
      \A
      &
      &
      &
      \B
      }
      $$
      commute, there is a unique $\bar{m}:\K\to \E$ and a unique
      $d_1^*(\beta):m\to \bar{m}$ such that
      $$
      \xymatrix{
      \K
      \ar[0,1]^-{\bar{m}}
      \ar[1,1]_{a}
      &
      \E
      \ar[1,0]^{d_0}
      &
      &
      \K
      \ar[0,1]^-{\bar{m}}
      \ar[1,1]_{b'}
      &
      \E
      \ar[1,0]^{d_1}
      \\
      &
      \A
      &
      &
      &
      \B
      }
      $$
      and
      $$
      \xymatrix{
      \K
      \ar @<1.3ex> [0,1]^-{m}
      \ar @<-1.3ex> [0,1]_-{\bar{m}}
      \ar @{} [0,1]|-{\downarrow d_1^*(\beta)}
      &
      \E
      \ar[0,1]^-{d_0}
      &
      \A
      &
      =
      &
      \K
      \ar[0,1]^-{a}
      &
      \A
      \\
      \K
      \ar @<1.3ex> [0,1]^-{m}
      \ar @<-1.3ex> [0,1]_-{\bar{m}}
      \ar @{} [0,1]|-{\downarrow d_1^*(\beta)}
      &
      \E
      \ar[0,1]^-{d_1}
      &
      \B
      &
      =
      &
      \K
      \ar @<1.3ex> [0,1]^-{b}
      \ar @<-1.3ex> [0,1]_-{b'}
      \ar @{} [0,1]|-{\downarrow\beta}
      &
      \B
      }
      $$
      commute. The 2-cell $d_1^*(\beta)$ is called the {\em opcartesian lift\/}
      of $\beta$.
\item Given any $\sigma:m\to m':K\to E$, then the composite
      $d_0^*(d_0\sigma)\cdot d_1^*(d_1\sigma)$ is defined and it is equal to $\sigma$.
\end{enumerate}
\end{remark}

\begin{remark}
We indicated that every module $R$ can be represented as a two-sided codiscrete
cofibration, namely the collage of $R$. Conversely, every two-sided codiscrete
cofibration $(d_0,\E,d_1)$ in $\Vcat$ gives rise to a module
$(d_0)^\diamond\cdot(d_1)_\diamond$. That these two representations are
equivalent is proved in~\cite[Corollary~6.16]{street:fibrations}.

In what follows we will not distinguish between modules and the
corresponding two-sided codiscrete cofibrations.
\end{remark}

\section{``Regularity'' of $(\Vcat)^\op$}
\label{sec:regularity}
\noindent In this section we introduce a factorisation system
$$
(\Epi,\Mono)
$$
on the 2-category $\Vcat$ such that $(\Vcat)^\op$
(equipped with the factorisation system $(\Mono,\Epi)$)
will become {\em regular as a 2-category\/}.
Composing two-sided discrete fibrations in $(\Vcat)^\op$
(= collages in $\Vcat$) will then become very easy:
one would take the $\Mono$-quotient of the pullback
in $(\Vcat)^\op$ of two spans
exactly as in sets. Since this will happen in $(\Vcat)^\op$,
we will, in reality, take a subobject of a pushout of two
collages.

In fact, our approach to codiscrete cofibrations and their
composition then mimics the way how relations are treated and
composed in an ordinary (i.e., non-enriched) regular category, see,
e.g., \cite{ckw:wpb}.

The factorisation system is going to be introduced
in the following manner:
\begin{enumerate}[(1)]
\item
We choose a class $\mathcal{C}$ that will consist
of functors in $\Vcat$ of the form
$[i_0,i_1]:\B+\A\to\Coll{R}$, where $(i_0,\Coll{R},i_1)$
is the collage of a module
$
\xymatrix{
R:
\A
\ar[0,1]|-{\object @{/}}
&
\B
}
$.
\item
We identify the class $\Mono={\mathcal{C}}^\downarrow$
of functors in $\Vcat$ that are {\em orthogonal\/}
to all functors in $\mathcal{C}$ as exactly the fully
faithful functors.
\item
By general facts about factorisation systems, if we
put $\Epi=\Mono^\uparrow$ to be the class of functors
orthogonal to $\Mono$, then $(\Epi,\Mono)$ is a factorisation
system in $\Vcat$.
\end{enumerate}
The so defined factorisation system $(\Epi,\Mono)$
will turn $\Vcat$ into a ``coregular 2-category'': the class $\Mono$
will be proved to be stable under pushouts and tensoring with the
poset $\Two$.
The proof of $(\Epi,\Mono)$-coregularity
of $\Vcat$ can also be found
in~\cite{nlab}, we include the proofs
here for the sake of self-containedness.
Although the notion of ``(co)regularity'' for general
2-categories is somewhat intricate,
see~\cite{street:stacks,carboni+johnson+street+verity},
we will benefit again from the fact that $\Vcat$ is enriched in
preorders.

\begin{definition}
Say that $e:\A\to\B$ is {\em orthogonal\/} to
$j:\C\to\D$, provided that the following square
$$
\xymatrix{
\Vcat(\B,\C)
\ar[0,2]^-{\Vcat(\B,j)}
\ar[1,0]_{\Vcat(e,\C)}
&
&
\Vcat(\B,\D)
\ar[1,0]^{\Vcat(e,\D)}
\\
\Vcat(\A,\C)
\ar[0,2]_-{\Vcat(\A,j)}
&
&
\Vcat(\A,\D)
}
$$
is a pullback in the category of preorders.
\end{definition}

\begin{remark}
It should help to think of $e$ as being ``epi''
and $j$ as being ``mono''. The above pullback then
states the unique diagonalisation property w.r.t. both
1-cells and 2-cells.

We will freely adopt the usual notation and
terminology: given
a class $\mathcal{C}$ of ``epis'', we denote
by $\mathcal{C}^\downarrow$ the class of ``monos''
orthogonal to every element of $\mathcal{C}$.
Dually, we denote by $\mathcal{C}^\uparrow$
the class of all ``epis'', orthogonal to every
``mono'' in $\mathcal{C}$.
\end{remark}

\begin{definition}
A functor $f:\A\to\B$ is called {\em fully faithful\/}
if we have the equality
$\A(a,a')=\B(fa,fa')$ for all $a$, $a'$ in $\A$.
\end{definition}

\begin{example}
In case $\V=\Two$, the fully faithful functors are
exactly the {\em order-embeddings\/}.
In case $\V=[0;1]$, the fully faithful functors
are exactly the {\em isometries\/}.
\end{example}

\begin{lemma}
\label{lem:ff}
For a functor $f:\A\to\B$, the following are equivalent:
\begin{enumerate}[\em(1)]
\item
$f$ is fully faithful.
\item
The unit $\eta^f$ of $f_\diamond\dashv f^\diamond$
is an isomorphism.
\end{enumerate}
Moreover, the conditions above imply
\begin{itemize}
\item[]
$f$ is {\em representably fully faithful\/}, i.e.,
the monotone map
$$
\Vcat(\X,f):\Vcat(\X,\A)\to\Vcat(\X,\B)
$$
is fully faithful (= reflects preorders), for every $\X$.
\end{itemize}
\end{lemma}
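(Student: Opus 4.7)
The plan is a direct unfolding of definitions, made painless by the fact that $\Vmod$ is $\Pos$-enriched, so that any isomorphism 2-cell collapses to a componentwise equality.

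For $(1)\Leftrightarrow(2)$, I would first compute the composite module $f^\diamond\cdot f_\diamond:\A\to\A$: by the definition of module composition, its value at $(a',a)$ is $\bigvee_b\B(fa',b)\tensor\B(b,fa)$, which collapses to $\B(fa',fa)$ by the usual Yoneda-style identity. By Remark~\ref{rmk:upper-diamond}, the component of $\eta^f$ at $(a',a)$ is therefore exactly the inequality $\A(a',a)\le\B(fa',fa)$ witnessing functoriality of $f$. Since $\Vmod$ is $\Pos$-enriched, $\eta^f$ is an isomorphism precisely when these inequalities are equalities for all $a,a'$, which is the definition of fully faithful.

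For the final clause, assume $f$ is fully faithful and take any $\X$ and $g,h:\X\to\A$. Using Remark~\ref{rmk:V-nat}, the relation $fg\le fh$ in $\Vcat(\X,\B)$ unfolds to $I\le\bigwedge_x\B(fgx,fhx)$, which by full faithfulness equals $I\le\bigwedge_x\A(gx,hx)$, i.e., $g\le h$. Preservation of the preorder by $\Vcat(\X,f)$ is automatic from 2-functoriality of $\Vcat(\X,-)$, so $\Vcat(\X,f)$ both preserves and reflects the preorder, which is exactly full faithfulness in the $\Pos$-enriched sense.

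I do not anticipate a real obstacle. The only conceptual point worth flagging is the passage ``isomorphism in $\Vmod$ iff componentwise equality,'' without which one would have to invoke the triangle identities; with it, the lemma is essentially a restatement of definitions, and even the Yoneda reduction is immediate because we enrich in a quantale.
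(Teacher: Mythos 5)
Your proof is correct and follows essentially the same route as the paper: the equivalence $(1)\Leftrightarrow(2)$ is read off from the fact that the component of $\eta^f$ at $(a,a')$ is the functoriality inequality $\A(a,a')\le\B(fa,fa')$ (after collapsing $\bigvee_b\B(fa,b)\tensor\B(b,fa')$ to $\B(fa,fa')$), together with the observation that an invertible 2-cell in the $\Pos$-enriched $\Vmod$ forces componentwise equality; and the representable statement follows from $\bigwedge_x\A(gx,hx)=\bigwedge_x\B(fgx,fhx)$. This matches the paper's argument step for step, with your version only slightly more explicit about the Yoneda reduction.
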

\begin{proof}
  The equivalence of the first two items is obvious: the unit $\eta^f$
  is manifested by the inequality
  $\A(a,a')\leq\B(fa,fa')=\bigvee_b\B(b,fa')\tensor\B(fa,b)$.

  To prove the implication, consider a pair $h,k:\X\to\A$.  Then
$$
\Vcat(\X,\A)(h,k)=
\bigwedge_x \A(hx,kx)=
\bigwedge_x \B(fhx,fkx)=
\Vcat(\X,\B)(fh,fk)
$$
proves that $f$ is representably fully faithful.

Note that the converse would only hold if we considered $\Vcat$ to be
a $\V$-category rather than a $\Pre$-category.
\end{proof}

\begin{lemma}
  Define $\mathcal{C}$ to be the class of functors in $\Vcat$ of the
  form $[i_0,i_1]:\B+\A\to\Coll{R}$, where $(i_0,\Coll{R},i_1)$ is the
  collage of a module
  $ \xymatrix@1{ R: \A \ar[0,1]|-{\object @{/}} & \B } $.
  Define $\Mono$ to be the class of all functors orthogonal
  to $\mathcal{C}$, i.e., put $\Mono={\mathcal{C}}^\downarrow$.
  Then $\Mono$ consists exactly of the fully faithful
  functors.
\end{lemma}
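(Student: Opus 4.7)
The plan is to verify both inclusions separately, using the very explicit description of $\Coll{R}$ from Definition~\ref{def:collage}.

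For the easy direction, suppose $j:\C\to\D$ is fully faithful and consider any $[i_0,i_1]:\B+\A\to\Coll{R}$ in $\mathcal{C}$. Given a competing pair $(f:\B\to\C,\,g:\A\to\C)$ together with $h:\Coll{R}\to\D$ satisfying $jf=h\cdot i_0$ and $jg=h\cdot i_1$, I define the diagonal filler $\bar h:\Coll{R}\to\C$ on objects by $\bar h(b)=fb$ on the $\B$-part and $\bar h(a)=ga$ on the $\A$-part. Inequalities on homs internal to $\B$ or $\A$ hold because $f,g$ are functors; the only remaining case is the mixed hom $\Coll{R}(b,a)=R(b,a)$, where applying $h$ gives $R(b,a)\leq\D(jfb,jga)$, and this last object equals $\C(fb,ga)$ by full faithfulness of $j$. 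The required triangles commute by construction, and uniqueness of $\bar h$ (both at the 1-cell and 2-cell level) is automatic: $\Vcat$ is $\Pre$-enriched, and the objects of $\Coll{R}$ are already covered by those of $\B+\A$.

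For the converse, suppose $j$ is orthogonal to every member of $\mathcal{C}$ and fix $c,c'\in\C$; we need $\D(jc,jc')\leq\C(c,c')$, since the opposite inequality is just functoriality of $j$. The idea is to use the simplest possible probe module. Let $\One$ denote the one-object category with $\One(\ast,\ast)=I$, set $v=\D(jc,jc')$, and define a module $R_v:\One\to\One$ by $R_v(\ast,\ast)=v$. Then $\Coll{R_v}$ has two objects $\ast_\B,\ast_\A$ with the only nontrivial ``mixed'' hom equal to $v$. The pair $(c,c'):\One+\One\to\C$, together with the functor $h:\Coll{R_v}\to\D$ sending $\ast_\B\mapsto jc$ and $\ast_\A\mapsto jc'$, forms a commutative outer square; $h$ is well-defined precisely because the hom inequality $v\leq\D(jc,jc')$ holds (with equality). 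Orthogonality supplies a diagonal $\bar h:\Coll{R_v}\to\C$, whose action on the mixed hom yields $v\leq\C(\bar h(\ast_\B),\bar h(\ast_\A))=\C(c,c')$, which is the inequality we wanted.

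The only delicate point is setting up the probe module $R_v$ so that the outer square really commutes and $h$ really is a functor; once $\Coll{R_v}$ is written down, this reduces to a tautology. Orthogonality then does all the work, and the 2-cell clauses of the pullback condition play no role thanks to $\Pre$-enrichment.
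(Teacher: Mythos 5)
Your proof is correct and follows essentially the same route as the paper's: the forward direction exploits that $[i_0,i_1]$ is bijective on objects so the diagonal is forced and full faithfulness of $j$ supplies the hom inequalities, and the converse uses exactly the same probe module on the unit category with value $\D(jc,jc')$ to extract $\D(jc,jc')\leq\C(c,c')$ from orthogonality. Your added remarks on the mixed hom-object case and on why the 2-cell part of the pullback condition is automatic are correct elaborations of points the paper leaves implicit.
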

\begin{proof}
  Suppose $j:\C\to\D$ is a fully faithful functor.  Consider
  $[i_0,i_1]:\B+\A\to\Coll{R}$ in $\mathcal{C}$.  Then $[i_0,i_1]$ is
  bijective on objects. To prove the existence of unique
  diagonalisation, that is, to prove that
$$
\xymatrix{
\Vcat(\Coll{R},\C)
\ar[0,2]^-{\Vcat(\Coll{R},j)}
\ar[1,0]_{\Vcat([i_0,i_1],\C)}
&
&
\Vcat(\Coll{R},\D)
\ar[1,0]^{\Vcat([i_0,i_1],\D)}
\\
\Vcat(\B+\A,\C)
\ar[0,2]_-{\Vcat(\B+\A,j)}
&
&
\Vcat(\B+\A,\D)
}
$$
is a pullback in $\Pre$, consider any pair
$u:\B+\A\to\C$, $v:\Coll{R}\to\D$
such that
$$
\xymatrix{
\B+\A
\ar[0,2]^-{[i_o,i_1]}
\ar[1,0]_{u}
&
&
\Coll{R}
\ar[1,0]^{v}
\\
\C
\ar[0,2]_-{j}
&
&
\D
}
$$
commutes in $\Vcat$. Since $[i_0,i_1]$ is bijective
on objects, one can define an object assignment
of a functor $d:\Coll{R}\to\C$ as the object assignment
of $u$. Since $j$ is fully faithful, $d$ so defined
will be a functor.

Conversely, suppose $f:\A\to\B$ is orthogonal to every
functor in $\mathcal{C}$.
Consider any $a$, $a'$ in $\A$ and define
a module $R$ from $\kat{I}$ to $\kat{I}$
(where $\kat{I}$ is the unit category
on one object $*$ with $\kat{I}(*,*)=I$)
with $R(*,*)=\B(fa,fa')$. Then the square
$$
\xymatrix{
\kat{I}+\kat{I}
\ar[0,1]^-{[i_0,i_1]}
\ar[1,0]_{[a,a']}
&
\Coll{R}
\ar[1,0]^{v}
\\
\A
\ar[0,1]_-{f}
&
\B
}
$$
commutes, where $v:\Coll{R}\to\B$ is given by
$b\mapsto b$ for $b$ in $\B$, $x\mapsto fx$
for $x$ in $\A$.

Hence there is a unique diagonal $d:\Coll{R}\to\A$,
witnessing the inequality $\B(fa,fa')\leq\A(a,a')$.
Hence $\A(a,a')=\B(fa,fa')$ which we were supposed to
prove.
\end{proof}

\begin{remark}
Functors that are orthogonal to every
fully faithful functor in $\Vcat$ are called
{\em eso\/} functors (where eso stands for
{\em essentially surjective on objects\/}).

Their defining property is the following one:
$e:\A\to\B$ is an {\em eso\/} functor in $\Vcat$
iff the following diagram
$$
\xymatrix{
\Vcat(\D,\A)
\ar[1,0]_{\Vcat(\D,e)}
\ar[0,2]^-{\Vcat(j,\A)}
&
&
\Vcat(\C,\A)
\ar[1,0]^{\Vcat(\C,e)}
\\
\Vcat(\D,\B)
\ar[0,2]_-{\Vcat(j,\B)}
&
&
\Vcat(\C,\B)
}
$$
is a pullback in $\Pre$, for every fully faithful $j:\C\to\D$.

Observe that any functor $e:\A\to\B$ in $\Vcat$ with the
property that $a\mapsto ea$ is a bijection is an eso functor.
\end{remark}

By putting $\Epi=\Mono^\uparrow$ we define
a factorisation system on $\Vcat$ as long as we know
that factorisations exist. Their existence is
established easily.

\begin{lemma}
Any functor $f:\A\to\B$ can be written as a composite
$j\cdot e$ where $j$ is fully faithful and $e$ surjective
on objects.
\end{lemma}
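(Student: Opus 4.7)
The plan is to construct the standard image factorisation. Given $f\colon\A\to\B$, I would define an intermediate $\V$-category $\kat{I}_f$ as the full subcategory of $\B$ spanned by the image of $f$ on objects. More precisely, the objects of $\kat{I}_f$ are the elements of the set $\{fa \mid a \in \A\}$, and for $b,b'$ in $\kat{I}_f$ set $\kat{I}_f(b,b') = \B(b,b')$. The identity and composition inequalities for $\kat{I}_f$ are inherited from $\B$, so $\kat{I}_f$ is a $\V$-category.

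Next, I would define $e\colon\A\to\kat{I}_f$ by $ea = fa$ on objects. On hom-objects, the required inequality $\A(a,a') \leq \kat{I}_f(ea,ea') = \B(fa,fa')$ is exactly the action of $f$ on hom-objects, so $e$ is a well-defined $\V$-functor. By construction, every object of $\kat{I}_f$ is of the form $fa = ea$ for some $a$ in $\A$, so $e$ is surjective on objects.

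Then I would define $j\colon\kat{I}_f\to\B$ as the inclusion: $jb = b$ on objects, with the identity inequality $\kat{I}_f(b,b') \leq \B(jb,jb')$ being trivially an equality by the definition of $\kat{I}_f$. This makes $j$ fully faithful in the sense of the definition preceding this lemma. Finally, one checks directly that $j\cdot e = f$: on objects, $(j\cdot e)(a) = j(fa) = fa$, and on hom-objects the composite inequality agrees with that of $f$.

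There is no real obstacle here; the construction is completely routine and relies only on the definitions of $\V$-functor and fully faithfulness, without any appeal to colimits or special properties of $\V$ beyond what has already been set up.
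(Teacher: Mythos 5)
Your factorisation is correct as a proof of the literal statement, but it is a genuinely different construction from the paper's, and the difference is not cosmetic. The paper keeps the objects of $\A$ and only changes the hom-objects: its intermediate category $\C$ has the \emph{same} objects as $\A$ with $\C(a,a')=\B(fa,fa')$, so that $e$ is the identity (in particular, bijective) on objects and $j:a\mapsto fa$ is fully faithful. Your construction instead takes the full image inside $\B$, so whenever $f$ identifies objects ($fa=fa'$ with $a\neq a'$) your $e$ is surjective but not injective on objects. This matters for the purpose the lemma serves: it is invoked to show that $(\Epi,\Mono)$-factorisations exist, where $\Epi=\Mono^\uparrow$, and the remark immediately preceding the lemma only guarantees that \emph{bijective}-on-objects functors are eso. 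Merely surjective-on-objects functors need not be orthogonal to the fully faithful functors, because fully faithful $\V$-functors need not be injective on objects. For instance, take $\A$ to be the discrete category on two objects and $f:\A\to\kat{I}$ the unique functor; your $e:\A\to\kat{I}$ fails to diagonalise against the fully faithful functor $\X\to\kat{I}$, where $\X$ is the two-object category with all hom-objects equal to $I$, since the two objects of $\A$ are sent to distinct objects of $\X$ but to the single object of $\kat{I}$. So while your $j\cdot e$ satisfies the stated conclusion, it need not be the $(\Epi,\Mono)$-factorisation of $f$; the paper's bijective-on-objects version is the one that actually feeds into the factorisation system of this section.
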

\begin{proof}
Define $\C$ to have the same objects as $\A$ and
put $\C(a,a')=\B(fa,fa')$. Clearly, $\C$ is a category:
$I\leq\C(a,a)=\B(fa,fa)$ holds, since $f$ is a functor
and $\C(a,a')\tensor\C(a',a'')\leq\C(a,a'')$ is the
inequality $\B(fa,fa')\tensor\B(fa',fa'')\leq\B(fa,fa'')$
that holds since $f$ is a functor.

Define $j:\C\to\B$ by the object-assignment $a\mapsto fa$
and $\C(a,a')=\B(fa,fa)$. Then $j$ is fully faithful.

Define $e:\A\to\C$ by the object-assignment $a\mapsto a$
and $\A(a,a')\leq\C(a,a')=\B(fa,fa')$ (which holds since
$f$ is a functor).

Clearly: $f=j\cdot e$.
\end{proof}

The next two results conclude ``coregularity'' of our
factorisation system.

\begin{lemma}
\label{lem:stability-of-pushouts}
The class of fully faithful functors is stable under
pushouts.
\end{lemma}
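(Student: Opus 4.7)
The plan is to work with an explicit description of the pushout and then verify fully-faithfulness by a direct hom-object computation. Let the pushout square be
$$
\xymatrix{
\A \ar[r]^{f} \ar[d]_{j} & \B \ar[d]^{j'} \\
\C \ar[r]_{f'} & \P
}
$$
with $j$ fully faithful; the goal is to show $\P(j'b, j'b') = \B(b, b')$ for all objects $b, b'$ of $\B$.

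First I would give an explicit description of $\P$. Its objects form the quotient of $\B_o \sqcup \C_o$ by the equivalence relation generated by $fa \sim ja$ for $a$ an object of $\A$, and for two equivalence classes $[x]$, $[y]$ the hom-object $\P([x],[y])$ is the join over all formal zigzags $x = x_0, x_1, \ldots, x_n = y$ of the tensor of basic hom-values, where each step is a morphism in $\B$ or in $\C$ and consecutive steps compose through the identifications $fa \equiv ja$. Checking the universal property is routine.

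With this description in hand, the inequality $\B(b,b') \leq \P(j'b, j'b')$ is immediate (zigzag of length one, using that $j'$ is a functor). For the reverse inequality I would argue that every zigzag from $j'b$ to $j'b'$ can be ``reflected'' into $\B$: any intermediate identification point has the form $fa \equiv ja$ for some $a \in \A$, so each $\C$-step $\C(ja, ja')$ equals $\A(a,a')$ by full-faithfulness of $j$, which in turn is bounded above by $\B(fa, fa')$ by functoriality of $f$. Substituting these upper bounds, every zigzag collapses to a chain of $\B$-morphisms from $b$ to $b'$ whose tensor is bounded by $\B(b,b')$ via composition in $\B$. Taking the join over all zigzags then yields $\P(j'b, j'b') \leq \B(b,b')$.

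The hard part will be the bookkeeping around the explicit description of the pushout hom-objects and the boundary cases: zigzags that start or end on the $\C$-side (which is possible precisely when $b$ or $b'$ is of the form $fa_0$) and the careful telescoping of the resulting $\B$-chain after reflection. The conceptual content is, however, transparent: full-faithfulness of $j$ together with functoriality of $f$ ensures every ``$\C$-excursion'' inside a zigzag can be replaced by a $\B$-morphism of at least the same value, so crossing into $\C$ gains nothing on the hom-side.
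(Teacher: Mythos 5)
Your proposal is correct and takes essentially the same route as the paper: the paper simply quotes the closed-form description of the pushout hom-objects from Carboni--Johnson--Street--Verity (their 1.15), in which every cross-excursion appears as $\B(b,ja)\tensor\C(fa,fa')\tensor\B(ja',b')$ and one reads off $\P(c,c')=\C(c,c')$ directly, whereas you re-derive the chain-collapsing argument underlying that formula (full faithfulness of the pushed-out leg plus functoriality of the other leg absorbs every excursion). One small caution: since $\V$ is a quantale there are no formal inverses, so your ``zigzags'' must be understood as \emph{directed} composable chains passing through the identification points --- as your phrase ``consecutive steps compose'' already indicates --- and a maximal excursion may contain intermediate objects that are not identification points, so you must first compose it down inside $\C$ before reflecting it into $\B$.
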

\begin{proof}
Suppose that
$$
\xymatrix{
\A
\ar[0,1]^-{j}
\ar[1,0]_{f}
&
\B
\ar[1,0]^{i_1}
\\
\C
\ar[0,1]_-{i_0}
&
\P
}
$$
is a pushout in $\Vcat$ and suppose $j$ is fully
faithful. We need to prove that $i_0$ is fully
faithful.

Recall from~1.15 in~\cite{carboni+johnson+street+verity}
that the objects of $\P$ are a disjoint union
of the objects of $\C$ and of those objects of $\B$ that
are not in $\A$. The hom-objects are defined as follows:
\begin{eqnarray*}
\P(c,b)
&=&
\bigvee_a\C(c,fa)\tensor\B(ja,b)
\\
\P(b,c)
&=&
\bigvee_a \B(b,ja)\tensor\C(fa,c)
\\
\P(b,b')
&=&
\bigvee_{a,a'} \B(b,ja)\tensor\C(fa,fa')\tensor\B(ja',b')
\\
\P(c,c')
&=&
\C(c,c')
\end{eqnarray*}
where $b$, $b'$ are in $\B$ but not in $\A$, and
$c$, $c'$ are in $\C$. The composition and identities in $\P$ are given
by those of $\C$ and $\B$.

The functor $i_0$ is given by $c\mapsto c$, the functor
$i_1$ by $b\mapsto b$ if $b$ is not in $\A$
and by $a\mapsto fa$ otherwise.

Clearly, $i_0$ is fully faithful.
\end{proof}

Recall that $\Two$ is the two-element chain, i.e., a preorder.
Recall also that $\Vcat$ is a 2-category and, in fact, $\Vcat$
is enriched in preorders. Thus it makes a perfect sense to define
a {\em tensor\/} by $\Two$ for any small category $\A$. This amounts
to give a category $\Two*\A$ together with an isomorphism
$$
\Vcat(\Two*\A,\B)
\cong
\Pre(\Two,[\A,\B])
$$
natural in $\B$.

An explicit description of $\Two*\A$ is as follows:
\begin{enumerate}[(1)]
\item
Objects are pairs $(0,a)$, $(1,a)$, where $a$ is in $\A$.
\item
The hom-objects are defined in the following manner:
\begin{eqnarray*}
\Two*\A((0,a),(0,a'))
&=&
\A(a,a')
\\
\Two*\A((1,a),(1,a'))
&=&
\A(a,a')
\\
\Two*\A((0,a),(1,a'))
&=&
I
\\
\Two*\A((1,a),(0,a'))
&=&
\bot
\end{eqnarray*}
\end{enumerate}

Hence to give a functor $f:\Two*\A\to\B$ amounts to giving
a pair $f_0:\A\to\B$, $f_1:\A\to\B$ of functors with $f_0\leq f_1$.

\begin{lemma}
Fully faithful functors are closed under tensoring with $\Two$.
\end{lemma}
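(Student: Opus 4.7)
The plan is a direct verification from the explicit presentation of the copower $\Two*\A$ stated just before the lemma. Writing $g=\Two*f:\Two*\A\to\Two*\B$ for the functor induced by $f:\A\to\B$ via the universal property of the copower (applied to the pair $i_0 f\leq i_1 f:\A\to\Two*\B$, where $i_0,i_1:\B\to\Two*\B$ are the coprojections $b\mapsto(0,b)$ and $b\mapsto(1,b)$), we get that $g$ acts by $(i,a)\mapsto(i,fa)$ on objects. Hence the only thing to verify is
\[
(\Two*\A)\bigl((i,a),(j,a')\bigr)=(\Two*\B)\bigl((i,fa),(j,fa')\bigr)
\]
for all $i,j\in\{0,1\}$ and all $a,a'$ in $\A$.

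I would carry this out by a short case analysis on $(i,j)$. When $i=j$, both sides reduce by the explicit presentation to $\A(a,a')$ and $\B(fa,fa')$, and these agree since $f$ is fully faithful. When $(i,j)=(0,1)$, both sides equal the constant hom $I$, independent of the chosen objects, and when $(i,j)=(1,0)$, both sides equal $\bot$. This exhausts all four cases and shows that $g$ is fully faithful.

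I do not foresee a genuine obstacle: the only slightly substantive point is the universality step that produces a uniquely defined $g$ with the stated object map, after which the argument is a transparent bookkeeping exercise exploiting that $\Two$ contributes identical ``multipliers'' to the hom-objects of both $\Two*\A$ and $\Two*\B$.
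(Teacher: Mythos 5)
Your proof is correct and follows the same route as the paper: the paper likewise observes that full faithfulness of $\Two*f$ follows immediately from the explicit description of the hom-objects of $\Two*\A$, which your four-case check merely spells out in detail.
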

\begin{proof}
Suppose $j:\A\to\B$ is fully faithful and consider
$\Two*j:\Two*\A\to\Two*\B$. That $j$ is fully faithful
follows immediately from the description of tensoring with
$\Two$.
\end{proof}

\section{The relation lifting via cospans}
\label{sec:lifting}

The condition on $T:\Vcat\to\Vcat$ to admit
a relation lifting $\ol{T}:\Vmod\to\Vmod$ using cospans
is related to two facts:
\begin{enumerate}[(1)]
\item
Collages compose by taking $\Mono$-subobjects of pushouts.
\item
One can define an assignment $\ol{T}$ on arrows of $\Vmod$
and prove that $\ol{T}$ respects composition of collages,
if $T$ ``behaves nicely'' w.r.t. certain lax commutative squares.
This is Proposition~\ref{prop:BCC=>lifting} below.
\end{enumerate}
In fact, as we will see in Theorem~\ref{th:universal_property}
below, the graph 2-functor $({-})_\diamond:\Vcat\to\Vmod$
enjoys a certain universal property that will allow us
to isolate the necessary and sufficient condition on $T$
to admit a relation lifting. This is our main result,
see Corollary~\ref{cor:ext-thm} below. The nature of the universal
property of the graph 2-functor goes back to~\cite{hermida}.

\subsection*{The composition of collages}
The composition of collages is performed in the same way
as in any coregular category. Namely, given two-sided
codiscrete cofibrations
$$
\xymatrix{
\C
\ar[1,1]_{i^\E_0}
&
&
\B
\ar[1,-1]^{i^\E_1}
\\
&
\E
&
}
\quad
\quad
\xymatrix{
\B
\ar[1,1]_{i^\F_0}
&
&
\A
\ar[1,-1]^{i^\F_1}
\\
&
\F
&
}
$$
in $\Vcat$, form the diagram
$$
\xymatrix{
\C
\ar[1,1]_{i^\E_0}
&
&
\B
\ar[1,-1]^{i^\E_1}
\ar[1,1]_{i^\F_0}
&
&
\A
\ar[1,-1]^{i^\F_1}
\\
&
\E
\ar[1,1]_{p_0}
&
&
\F
\ar[1,-1]^{p_1}
&
\\
&
&
\P
&
&
}
$$
where the middle square is a pushout, then form
the functor
$$
[p_0 i^\E_0,p_1 i^\F_1]:\C+\A\to\P
$$
and consider its $(\Epi,\Mono)$-factorisation
$$
\xymatrix{
\C+\A
\ar[0,2]^-{[i^{\E\circ\F}_0,i^{\E\circ\F}_1]}
&
&
\E\circ\F
\ar[0,1]^-{j}
&
\P
}
$$
Then the cospan
$$
\xymatrix{
\C
\ar[1,1]_{i^{\E\circ\F}_0}
&
&
\A
\ar[1,-1]^{i^{\E\circ\F}_1}
\\
&
\E\circ\F
&
}
$$
is the composite of codiscrete cofibrations $\E$ and $\F$.
This follows immediately from~\cite[Theorem~4.20]{carboni+johnson+street+verity}:
our factorisation system $(\E,\M)$ on $\Vcat$ satisfies all the
assumptions of this theorem.

\subsection*{The idea of a relation lifting}

Given a collage $(i_0,\Coll{R},i_1)$ of a module
$R$ from $\A$ to $\B$, we want to define
$$
\xymatrix{
\ol{T}(R)\equiv
T\A
\ar[0,1]|-{\object @{/}}^-{(Ti_1)_\diamond}
&
T\Coll{R}
\ar[0,1]|-{\object @{/}}^-{(Ti_0)^\diamond}
&
T\B
}
$$
as the value of the relation lifting.
We now indicate how this definition yields a functor,
provided $T$ satisfies an additional condition.
Namely, we will want $T:\Vcat\to\Vcat$ to preserve
a certain property that is called {\em exactness\/}
of a lax square, see~\cite{guitart}.

\begin{definition}
Call a lax square
$$
\xymatrix{
\P
\ar[0,1]^-{p_1}
\ar[1,0]_{p_0}
&
\B
\ar[1,0]^{g}
\\
\A
\ar[0,1]_-{f}
\ar @{} [-1,1]|{\nearrow}
&
\C
}
$$
in $\Vcat$ {\em exact\/}, if the equality
\begin{equation}
\label{eq:exactness}
\C(fa,gb)=\bigvee_w \A(a,p_0w)\tensor\B(p_1w,b)
\end{equation}
holds, naturally in $a$ and $b$.
\end{definition}

\begin{remark}
The exactness condition~\refeq{eq:exactness} above
is clearly equivalent to the equality
$$f^\diamond\cdot g_\diamond=(p_0)_\diamond\cdot (p_1)^\diamond$$
in $\Vmod$. Let us further remark that every lax square is
``nearly exact'' in the sense that the inequality
\begin{equation}
\label{eq:near_exactness}
\C(fa,gb)\geq\bigvee_w \A(a,p_0w)\tensor\B(p_1w,b)
\end{equation}
holds, for all $a$ and $b$.
Indeed, consider the following inequalities
\begin{eqnarray*}
\bigvee_w \A(a,p_0w)\tensor\B(p_1w,b)
&\leq&
\bigvee_w \C(fa,fp_0w)\tensor\C(gp_1w,gb)
\\
&\leq&
\bigvee_w \C(fa,gp_1w)\tensor\C(gp_1w,gb)
\\
&\leq&
\C(fa,gb)
\end{eqnarray*}
where we have used (in that order) that $f$ and $g$
are functors, that the square is lax, and the composition
inequality in $\C$.
\end{remark}

\begin{example}
\label{ex:guitart}
We give examples of exact squares in $\Vcat$.
They all come from Guitart's paper~\cite{guitart},
Example~1.14.  The reasoning behind what follows
is the exactness supremum formula~\refeq{eq:exactness} above.
\begin{enumerate}[(1)]
\item
The squares
$$
\vcenter{
\xymatrix{
\A
\ar[0,1]^-{f}
\ar[1,0]_{1_\A}
\ar @{} [1,1]|{\nearrow}
&
\B
\ar[1,0]^{1_\B}
\\
\A
\ar[0,1]_-{f}
&
\B
}
}
\quad
\mbox{and}
\quad
\vcenter{
\xymatrix{
\A
\ar[0,1]^-{1_\A}
\ar[1,0]_{f}
\ar @{} [1,1]|{\nearrow}
&
\A
\ar[1,0]^{f}
\\
\B
\ar[0,1]_-{1_\B}
&
\B
}
}
$$
where the comparisons are identities, are always exact:
the equalities
$$
\B(fa,b)
=
\bigvee_w \A(a,w)\tensor\B(fw,b)
\quad
\mbox{and}
\quad
\B(b,fa)
=
\bigvee_w \B(b,fw)\tensor\A(w,a)
$$
hold by the Yoneda Lemma. Such squares are called
{\em Yoneda squares\/} in~\cite{guitart}.
\item
\label{item:cocomma}
Every {\em comma square\/}
$$
\xymatrix{
f/g
\ar[0,1]^-{d_1}
\ar[1,0]_{d_0}
\ar @{} [1,1]|{\nearrow}
&
\B
\ar[1,0]^{g}
\\
\A
\ar[0,1]_-{f}
&
\C
}
$$
and every {\em cocomma square\/}
$$
\xymatrix{
\C
\ar[0,1]^-{g}
\ar[1,0]_{f}
\ar @{} [1,1]|{\nearrow}
&
\B
\ar[1,0]^{i_1}
\\
\A
\ar[0,1]_-{i_0}
&
f\triangleright g
}
$$
is exact. See~\cite{kelly:observations} for the exact
properties of comma and cocomma objects.
\item
\label{item:ff}
The square
$$
\xymatrix{
\A
\ar[0,1]^-{1_\A}
\ar[1,0]_{1_\A}
\ar @{} [1,1]|{\nearrow}
&
\A
\ar[1,0]^{f}
\\
\A
\ar[0,1]_-{f}
&
\B
}
$$
(where the comparison is identity) is exact iff
$f$ is fully faithful.
\item
\label{item:abs_dense}
The square
$$
\xymatrix{
\A
\ar[0,1]^-{e}
\ar[1,0]_{e}
\ar @{} [1,1]|{\nearrow}
&
\B
\ar[1,0]^{1_\B}
\\
\B
\ar[0,1]_-{1_\B}
&
\B
}
$$
(where the comparison is identity) is exact iff
$e$ is {\em absolutely dense\/}, i.e., iff the
equality
$$
\B(b,b')
=
\bigvee_a \B(b,ea)\tensor\B(ea,b')
$$
holds, naturally in $b$ and $b'$. See, e.g., \cite{absv}
and~\cite{bv} for more details on absolutely dense
functors.
\item\label{item:adjunction}
The square
$$
\xymatrix{
\X
\ar[0,1]^-{f}
\ar[1,0]_{1_\X}
\ar @{} [1,1]|{\nearrow}
&
\A
\ar[1,0]^{u}
\\
\X
\ar[0,1]_-{1_\X}
&
\X
}
$$
is exact iff $f\dashv u:\A\to\X$ holds. Moreover, the
comparison in the above square is the unit of $f\dashv u$.
\item
The square
$$
\xymatrix{
\A
\ar[0,1]^-{1_\A}
\ar[1,0]_{u}
\ar @{} [1,1]|{\nearrow}
&
\A
\ar[1,0]^{1_\A}
\\
\X
\ar[0,1]_-{f}
&
\A
}
$$
is exact iff $f\dashv u:\A\to\X$ holds. Moreover, the
comparison in the above square is the counit of $f\dashv u$.
\item
The square
$$
\xymatrix{
\X'
\ar[0,1]^-{f}
\ar[1,0]_{1_{\X'}}
\ar @{} [1,1]|{\nearrow}
&
\A
\ar[1,0]^{u}
\\
\X'
\ar[0,1]_-{j}
&
\X
}
$$
is exact iff $f\dashv_j u:\A\to\X$ holds, i.e., iff
$f$ is a left adjoint of $u$ {\em relative to\/} $j$.

Relative adjointness means the existence of an equality
$$
\X(jx',ua)
=
\A(fx',a)
$$
natural in $x'$ and $a$, and due to the equality
$$
\A(fx',a)
=
\bigvee_{w}\X'(w,x')\tensor\A(fw,a)
$$
this means precisely the exactness of the above square.
\item
The square
$$
\xymatrix{
\A
\ar[0,1]^-{j}
\ar[1,0]_{h}
\ar @{} [1,1]|{\nearrow}
&
\B
\ar[1,0]^{l}
\\
\X
\ar[0,1]_-{1_\X}
&
\X
}
$$
is exact iff the comparison exhibits $l$ as
an {\em absolute\/} left Kan extension of $h$
along $j$. In fact, the equality
$$
\X(x,lb)
=
\bigvee_a \X(x,ha)\tensor\B(ja,b)
$$
natural in $x$ and $b$ asserts precisely the following
two conditions:
\begin{enumerate}[(1)]
\item
$l$ is a left Kan extension of $h$ along $j$.

For any $k:\B\to\X$ we need to prove $l\leq k$ iff $h\leq k\cdot j$.
\begin{enumerate}[(1)]
\item
Suppose $I\leq\bigwedge_b \X(lb,kb)$. Choose any $a$. Then
$I\leq\X(ha,lja)$ by the square above. Since $I\leq\X(lja,kja)$ by
assumption, $I\leq\X(ha,kja)$ follows.
Hence $I\leq\bigwedge_a\X(ha,kja)$ holds.
\item
Suppose $I\leq\bigwedge_a\X(ha,kja)$.
To prove $I\leq\bigwedge_b\X(lb,kb)$, it suffices to prove that
$I\leq\X(x,lb)$ implies $I\leq\X(x,kb)$, for all $x$ and $b$.
Suppose $I\leq\X(x,lb)$.
Then $I\leq\bigvee_a\X(x,ha)\tensor\B(ja,b)$ by the exactness
formula. But
$$
\bigvee_a\X(x,ha)\tensor\B(ja,b)
\leq
\bigvee_a\X(x,ha)\tensor\X(kja,kb)
\leq
\X(x,kb)
$$
since $k$ is a functor and since $h\leq k\cdot j$ holds.
Thus $I\leq\X(x,kb)$.
\end{enumerate}
\item
$l$ is an absolute left Kan extension of $h$ along $j$.

We need to prove that for any $f:\X\to\X'$, $f\cdot l$ is a left
Kan extension of $f\cdot h$ along $j$. That is,
for any $k:\B\to\X'$ we need to prove $f\cdot l\leq k$ iff $f\cdot h\leq k\cdot j$.

This is proved in the same manner as above.
\end{enumerate}
Observe that item~\refeq{item:adjunction} above is a special
case of absolute Kan extensions by B\'{e}nabou's Formal Adjoint
Functor Theorem~\cite{benabou}:
$f\dashv u$ holds iff the unit exhibits $u$ as an absolute
left Kan extension of identity along $f$.
\end{enumerate}
\end{example}

\begin{example}
\label{ex:dual}
If the square on the left is exact, then so is the square on
the right:
$$
\xymatrix{
\P
\ar[0,1]^-{p_1}
\ar[1,0]_{p_0}
\ar @{} [1,1]|{\nearrow}
&
\B
\ar[1,0]^{g}
\\
\A
\ar[0,1]_-{f}
&
\C
}
\quad\quad
\quad\quad
\quad\quad
\xymatrix{
\P^\op
\ar[0,1]^-{p_0^\op}
\ar[1,0]_{p_1^\op}
\ar @{} [1,1]|{\nearrow}
&
\A^\op
\ar[1,0]^{f^\op}
\\
\B^\op
\ar[0,1]_-{g^\op}
&
\C^\op
}
$$
\end{example}

\begin{example}
\label{ex:square+adjoints}
Suppose that in a lax square
$$
\xymatrix{
\P
\ar[0,1]^-{p_1}
\ar[1,0]_{p_0}
\ar @{} [1,1]|{\nearrow}
&
\B
\ar[1,0]^{g}
\\
\A
\ar[0,1]_-{f}
&
\C
}
$$
both $f$ and $p_1$ are {\em left\/} adjoints, where we
denote by $f^r$ and $p_1^r$ the respective right adjoints.

Then the above square is exact iff there is an isomorphism
$p_0\cdot p_1^r\cong f^r\cdot g$.
\end{example}

We will need the following technical result concerning the
behaviour of the composition of collages.\footnote{We are grateful
to the anonymous referee for suggesting this.}

\begin{proposition}
\label{prop:alt-composition}
Codiscrete cofibrations may be alternatively
composed by forming factorisations of cocomma
objects. More precisely, consider
modules $R$ and $S$, and form the following
diagram:
$$
\xymatrix{
\C
\ar[1,1]_{i^S_0}
&
&
\B
\ar[1,-1]^{i^S_1}
\ar[1,1]_{i^R_0}
&
&
\A
\ar[1,-1]^{i^R_1}
\\
&
\Coll{S}
\ar[1,1]_{q_0}
\ar @{} [0,2]|-{\to}
&
&
\Coll{R}
\ar[1,-1]^{q_1}
&
\\
&
&
i^S_1\triangleright i^R_0
&
&
\\
&
&
\Coll{S\cdot R}
\ar@{.>}[-1,0]_{j}
\ar@{<-} `r[rruuu] [rruuu]_-{i^{S\cdot R}_1}
\ar@{<-} `l[lluuu] [lluuu]^-{i^{S\cdot R}_0}
&
&
}
$$
Then there exists a fully faithful $j:\Coll{S\cdot R}\to i^S_1\triangleright i^R_0$
making both triangles commutative.
\end{proposition}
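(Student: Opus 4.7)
The plan is to exhibit $j$ directly by giving an explicit description of the cocomma $i^S_1 \triangleright i^R_0$ and then recognising $\Coll{S \cdot R}$ as the ``full subcategory on objects of $\C \sqcup \A$''.

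In the spirit of the pushout description recalled in the proof of Lemma~\ref{lem:stability-of-pushouts}, the cocomma $i^S_1 \triangleright i^R_0$ is computed as follows: its objects are the disjoint union of the objects of $\Coll{S}$ and of $\Coll{R}$ (hence with two formal copies of $\B$), with hom-objects
\begin{align*}
(i^S_1 \triangleright i^R_0)(q_0 x, q_0 x') & = \Coll{S}(x, x'), \\
(i^S_1 \triangleright i^R_0)(q_1 y, q_1 y') & = \Coll{R}(y, y'), \\
(i^S_1 \triangleright i^R_0)(q_0 x, q_1 y) & = \bigvee_b \Coll{S}(x, i^S_1 b) \tensor \Coll{R}(i^R_0 b, y), \\
(i^S_1 \triangleright i^R_0)(q_1 y, q_0 x) & = \bot,
\end{align*}
where the ``mixed'' hom encodes the universal 2-cell $q_0 i^S_1 \Rightarrow q_1 i^R_0$ and $q_0,q_1$ are the evident inclusions, identities on hom-objects.

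Next I would define $j$ by the object-assignment $c \mapsto q_0(i^S_0 c)$ for $c \in \C$ and $a \mapsto q_1(i^R_1 a)$ for $a \in \A$. Commutativity of both triangles is then immediate: $j \cdot i^{S \cdot R}_0 = q_0 \cdot i^S_0$ and $j \cdot i^{S \cdot R}_1 = q_1 \cdot i^R_1$. To check that $j$ is fully faithful, I would examine the four kinds of hom-objects in $\Coll{S \cdot R}$ from Definition~\ref{def:collage}. Three cases are trivial, both sides collapsing to $\C(c, c')$, to $\A(a, a')$, or to $\bot$. The only substantive case is the hom from $c$ to $a$: by definition
\[
\Coll{S \cdot R}(c, a) \;=\; \bigvee_b S(c, b) \tensor R(b, a),
\]
and the cocomma formula yields
\[
(i^S_1 \triangleright i^R_0)(jc, ja) \;=\; \bigvee_b \Coll{S}(i^S_0 c, i^S_1 b) \tensor \Coll{R}(i^R_0 b, i^R_1 a) \;=\; \bigvee_b S(c, b) \tensor R(b, a),
\]
where the last equality uses Definition~\ref{def:collage} for $\Coll{S}$ and $\Coll{R}$. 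Hence the two agree, establishing fully faithfulness.

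The main obstacle is really just bookkeeping: one must keep the orientation of the universal 2-cell $q_0 i^S_1 \Rightarrow q_1 i^R_0$ aligned with the direction of module composition, but this alignment is precisely what makes the ``mixed'' hom of the cocomma match the formula defining $S \cdot R$. As a corollary, since $[i^{S \cdot R}_0, i^{S \cdot R}_1]: \C + \A \to \Coll{S \cdot R}$ is bijective on objects and hence eso, the fully faithful $j$ is the $\Mono$-part of the $(\Epi, \Mono)$-factorisation of $[q_0 i^S_0, q_1 i^R_1]: \C + \A \to i^S_1 \triangleright i^R_0$, so composing via cocommas gives the same composite of modules as composing via pushouts.
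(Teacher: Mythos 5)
Your proposal is correct and follows essentially the same route as the paper: both give the explicit description of the cocomma object $i^S_1\triangleright i^R_0$ (two formal copies of $\B$, mixed hom $\bigvee_b \Coll{S}(x,i^S_1 b)\tensor\Coll{R}(i^R_0 b,y)$), define $j$ by $c\mapsto q_0 i^S_0 c$, $a\mapsto q_1 i^R_1 a$, and verify full faithfulness via the key computation $(i^S_1\triangleright i^R_0)(jc,ja)=\bigvee_b S(c,b)\tensor R(b,a)=\Coll{S\cdot R}(c,a)$. You are somewhat more thorough than the paper, which only records the one substantive hom-object case.
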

\begin{proof}
Observe that the category $i^S_1\triangleright i^R_0$
consists of $\Coll{S}$ and $\Coll{R}$ ``glued'' together
by $\B$. 

More precisely, the objects of $(i^S_1\triangleright i^R_0)$
are precisely those of the form $i_0^S c$, $i_1^S b$, $i_0^R b$ and
$i_1^R a$. The nontrivial hom-objects are defined as follows:
\begin{eqnarray*}
(i^S_1\triangleright i^R_0)(i_0^S c,i_1^S b)
&=&
S(c,b)
\\
(i^S_1\triangleright i^R_0)(i_0^R b,i_1^R a)
&=&
R(b,a)
\\
(i^S_1\triangleright i^R_0)(i_1^S b,i_0^R b')
&=&
\B(b,b')
\\
(i^S_1\triangleright i^R_0)(i_0^S c,i_1^R a)
&=&
\bigvee_b S(c,b) \tensor R(b,a)
\end{eqnarray*}
Hom-objects for all other combinations are defined in the
obvious way.

Then $j:\Coll{S\cdot R}\to i^S_1\triangleright i^R_0$
is defined by
$$
ja= i^R_1 a,
\quad
jc= i^S_0 c
$$
on objects and the equality
$$
(i^S_1\triangleright i^R_0)(jc,ja)=\bigvee_b \Coll{S}(c,b)\tensor\Coll{R}(b,a)
$$
proves that $j$ is fully faithful.
\end{proof}

We are now ready to formulate and prove the sufficiency condition
of our relation lifting result.

\begin{proposition}
\label{prop:BCC=>lifting}
Suppose $T:\Vcat\to\Vcat$ preserves exact squares.
Define the assignment $\ol{T}:\A\mapsto T\A$ on objects of
$\Vmod$ and the assignment $R\mapsto\ol{T}(R)$ by putting
$$
\ol{T}(R)=(Ti_0)^\diamond\cdot (Ti_1)_\diamond
$$
where $(i_0,\Coll{R},i_1)$ is the collage of the module $R$.

If $T$ preserves exact squares, the above assignment
extends to a 2-functor $\ol{T}:\Vmod\to\Vmod$
such that the square
$$
\xymatrix{
\Vmod
\ar[0,2]^-{\ol{T}}
&
&
\Vmod
\\
\Vcat
\ar[-1,0]^{({-})_\diamond}
\ar[0,2]_-{T}
&
&
\Vcat
\ar[-1,0]_{({-})_\diamond}
}
$$
commutes.
\end{proposition}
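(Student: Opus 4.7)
The plan is to verify that the proposed assignment $\ol{T}(R) = (Ti_0)^\diamond \cdot (Ti_1)_\diamond$, where $(i_0, \Coll{R}, i_1)$ is the collage of $R$, satisfies (a) the commutation $\ol{T}(f_\diamond) = (Tf)_\diamond$, which at $f = 1_\A$ specialises to preservation of identity modules, (b) preservation of composition, and (c) preservation of 2-cells in $\Vmod$.

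For (a), I would observe that $\Coll{f_\diamond}$ carries a canonical functor $q : \Coll{f_\diamond} \to \B$ with $qi_0 = 1_\B$ and $qi_1 = f$, and that $i_0 \dashv q$ in the 2-category $\Vcat$: the unit is the equality $qi_0 = 1_\B$, while the counit $i_0 q \le 1$ amounts, outside the image of $i_0$, to the inequality $i_0 f \le i_1$ witnessed by the identity on $fa$ in $\Coll{f_\diamond}(i_0 fa, i_1 a) = \B(fa, fa)$. Since every 2-functor preserves adjunctions, $Ti_0 \dashv Tq$ in $\Vcat$; this identifies the modules $(Ti_0)^\diamond = (Tq)_\diamond$. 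Hence
$$
\ol{T}(f_\diamond) = (Ti_0)^\diamond \cdot (Ti_1)_\diamond = (Tq)_\diamond \cdot (Ti_1)_\diamond = (T(q\cdot i_1))_\diamond = (Tf)_\diamond,
$$
using that $({-})_\diamond$ preserves composition. Observe that (a) uses only 2-functoriality of $T$, not the exactness hypothesis.

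Step (b) is the main obstacle, and this is where the exactness hypothesis does the real work. I would first record the reduction $\ol{T}(U)(\tilde y, \tilde x) = T\Coll{U}(Ti_0^U \tilde y, Ti_1^U \tilde x)$, obtained by collapsing the defining supremum over $T\Coll{U}$ via composition in that category. Now, given modules $R$ from $\A$ to $\B$ and $S$ from $\B$ to $\C$, apply Proposition~\ref{prop:alt-composition} to obtain the fully faithful comparison $j : \Coll{S \cdot R} \to i_1^S \triangleright i_0^R$. Two instances of exactness are essential: by Example~\ref{ex:guitart}(\ref{item:ff}), full faithfulness of $j$ is the exactness of a specific square, so $Tj$ is again fully faithful; and by Example~\ref{ex:guitart}(\ref{item:cocomma}), the cocomma square defining $i_1^S \triangleright i_0^R$ is exact, so its $T$-image is exact. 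Combining,
\begin{eqnarray*}
\ol{T}(S \cdot R)(\tilde c, \tilde a)
& = & T\Coll{S \cdot R}(Ti_0^{S\cdot R}\tilde c, Ti_1^{S\cdot R}\tilde a) \\
& = & T(i_1^S \triangleright i_0^R)(Tq_0 \cdot Ti_0^S \tilde c, Tq_1 \cdot Ti_1^R \tilde a) \\
& = & \bigvee_{\tilde b \in T\B} T\Coll{S}(Ti_0^S \tilde c, Ti_1^S \tilde b) \tensor T\Coll{R}(Ti_0^R \tilde b, Ti_1^R \tilde a) \\
& = & \bigvee_{\tilde b} \ol{T}(S)(\tilde c, \tilde b) \tensor \ol{T}(R)(\tilde b, \tilde a) \;=\; (\ol{T}(S) \cdot \ol{T}(R))(\tilde c, \tilde a),
\end{eqnarray*}
where the second equality uses $Tj$ fully faithful together with the identities $j\circ i_0^{S\cdot R} = q_0 \circ i_0^S$ and $j \circ i_1^{S\cdot R} = q_1 \circ i_1^R$ from Proposition~\ref{prop:alt-composition}, and the third applies the exactness formula to the $T$-image of the cocomma square.

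For (c), any 2-cell $R \le S$ in $\Vmod$ induces an identity-on-objects functor $h : \Coll{R} \to \Coll{S}$ commuting with the collage inclusions; applying $T$ and using functoriality of $Th$ on hom-objects yields $\ol{T}(R) \le \ol{T}(S)$ pointwise, completing the verification of 2-functoriality.
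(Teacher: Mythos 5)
Your proof is correct, and it diverges from the paper's in an interesting way on the extension step. The paper establishes $\ol{T}(f_\diamond)=(Tf)_\diamond$ (and, as the special case $f=1_\A$, preservation of identity modules) by identifying $\Coll{f_\diamond}$ with the cocomma object $f\triangleright 1_\A$ and invoking preservation of exact squares; you instead exhibit the retraction $q:\Coll{f_\diamond}\to\B$ as a right adjoint of $i_0$ and use that any 2-functor preserves adjunctions, so that $(Ti_0)^\diamond=(Tq)_\diamond$ and the composite collapses by functoriality of $({-})_\diamond$. This is a genuine refinement: it shows that the exactness hypothesis is needed only for preservation of composition, whereas the paper's argument spends it on all three verifications. (This does not weaken the proposition's hypothesis, since composition still requires BCC, but it cleanly isolates where BCC is used; it is also consistent with Corollary~\ref{cor:ext-thm}, because a $T$ without BCC still admits the assignment $R\mapsto\ol{T}(R)$ agreeing with $T$ on graphs --- it merely fails to be functorial.) For preservation of composition you follow the paper's route exactly --- Proposition~\ref{prop:alt-composition}, exactness of the cocomma square, and full faithfulness of $Tj$ via Example~\ref{ex:guitart} --- but you write out the hom-object computation that the paper compresses into ``the exactness conditions allow us to cancel out the mediating morphisms'', which is a worthwhile addition. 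Your treatment of 2-cells via the identity-on-objects comparison $\Coll{R}\to\Coll{S}$ is likewise correct and makes explicit a point the paper leaves implicit in the word ``2-functor''.
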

\begin{proof}
We first prove that $\ol{T}$ preserves identities
and composition.
\begin{enumerate}[(1)]
\item
Preservation of identities.

The collage of the identity module $\id_\A$ on $\A$
is given by a cospan
$$
\xymatrix{
\A
\ar[1,1]_{i_0}
&
&
\A
\ar[1,-1]^{i_1}
\\
&
1_\A\triangleright 1_\A
&
}
$$
coming from the cocomma object
$$
\xymatrix{
\A
\ar[0,1]^-{1_\A}
\ar[1,0]_{1_\A}
&
\A
\ar[1,0]^{i_1}
\\
\A
\ar[0,1]_-{i_0}
\ar @{} [-1,1]|{\nearrow}
&
1_\A\triangleright 1_\A
}
$$
in $\Vcat$.

Since cocomma squares are always exact, we have
that
$$
\ol{T}(\id_\A) =
(Ti_0)^\diamond\cdot (Ti_1)_\diamond =
(T1_\A)_\diamond\cdot (T1_\A)^\diamond =
1_{T\A}
$$
provided that $T$ preserves the exact cocomma square
above.
\item
Preservation of composition.

We have a diagram
\begin{equation}
\label{eq:composition}
\vcenter{
\xymatrix{
\C
\ar[1,1]_{i^\E_0}
&
&
\B
\ar[1,-1]^{i^\E_1}
\ar[1,1]_{i^\F_0}
&
&
\A
\ar[1,-1]^{i^\F_1}
\\
&
\Coll{S}
\ar@{}[0,2]|-{\to}
\ar[1,1]_{p_0}
&
&
\Coll{R}
\ar[1,-1]^{p_1}
&
\\
&
&
i_1^S\triangleright i_0^R
&
&
\\
&
&
\Coll{S\cdot R}
\ar[-1,0]_{j}
\ar @{<-} `l[lluuu] [lluuu]^{i^{S\cdot R}_0}
\ar @{<-} `r[rruuu] [rruuu]_{i^{S\cdot R}_1}
&
&
}
}
\end{equation}
where the middle square is a cocomma square, hence exact.

By Proposition~\ref{prop:alt-composition} above,
there exists a fully faithful $j$. Thus
$j^\diamond\cdot j_\diamond=1$ holds in
$\Vmod$ by Lemma~\ref{lem:ff}.
Therefore if we assume that $T$ preserves
both the exact cocomma square above and fully faithful
functors, we are done in proving that
$\ol{T}$ preserves composition.

But, in general, $j:\A\to\B$ being fully faithful
is expressible by an exact square, see~\ref{ex:guitart}\refeq{item:ff}.

Hence if $T$ preserves exact squares, $\ol{T}$
preserves the composition. The proof is quite easy:
the exactness conditions allow us to ``cancel out''
the mediating morphisms in $\Vmod$. This is analogous
to that in~\cite{bkpv:calco11} and the idea goes back
to~\cite{hermida}.
\end{enumerate}
To prove that $\ol{T}$ indeed extends $T$, observe
that this is certainly so on objects. To prove
that the equality
$$
(Tf)_\diamond=\ol{T}(f_\diamond)
$$
holds for any $f:\A\to\B$, it suffices to observe
the following: the collage of $f_\diamond$ is given by
the cospan of the cocomma square
$$
\xymatrix{
\A
\ar[0,1]^-{1_\A}
\ar[1,0]_{f}
\ar @{} [1,1]|{\nearrow}
&
\A
\ar[1,0]^{i_1}
\\
\B
\ar[0,1]_-{i_0}
&
f\triangleright 1_\A
}
$$
Since every cocomma square is exact by
Example~\ref{ex:guitart}\refeq{item:cocomma},
the lax square
\begin{equation}
\label{eq:Tcocomma}
\vcenter{
\xymatrix{
T\A
\ar[0,1]^-{1_{T\A}}
\ar[1,0]_{Tf}
\ar @{} [1,1]|{\nearrow}
&
T\A
\ar[1,0]^{Ti_1}
\\
T\B
\ar[0,1]_-{Ti_0}
&
T(f\triangleright 1_\A)
}
}
\end{equation}
is exact by the assumption on $T$. Hence we proved
the desired equality
$$
\ol{T}(f_\diamond)
=
(Ti_0)^\diamond\cdot (Ti_1)_\diamond
=
(Tf)_\diamond\cdot (1_{T\A})^\diamond
=
(Tf)_\diamond
$$
Above, the middle equality expresses the exactness of the
lax square~\refeq{eq:Tcocomma}.
\end{proof}

\begin{remark}
Our original proof of Proposition~\ref{prop:BCC=>lifting}
used a pushout diagram in lieu of the cocomma square in
diagram~\eqref{eq:composition}, so that this modified
diagram represents composition of codiscrete cofibrations.
The proof with a pushout square goes through unchanged,
after one proves that the pushout is an exact square.
The referee's suggestion to look at cocomma squares
instead of pushouts yields a shorter argument
(after one proves Proposition~\ref{prop:alt-composition}).
\end{remark}

\begin{definition}
We say that $T:\Vcat\to\Vcat$ satisfies
the {\em Beck-Chevalley Condition\/} ({\em BCC\/}, for short),
provided it preserves exact squares.
\end{definition}

We prove now that the satisfaction of BCC is equivalent
to the existence of a relation lifting. To that end, we
prove that the graph 2-functor $({-})_\diamond$
from $\Vcat$ to $\Vmod$ has essentially
the same universal property as it does in the case of preorders
in~\cite{bkpv:calco11}. We only need to trade absolutely dense
for fully faithful.

\begin{theorem}
\label{th:universal_property}
Given a commutative quantale $\V$, the 2-functor $({-})_\diamond:\Vcat\to\Vmod$
is universal w.r.t. the following four properties:
\begin{enumerate}[(1)]
\item
$({-})_\diamond$ ranges in a category enriched in posets.
\item
Every $f_\diamond$ is a left adjoint (its right adjoint being
denoted by $f^\diamond$).
\item
For every exact square
$$
\xymatrix{
\P
\ar[0,1]^-{p_1}
\ar[1,0]_{p_0}
&
\B
\ar[1,0]^{g}
\\
\A
\ar[0,1]_-{f}
\ar @{} [-1,1]|{\nearrow}
&
\C
}
$$
in $\Vcat$, we have the equality
$$
(p_0)_\diamond\cdot (p_1)^\diamond
=
(f)^\diamond\cdot (g)_\diamond
$$
in $\Vmod$.
\item
For every fully faithful $j$ in $\Vcat$
we have $j^\diamond\cdot j_\diamond=1$ in $\Vmod$.
\end{enumerate}
\end{theorem}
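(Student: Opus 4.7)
The plan is to prove the theorem in two halves. First I would verify that the 2-functor $({-})_\diamond$ itself enjoys properties (1)--(4); second, I would establish universality by showing that any 2-functor $F:\Vcat\to\K$ landing in a poset-enriched 2-category $\K$ and satisfying the obvious analogues of (2)--(4) factors uniquely through $({-})_\diamond$.

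For the first half, (1) is immediate since $\Vmod$ is a $\Pos$-category; (2) is Remark~\ref{rmk:upper-diamond}; (3) is just the explicit unpacking of the exactness condition~\refeq{eq:exactness}, which, using the formula for composition of modules, reads precisely $f^\diamond\cdot g_\diamond = (p_0)_\diamond\cdot (p_1)^\diamond$; and (4) follows from Lemma~\ref{lem:ff}, since when $j$ is fully faithful the unit $\eta^j$ of $j_\diamond\dashv j^\diamond$ is an isomorphism and hence an identity in the poset-enriched setting, so $j^\diamond\cdot j_\diamond=1$.

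For the second half, suppose $F:\Vcat\to\K$ satisfies (1)--(4) in the sense that $\K$ is poset-enriched, each $Ff$ has a chosen right adjoint $(Ff)^*$, every exact square in $\Vcat$ maps under $F$ to a square in $\K$ in which $(Fp_0)\cdot (Fp_1)^* = (Ff)^*\cdot (Fg)$, and $(Fj)^*\cdot (Fj)=1$ whenever $j$ is fully faithful. For a module $R:\A\to\B$ with collage $(i_0,\Coll{R},i_1)$, I would define
$$
\tilde F(R) \ = \ (Fi_0)^*\cdot (Fi_1).
$$
The main task is to show that $\tilde F$ is a well-defined 2-functor and that $\tilde F\cdot ({-})_\diamond = F$. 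For the second of these, the collage of $f_\diamond$ arises from the cocomma square with top-left corner $\A$ (Example~\ref{ex:guitart}\refeq{item:cocomma}), which is exact, so applying $F$ and invoking (3) gives $(Fi_0)^*\cdot (Fi_1)=Ff$. Functoriality on identities uses the exactness of the cocomma square $1_\A\triangleright 1_\A$ as in Proposition~\ref{prop:BCC=>lifting}; functoriality on composition uses the cocomma decomposition of Proposition~\ref{prop:alt-composition} together with the fully faithful factor $j$, where (3) removes the cocomma mediator and (4) cancels the $(Fj)^*\cdot (Fj)$ pair. Uniqueness of $\tilde F$ follows because every module is of the form $(i_0)^\diamond\cdot (i_1)_\diamond$ for its collage, and these are images of $\Vcat$-arrows under $({-})_\diamond$, so any extension is forced to equal $\tilde F$ on the generators and hence, by functoriality and the adjunction $(i_0)_\diamond\dashv (i_0)^\diamond$, everywhere.

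The main obstacle I anticipate is the well-definedness and functoriality of $\tilde F$: one must show that the formula $(Fi_0)^*\cdot (Fi_1)$ does not depend on the choice of collage (any two collages are canonically isomorphic as cospans, but this must be tracked through the passage to right adjoints), and that composition is preserved. Both rely on the interplay between exactness of cocomma squares and cancellation of fully faithful factors -- essentially the same argument as in Proposition~\ref{prop:BCC=>lifting}, but now with the exact-square hypothesis replaced by the assumption that $F$ already converts exact squares into the corresponding adjoint mates in $\K$. Once these verifications are in place the universal factorisation is immediate.
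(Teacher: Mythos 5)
Your proposal is correct and follows essentially the same route as the paper: define the extension on a module $R$ by $(Fi_0)^*\cdot(Fi_1)$ from its collage, establish functoriality and the extension property by the cocomma-square/fully-faithful-cancellation argument of Proposition~\ref{prop:BCC=>lifting}, and derive uniqueness from the fact that a 2-functor preserves adjunctions, so any competing extension $G$ must send $(i_0)^\diamond$ to $(Fi_0)^*$. The only cosmetic difference is your worry about independence of the choice of collage, which the paper sidesteps because the collage of Definition~\ref{def:collage} is a canonical construction.
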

\begin{proof}
Clearly, the 2-functor $({-})_\diamond$ has the listed
four properties.

Suppose that $F:\Vcat\to\KK$ is a 2-functor having
the four properties above. We want to define a unique
2-functor $F^\sharp:\Vmod\to\KK$ such that the triangle
$$
\xymatrix{
\Vmod
\ar[0,2]^-{F^\sharp}
&
&
\KK
\\
\Vcat
\ar[-1,0]^{({-})_\diamond}
\ar[-1,2]_{F}
}
$$
commutes. Denote by $(Ff)^*$ the right adjoint of $Ff$.
Define $F^\sharp\A=F\A$ on objects and put $F^\sharp(R)$
to be the composite
$$
(Fi_0)^*\cdot (Fi_1)
$$
where $(i_0,\Coll{R},i_1)$ is the collage of $R$.

Then $F^\sharp$ preserves identities and composition
by essentially the same reasoning as in the proof
of Proposition~\ref{prop:BCC=>lifting} above. Using the
same proof, one proves that $F^\sharp$ is an extension of
$F$ along $({-})_\diamond$.

To prove that $F^\sharp$ is a unique extension of $F$
along $({-})_\diamond$, consider
a 2-functor $G:\Vmod\to\KK$ with the property $G\cdot ({-})_\diamond=F$.
First observe that, for any $f$, from $G(f_\diamond)=Ff$ it follows that
$G(f^\diamond)=(Ff)^*$, since $G$ (being a 2-functor) preserves
adjunctions.

Consider a collage $(i_0,\Coll{R},i_1)$ of a module $R$.
Then the equalities
$$
G(R)=
G((i_0)^\diamond\cdot (i_1)_\diamond)=
G((i_0)^\diamond)\cdot G((i_1)_\diamond)=
(Fi_0)^*\cdot Fi_1=
F^\sharp(R)
$$
prove that $G=F^\sharp$.
\end{proof}

Combining the above, we arrive at our second characterisation
of the existence of a relation lifting.

\begin{corollary}[\bf The extension theorem]
\label{cor:ext-thm}
For $T:\Vcat\to\Vcat$, the following are equivalent:
\begin{enumerate}[(1)]
\item
There exists a unique $\ol{T}:\Vmod\to\Vmod$ such that
the following square
$$
\xymatrix{
\Vmod
\ar[0,2]^-{\ol{T}}
&
&
\Vmod
\\
\Vcat
\ar[0,2]_-{T}
\ar[-1,0]^{({-})_\diamond}
&
&
\Vcat
\ar[-1,0]_{({-})_\diamond}
}
$$
commutes.
\item
$T$ satisfies the Beck-Chevalley Condition.
\end{enumerate}
\end{corollary}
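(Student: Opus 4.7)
The plan is to treat the two implications separately, with most of the geometric content already collected in Proposition~\ref{prop:BCC=>lifting} and Theorem~\ref{th:universal_property}. For (2) $\Rightarrow$ (1), I would simply invoke Proposition~\ref{prop:BCC=>lifting} to construct $\ol{T}$ on objects as $\A\mapsto T\A$ and on a module $R$ with collage $(i_0,\Coll{R},i_1)$ as $\ol{T}(R)=(Ti_0)^\diamond\cdot (Ti_1)_\diamond$. Uniqueness of this lifting then follows directly from the universal property of $({-})_\diamond$ established in Theorem~\ref{th:universal_property}: any 2-functor $G:\Vmod\to\KK$ with $G\circ ({-})_\diamond=T\circ ({-})_\diamond$ is forced to send $f_\diamond$ to $(Tf)_\diamond$ and, being a 2-functor, must send the right adjoint $f^\diamond$ to the right adjoint $(Tf)^\diamond$, so its value on every module (represented through its collage) is determined.

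For the converse (1) $\Rightarrow$ (2), suppose a relation lifting $\ol{T}$ exists. Given an exact square in $\Vcat$, I would use the characterisation of exactness recalled right after the definition, namely that exactness of
$$
\xymatrix{
\P \ar[0,1]^-{p_1} \ar[1,0]_{p_0} \ar @{} [1,1]|{\nearrow} & \B \ar[1,0]^{g} \\
\A \ar[0,1]_-{f} & \C
}
$$
is equivalent to the module equation $f^\diamond\cdot g_\diamond=(p_0)_\diamond\cdot (p_1)^\diamond$ in $\Vmod$. Applying the 2-functor $\ol{T}$ to both sides of this equation yields
$$
\ol{T}(f^\diamond)\cdot \ol{T}(g_\diamond)=\ol{T}((p_0)_\diamond)\cdot \ol{T}((p_1)^\diamond).
$$
The commutativity of the lifting square gives $\ol{T}(h_\diamond)=(Th)_\diamond$ for each of $h\in\{g,p_0\}$, and since every 2-functor preserves adjunctions, the adjoint pair $f_\diamond\dashv f^\diamond$ is sent to an adjoint pair whose left half is $(Tf)_\diamond$; by uniqueness of adjoints this forces $\ol{T}(f^\diamond)=(Tf)^\diamond$, and similarly for $p_1$. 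The resulting equation $(Tf)^\diamond\cdot (Tg)_\diamond=(Tp_0)_\diamond\cdot (Tp_1)^\diamond$ is exactly the exactness of the $T$-image of the original square, so $T$ satisfies the BCC.

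The substantive step is really the preservation of adjoints by $\ol{T}$ combined with the module characterisation of exactness, and both ingredients have been supplied in the preceding sections. I do not anticipate a serious obstacle: once the equivalence between exactness and the module equation $f^\diamond\cdot g_\diamond=(p_0)_\diamond\cdot (p_1)^\diamond$ is in hand, both directions are essentially bookkeeping, with direction (2) $\Rightarrow$ (1) absorbed into Proposition~\ref{prop:BCC=>lifting} and direction (1) $\Rightarrow$ (2) a two-line diagrammatic argument.
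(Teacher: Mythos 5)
Your proposal is correct and follows essentially the same route as the paper: direction (2)\,$\Rightarrow$\,(1) is delegated to Proposition~\ref{prop:BCC=>lifting} with uniqueness extracted from Theorem~\ref{th:universal_property} applied to $({-})_\diamond\cdot T$, and direction (1)\,$\Rightarrow$\,(2) applies $\ol{T}$ to the module equation $f^\diamond\cdot g_\diamond=(p_0)_\diamond\cdot (p_1)^\diamond$ characterising exactness and uses preservation of adjunctions to identify $\ol{T}(f^\diamond)$ with $(Tf)^\diamond$. The only blemish is the typo in the compatibility condition for $G$, which should read $G\circ({-})_\diamond=({-})_\diamond\circ T$ rather than $T\circ({-})_\diamond$.
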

\begin{proof}
That (2) implies (1) was proved in
Proposition~\ref{prop:BCC=>lifting}. The uniqueness follows because the functor
$({-})_\diamond\cdot T:\Vcat\to\Vmod$ has the four properties listed in Theorem~\ref{th:universal_property}.
For the converse, observe that given an exact square
$$
\xymatrix{
\P
\ar[0,1]^-{p_1}
\ar[1,0]_{p_0}
&
\B
\ar[1,0]^{g}
\\
\A
\ar[0,1]_-{f}
\ar @{} [-1,1]|{\nearrow}
&
\C
}
$$
in $\Vcat$, we have the equality
$$
\ol{T}(p_0)_\diamond\cdot \ol{T}(p_1)^\diamond
=
\ol{T}(f)^\diamond\cdot \ol{T}(g)_\diamond.
$$
Using that $\ol{T}$ preserves adjunctions we have that
$\ol{T}(p_1)^\diamond = (Tp_1)^\diamond$ and
$\ol{T}(f)^\diamond = (Tf)^\diamond$. Since $\ol{T}$ is a lifting we conclude that
$$
(Tp_0)_\diamond\cdot (Tp_1)^\diamond
=
(Tf)^\diamond\cdot (Tg)_\diamond
$$
in $\Vmod$. Thus the functor $T$ satisfies BCC.
\end{proof}

\begin{remark}
\label{rem:formula_for_lifting}
Explicitly, the relation lifting $\ol{T}$ of
a 2-functor $T:\Vcat\to\Vcat$
is computed as follows. Given a module
$
\xymatrix@1{
R:
\A
\ar[0,1]|-{\object @{/}}
&
\B
}
$
with the collage $(i_0,\Coll{R},i_1)$, we have a formula
\begin{eqnarray*}
\ol{T}(R)(B,A)
&=&
(Ti_0)^\diamond\cdot (Ti_1)_\diamond (B,A)
\\
&=&
\bigvee_W T\Coll{R}(W,Ti_1(A))\tensor T\Coll{R}(Ti_0(B),W)
\\
&=&
T\Coll{R}(Ti_0(B),Ti_1(A))
\end{eqnarray*}
for every $A$ in $T\A$ and $B$ in $T\B$.
\end{remark}

\begin{remark}
In particular,
comparing Corollary~\ref{cor:ext-thm} with the extension
theorem of~\cite[Theorem~5.3]{bkpv:calco11}, we see that in
the special case of $\V=\Two$ the lifting $\ol{T}$ of
Corollary~\ref{cor:ext-thm} agrees with the lifting $\ol{T}$
of~\cite[Theorem~5.3]{bkpv:calco11}, due to the
uniqueness of $\ol{T}$ following from the universal property of
$({-})_\diamond:\Vcat\to\Vmod$.
\end{remark}

\begin{remark}
The plethora of lax exact squares of Example~\ref{ex:guitart}
shows that a 2-functor satisfying BCC must preserve, for example,
fully faithful and absolutely dense functors. From this, and
from Corollary~\ref{cor:ext-thm} above, it follows
immediately that there are 2-functors $T:\Vcat\to\Vcat$ that
do {\em not\/} admit a functorial relation lifting. See
Section~\ref{sec:examples} for examples.
\end{remark}

\begin{remark}
  Combining Corollary~\ref{cor:ext-thm} and Corollary~\ref{cor:lifting=distributive_law} we can infer that we have at most one distributive law of a 2-functor $T$ over $\LL$. Indeed, if a distributive law exits, then by Corollary~\ref{cor:lifting=distributive_law} $T$ has a lifting $\ol{T}$ to $\Vmod$. Therefore $T$ satisfies BCC. Using again Corollary~\ref{cor:ext-thm} we deduce that the lifting $\ol{T}$ is unique. By the one-to-one correspondence of Corollary~\ref{cor:lifting=distributive_law} we deduce that the distributive law is also unique. 
\end{remark}

The following result characterises functors admitting a relation
lifting in the way akin to~\cite{ckw:wpb}. Notice that
one the phrasing suggests the following intuition
when we compare relation liftings of endofunctors
of sets and 2-endofunctors of $\Vcat$:
\begin{itemize}
\item[]
pullbacks = (the dual of) cocomma squares
\item[]
preservation of weak pullbacks = (the dual of)
sending cocomma squares to exact squares,
\item[]
preservation of epis = (the dual of) preservation
of fully faithful 1-cells.
\end{itemize}
Hence exact squares play precisely the r\^{o}le of weak
pullbacks. Of course, every endofunctor of sets
preserves epis, so preservation of epis is
excluded from the conditions on the existence
of the ordinary relation lifting, see~\cite{trnkova}.

The notion of cocovering cocomma squares in Condition~(2)
below is (the lax dual of) covering pullbacks
from~\cite{ckw:wpb}.

\begin{proposition}
\label{prop:BCC=cocomma+ff}
For $T:\Vcat\to\Vcat$, the following are equivalent:
\begin{enumerate}[(1)]
\item
The 2-functor $T$ satisfies BCC, i.e., it preserves exact squares.
\item
The 2-functor $T$ preserves fully faithful 1-cells in $\Vcat$ and
it {\em cocovers cocomma squares\/}. The latter
means: for every cocomma square
$$
\xymatrix{
\C
\ar[0,1]^{g}
\ar[1,0]_{f}
&
\B
\ar[1,0]^{i_1}
\\
\A
\ar[0,1]_-{i_0}
\ar@{} [-1,1]|{\nearrow}
&
f\triangleright g
}
$$
the canonical comparison $\can$ in the diagram
\begin{equation}
\label{eq:cocovers}
\vcenter{
\xymatrix{
&
T\C
\ar[1,-1]_{Tf}
\ar[1,1]^{Tg}
&
\\
T\A
\ar[1,1]_{j_0}
\ar@{} [0,2]|-{\to}
&
&
T\B
\ar[1,-1]^{j_1}
\\
&
Tf\triangleright Tg
\ar[1,0]^{\can}
&
\\
&
T(f\triangleright g)
\ar@{<-} `l[luu] [luu]^-{Ti_0}
\ar@{<-} `r[ruu] [ruu]_-{Ti_1}
&
}
}
\end{equation}
is fully faithful.
\item
The 2-functor $T$ preserves fully faithful 1-cells and it maps cocomma squares
to exact squares.
\end{enumerate}
\end{proposition}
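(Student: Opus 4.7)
The plan is to prove the cycle of implications $(1) \Rightarrow (3) \Rightarrow (2) \Rightarrow (1)$.

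For $(1) \Rightarrow (3)$, I would invoke Example~\ref{ex:guitart} directly: item~\refeq{item:cocomma} asserts that every cocomma square is exact, and item~\refeq{item:ff} characterises fully faithful functors by exactness of a canonical square. Hence BCC immediately delivers both halves of condition $(3)$, namely preservation of full faithfulness and sending cocomma squares to exact squares.

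For $(3) \Rightarrow (2)$, I would verify the full faithfulness of $\can : Tf \triangleright Tg \to T(f \triangleright g)$ hom-object by hom-object, using the explicit description of $Tf \triangleright Tg$ as a cocomma in $\Vcat$. The hom-objects between $j_0$-indexed (respectively $j_1$-indexed) objects are those of $T\A$ (respectively $T\B$), and they match the corresponding hom-objects of $T(f \triangleright g)$ through $Ti_0$ and $Ti_1$, since the latter are fully faithful as images of fully faithful cocomma injections. The mixed hom-object $(Tf \triangleright Tg)(j_0 A, j_1 B)$ equals $\bigvee_C T\A(A, TfC) \tensor T\B(TgC, B)$ by construction, and this in turn equals $T(f \triangleright g)(Ti_0 A, Ti_1 B)$ by the exactness of the image cocomma square granted by $(3)$.

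For $(2) \Rightarrow (1)$, I would first observe the easy implication $(2) \Rightarrow (3)$: full faithfulness of $\can$ on the mixed hom-objects, together with the cocomma formula for $Tf \triangleright Tg$, yields exactness of the image cocomma square. Now hypothesis $(3)$ is precisely what the proof of Proposition~\ref{prop:BCC=>lifting} uses (preservation of fully faithful functors plus sending cocomma squares to exact squares) to construct a relation lifting $\ol{T}:\Vmod \to \Vmod$ for $T$. Corollary~\ref{cor:ext-thm} then converts the existence of this relation lifting back into BCC, closing the cycle.

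The most delicate step will be the explicit case analysis in $(3) \Rightarrow (2)$, since one must account for every pair of object types; the remaining implications reduce neatly to material already developed in the paper.
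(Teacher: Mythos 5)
Your route is the paper's proof with the cycle traversed in the other direction: the paper establishes $(1)\Rightarrow(2)\Rightarrow(3)\Rightarrow(1)$, and your steps $(1)\Rightarrow(3)$, $(2)\Rightarrow(3)$ and $(3)\Rightarrow(1)$ (the last via Proposition~\ref{prop:BCC=>lifting} and Corollary~\ref{cor:ext-thm}) coincide with the corresponding pieces of that argument. The only genuinely new step you propose is $(3)\Rightarrow(2)$ in place of the paper's $(1)\Rightarrow(2)$.

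That step --- precisely the one you flag as delicate --- cannot be completed as described. Three of the four families of hom-objects behave as you say, but the fourth does not: in the cocomma object one has $(Tf\triangleright Tg)(j_1B,j_0A)=\bot$, since the universal lax square introduces no morphisms from the $j_1$-copy back to the $j_0$-copy, whereas $T(f\triangleright g)(Ti_1B,Ti_0A)$ need not be $\bot$. Concretely, take $T=\const_\X$ with $\X$ nonempty: this functor satisfies BCC, hence also (3), yet $\can$ is the codiagonal $1_\X\triangleright 1_\X\to\X$, and $(1_\X\triangleright 1_\X)(j_1x,j_0x)=\bot$ while $\X(x,x)\geq I$. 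So condition (2), read literally as full faithfulness of $\can$ on all hom-objects, does not follow from (3) (nor from (1)), and no amount of case analysis will close this. What is provable --- and what the paper's own proof of $(1)\Rightarrow(2)$ in fact establishes, since it only ever verifies the $(j_0,j_1)$-homs --- is the one-sided statement $(Tf\triangleright Tg)(j_0A,j_1B)=T(f\triangleright g)(\can j_0A,\can j_1B)$. That weaker statement is also all that $(2)\Rightarrow(3)$ consumes: together with exactness of the cocomma $Tf\triangleright Tg$ it gives $T(f\triangleright g)(Ti_0A,Ti_1B)=\bigvee_W T\A(A,TfW)\tensor T\B(TgW,B)$, which is already the exactness of the image square. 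You should therefore either restrict your verification in $(3)\Rightarrow(2)$ to the forward hom-objects and interpret (2) in that weakened sense, or note explicitly that the literal condition (2) is strictly stronger than the other two and the equivalence fails for it.
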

\begin{proof}
(1) implies (2). By Example~\ref{ex:guitart}\refeq{item:ff}
fully faithful 1-cells can be encoded as exact squares, hence
$T$ preserves them.

Consider the diagram~\eqref{eq:cocovers}. Then the equality
$$
(Tf\triangleright Tg)(j_0 a,j_1 b)
=
\bigvee_w T\A(a,Tfw)\tensor T\B(Tgw,b)
$$
holds for every $a$ in $T\A$ and every $b$ in $T\B$
since cocomma squares are exact by
Example~\ref{ex:guitart}\eqref{item:cocomma}.
But the equality
$$
\bigvee_w T\A(a,Tfw)\tensor T\B(Tgw,b)
=
T(f\triangleright g)(\can j_0 a,\can j_1 b)
$$
holds since the outer square in~\refeq{eq:cocovers}
is exact by assumption. Thus
$$
(Tf\triangleright Tg)(j_0 a,j_1 b)
=
T(f\triangleright g)(\can j_0 a,\can j_1 b)
$$
and we proved that $\can$ is fully faithful.

\medskip\noindent
(2) implies (3). We only need to prove that
the outer square in~\refeq{eq:cocovers} is
exact. But this follows immediately from the
fact that $\can$ is fully faithful
(hence $\can_\diamond$ is split mono):
$$
(Ti_0)^\diamond \cdot (Ti_1)_\diamond
=
(j_0)^\diamond \cdot \can^\diamond \cdot \can_\diamond\cdot (j_1)_\diamond
=
(j_0)^\diamond \cdot (j_1)_\diamond
=
(Tf)_\diamond\cdot (Tg)^\diamond
$$
\noindent
(3) implies (1).
Observe that the standing assumptions suffice,
by Proposition~\ref{prop:BCC=>lifting}, for
the existence of a functorial relation
lifting $\ol{T}:\Vmod\to\Vmod$.
Hence $T$ preserves exact squares by Corollary~\ref{cor:ext-thm}.
\end{proof}

\section{Examples}
\label{sec:examples}

{In this section, we present various examples of functors with and
  without BCC. Most importantly, and as a preparation for the next
  section, we show that the functors $\Set\to\Set$ most commonly
  considered in coalgebra generalise to functors $\Vcat\to\Vcat$ with
  BCC. Whereas the proof of this is more or less straightforward in
  most cases, it is more difficult to generalise the powerset
  functor. First, in the richer setting of $\Vcat$ there are a number
  of choices to make in the definition, second one has to establish
  BCC, which does not seem to follow from general reasoning
  alone. Thus we give here a---to our knowledge---novel definition of
  generalised power functor on $\Vcat$ and then establish BCC under
  additional assumptions on $\V$.}

{Our first example presents a functor which does not preserve fully
  faithful 1-cells (embeddings) and therefore, see
  Example~\ref{ex:guitart}\refeq{item:ff}, does not have BCC.}

\begin{example}
\label{ex:notBCC}
We exhibit an example of a 2-functor that does not satisfy the BCC.
To that end, we put $\V=\Two$. Recall that $\Vcat$ is the category
$\Pre$ of preorders.
By Example~\ref{ex:guitart}\refeq{item:ff},
it suffices to find a locally monotone
functor  $T:\Pre\to\Pre$ that does not preserve
order-embeddings (= fully faithful monotone maps).
For this, let $T$ be the {\em connected components functor\/},
i.e., $T$ takes a preorder $\A$ to the discretely ordered
poset of connected components of $\A$.
The functor $T$ does not preserve the embedding $f:\A\to\B$ indicated below.
$$
\let\objectstyle=\scriptstyle
\xy <1 pt,0 pt>:
    (000,000)  *++={};
    (030,030) *++={} **\frm{.};
    (070,000)  *++={};
    (100,040) *++={} **\frm{.}
\POS(015,-10) *{\A} = "A";
    (085,-10) *{\B} = "B";
    (005,015) *{\bullet};
    (025,015) *{\bullet};
    (005,010) *{a} = "aA";
    (025,010) *{b} = "bA";
    (075,015) *{\bullet} = "a0B";
    (095,015) *{\bullet} = "b0B";
    (075,010) *{a} = "aB";
    (095,010) *{b} = "bB";
    (085,030) *{\bullet} = "c0B";
    (085,035) *{c} = "cB";
\POS "a0B" \ar@{-} "c0B";
\POS "b0B" \ar@{-} "c0B";
\endxy
$$
\end{example}

{Next we present an example which preserves fully faithful 1-cells but
  does not have BCC. It generalises a well known example from Aczel
  and Mendler \cite{acze-mend:fct} for a functor $\Set\to\Set$ which
  does not preserves weak pullbacks.}

\begin{example}
We present an example of a 2-functor $T:\Vcat\to\Vcat$ that does preserve
fully faithful 1-cells, yet it does not satisfy the BCC.

Denote, for every $\A$, by $\A^3$ the category $\A\tensor\A\tensor\A$,
and denote, for every $f:\A\to\B$, by $f^3$ the functor
$f\tensor f\tensor f$.
Define $T\A$ to be the full subcategory of $\A^3$ spanned
by objects $(a_1,a_2,a_3)$ such that $a_1=a_2$ or $a_1=a_3$
or $a_2=a_3$.
Since, for $f:\A\to\B$, the functor $f^3:\A^3\to\B^3$ clearly restricts
to a functor $T\A\to T\B$, we have defined a 2-functor
$T : \Vcat \to \Vcat$.

Clearly, $Tf$ is fully faithful whenever $f$ is.
To prove that $T$ does not satisfy the BCC, denote by $\kat{I}$
the unit category $\kat{I}$ having one object
(say $\star$) and $\kat{I}(\star,\star)=I$.
Consider $\A$ to be the category on two objects
$a$ and $b$, with $\A(a,b)=\A(b,a)=\A(a,a)=\A(b,b)=I$.
Then the commutative square
$$
\xymatrix{
\A\tensor\A
\ar[0,1]^-{p_1}
\ar[1,0]_{p_0}
&
\A
\ar[1,0]^{f}
\\
\A
\ar[0,1]_-{f}
&
\kat{I}
}
$$
where $p_0$ and $p_1$ are projections and $f$ is the unique functor,
is exact, but its image under $T$ is not.
\end{example}

Analogously to the ``classical'' case, we present
a wide class of 2-functors $T:\Vcat\to\Vcat$ that satisfy
the Beck-Chevalley Condition.

\begin{example}
\label{ex:Kripke_polynomial}
The {\em Kripke-polynomial\/} 2-functors $T:\Vcat\to\Vcat$,
given by the grammar
$$
T::=
\Id
\mid
\const_\X
\mid
T+T
\mid
T\tensor T
\mid
T^\partial
\mid
\LL T
$$
are defined as follows:
\begin{enumerate}[(1)]
\item
The 2-functor $\const_\X$ sends every category $\A$ to $\X$
and every functor $f:\A\to\B$ to the identity functor $\Id:\X\to\X$.
\item
Given 2-functors $T_1,T_2:\Vcat\to\Vcat$, the 2-functors $T_1+T_2$
and $T_1\tensor T_2$, send a category $\A$ to $T_1\A+T_2\A$
and $T_1\A\tensor T_2\A$, respectively.
For a functor $f:\A\to\B$, the values are $T_1 f+T_2 f$
and $T_1 f\tensor T_2 f$, respectively.
\item
The 2-functor $T^\partial$ (the {\em dual\/} of the 2-functor $T$)
is defined as the following composite
$$
\xymatrixrowsep{.5pc}
\xymatrix{
\Vcat
\ar[0,1]^-{({-})^\op}
&
\Vcat^\co
\ar[0,1]^-{T^\co}
&
\Vcat^\co
\ar[0,1]^-{({-})^\op}
&
\Vcat
\\
\A
\ar @{|->} [0,1]
&
\A^\op
\ar @{|->} [0,1]
&
T(\A^\op)
\ar @{|->} [0,1]
&
(T(\A^\op))^\op
}
$$
where $\Vcat^\co$ is the $\Pre$-category $\Vcat$
with just the 2-cells reversed, i.e., the equality
$$
\Vcat^\co(\A,\B)
=
\Bigl(\Vcat(\A,\B)\Bigr)^\op
$$
holds for all categories $\A$ and $\B$.
The functor $T^\co:\Vcat^\co\to\Vcat^\co$
acts exactly as $T$ on 0-cells and 1-cells. Using $\Vcat^\co$ is a technicality needed to make   $({-})^\op$  a 2-functor. Hence it is easy to check that $T^\partial$ is a 2-functor.
\item
The 2-functor $\LL$ sends $\A$ to $[\A^\op,\V]$
and $f:\A\to\B$ is sent to the left Kan extension
along $f^\op$. In particular, we have an adjunction
$$
\LL f\dashv [f^\op,\V]:[\B^\op,\V]\to [\A^\op,\V]
$$
In a formula, we have
$$
\LL f : A \mapsto \Bigl(b\mapsto \bigvee_a Aa \tensor \B(b,fa)\Bigr)
$$
It should be noted that $\LL$ is a genuine 2-functor,
it preserves identities and composition on the nose since
we enrich in a quantale.
\end{enumerate}
\end{example}

\begin{proposition}
The Kripke-polynomial functors from Example~\ref{ex:Kripke_polynomial} have BCC.
\end{proposition}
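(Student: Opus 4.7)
The plan is to proceed by structural induction on the grammar defining Kripke-polynomial functors, using Corollary~\ref{cor:ext-thm} to identify BCC with the existence of a functorial relation lifting. The base cases are immediate: $\Id$ preserves every lax square on the nose, and $\const_\X$ sends every lax square to the identity square on $\X$, which is exact by Example~\ref{ex:guitart}. The plan is then to treat each constructor $+$, $\tensor$, $(-)^\partial$, and $\LL(-)$ separately in the induction step.

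For the sum $T_1 + T_2$ and tensor $T_1 \tensor T_2$, I would verify the exactness formula \refeq{eq:exactness} directly. In the coproduct $T_1\A + T_2\A$ every object lies in exactly one summand and cross-summand hom-objects are $\bot$, so both sides of the exactness formula reduce componentwise to the formulae obtained by applying $T_1$ and $T_2$ to the original exact square. In the tensor product $T_1\A \tensor T_2\A$ both sides factor as tensors over the two coordinates, again reducing to exactness for $T_1$ and $T_2$. For the dual $T^\partial$, I would use Example~\ref{ex:dual}, which says $(-)^\op$ sends exact squares to exact squares, so that $T^\partial = (-)^\op \circ T^\co \circ (-)^\op$ preserves exactness whenever $T$ does.

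The hard part will be the case $\LL T$. Since composition of BCC-preserving 2-functors preserves BCC, it suffices to show that $\LL$ itself preserves exact squares. The key observation is that the adjunction $\LL f \dashv [f^\op, \V]$ turns every 1-cell of the $\LL$-image of a lax square into a left adjoint, so by Example~\ref{ex:square+adjoints} the image square is exact iff the canonical Beck-Chevalley comparison $\LL p_0 \cdot [p_1^\op, \V] \Rightarrow [f^\op, \V] \cdot \LL g$ is an isomorphism. I would verify this pointwise: for a presheaf $B \in [\B^\op, \V]$ and $a \in \A$, the right-hand side evaluates by the formula for $\LL g$ to $\bigvee_b Bb \tensor \C(fa, gb)$; substituting the exactness formula for the original square yields $\bigvee_{b, w} Bb \tensor \A(a, p_0 w) \tensor \B(p_1 w, b)$, and collapsing the supremum over $b$ via the Yoneda-style identity $\bigvee_b Bb \tensor \B(b', b) = Bb'$ (valid for any $\V$-functor $B : \B^\op \to \V$) gives $\bigvee_w B(p_1 w) \tensor \A(a, p_0 w)$, which matches the value of the left-hand side. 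The main delicate point is to correctly identify left and right adjoints within the $\LL$-image square; once this is done, the supremum computation closes the argument.
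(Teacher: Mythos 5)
Your proposal is correct and follows essentially the same route as the paper: structural induction with the identical treatment of each constructor, including the reduction of the $\LL T$ case to showing that $\LL$ itself has BCC via the adjoint criterion of Example~\ref{ex:square+adjoints}. The only (harmless) divergence is that you verify the comparison $\LL p_0\cdot[p_1^\op,\V]\cong[f^\op,\V]\cdot\LL g$ pointwise on arbitrary presheaves using the co-Yoneda identity $\bigvee_b Bb\tensor\B(b',b)=Bb'$, whereas the paper first reduces to representables by density and cocontinuity and then applies Yoneda; both computations are sound and amount to the same calculation.
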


\begin{proof}
Consider an exact lax square
\begin{equation}
\label{eq:exact_square}
\vcenter{
\xymatrix{
\P
\ar[0,1]^-{p_1}
\ar[1,0]_{p_0}
&
\B
\ar[1,0]^{g}
\\
\A
\ar[0,1]_-{f}
\ar @{} [-1,1]|{\nearrow}
&
\C
}
}
\end{equation}
in $\Vcat$. We prove by structural induction that
every Kripke-polynomial $T:\Vcat\to\Vcat$ preserves
exactness of the above square.
\begin{enumerate}[(1)]
\item
The 2-functor $T=\Id$ satisfies preserves the exactness
of~\refeq{eq:exact_square} trivially.
\item
The 2-functor $\const_\X$ preserves the exactness
of~\refeq{eq:exact_square}, since
the image of square~\refeq{eq:exact_square} under
$\const_\X$ is the square
$$
\xymatrix{
\X
\ar[0,1]^-{1_\X}
\ar[1,0]_{1_\X}
\ar @{} [1,1]|{\nearrow}
&
\X
\ar[1,0]^{1_\X}
\\
\X
\ar[0,1]_-{1_\X}
&
\X
}
$$
where the comparison is the identity. This is an exact
square (it is a Yoneda square).
\item
Suppose both $T_1$ and $T_2$ preserve the exactness
of~\refeq{eq:exact_square}.
We prove that $T_1+T_2$ preserves the exactness
of~\refeq{eq:exact_square}.

The image of~\refeq{eq:exact_square} under $T_1+T_2$ is
$$
\xymatrixcolsep{4pc}
\xymatrix{
T_1\P+T_2\P
\ar[0,1]^-{T_1p_1+T_2p_1}
\ar[1,0]_{T_1p_0+T_2p_0}
\ar @{} [1,1]|{\nearrow}
&
T_1\B+T_2\B
\ar[1,0]^{T_1g+T_2g}
\\
T_1\A+T_2\A
\ar[0,1]_-{T_1f+T_2f}
&
T_1\C+T_2\C
}
$$
The assertion follows from the fact that coproducts
are disjoint in $\Vcat$.
\item
Suppose both $T_1$ and $T_2$ preserve
the exactness of the square~\refeq{eq:exact_square}.
We prove that $T_1\tensor T_2$ preserves the exactness of
the square~\refeq{eq:exact_square}.

The image of~\refeq{eq:exact_square} under $T_1\tensor T_2$ is
$$
\xymatrixcolsep{4pc}
\xymatrix{
T_1\P\tensor T_2\P
\ar[0,1]^-{T_1p_1\tensor T_2p_1}
\ar[1,0]_{T_1p_0\tensor T_2p_0}
\ar @{} [1,1]|{\nearrow}
&
T_1\B\tensor T_2\B
\ar[1,0]^{T_1g\tensor T_2g}
\\
T_1\A\tensor T_2\A
\ar[0,1]_-{T_1f\tensor T_2f}
&
T_1\C\tensor T_2\C
}
$$
and the latter square is exact due to the following
equations
\begin{eqnarray*}
&&
(T_1\C\tensor T_2\C)
\Bigl((T_1 f\tensor T_2 f)(a,a'),(T_1 g\tensor T_2 g)(b,b')\Bigr)
\\
&&
\phantom{M}
=
T_1\C(T_1 f a,T_1 g b)\tensor T_2\C(T_2 f a',T_2 g b')
\\
&&
\phantom{M}
=
\Bigl(
\bigvee_w T_1\A(a,T_1p_0w)\tensor T_1\B(T_1p_1w,b)
\Bigr)
\tensor
\Bigl(
\bigvee_{w'} T_2\A(a',T_2p_0w')\tensor T_2\B(T_2p_1w',b')
\Bigr)
\\
&&
\phantom{M}
=
\bigvee_{w,w'}
\Bigl(
T_1\A(a,T_1p_0w)\tensor T_1\B(T_1p_1w,b)
\tensor
T_2\A(a',T_2p_0w')\tensor T_2\B(T_2p_1w',b')
\Bigr)
\\
&&
\phantom{M}
=
\bigvee_{(w,w')}
\Bigl(
(T_1\A\tensor T_2\A)((a,a'),(T_1p_0\tensor T_2p_0)(w,w'))
\tensor
(T_1\B\tensor T_2\B)((T_1p_1\tensor T_2p_1)(w,w'),(b,b'))
\Bigr)
\end{eqnarray*}
where we have used induction assumptions and the properties of
$\tensor$ in $\V$.
\item
Suppose $T$ preserves the exactness of the square~\refeq{eq:exact_square}.
We prove that its dual $T^\partial$ preserves
the exactness of the square~\refeq{eq:exact_square}.

The square
$$
\xymatrix{
\P^\op
\ar[0,1]^-{p_0^\op}
\ar[1,0]_{p_1^\op}
\ar @{} [1,1]|{\nearrow}
&
\A^\op
\ar[1,0]^{f^\op}
\\
\B^\op
\ar[0,1]_-{g^\op}
&
\C^\op
}
$$
is exact by Example~\ref{ex:dual},
since~\refeq{eq:exact_square} is exact.
By assumption, the square
$$
\xymatrix{
T(\P^\op)
\ar[0,1]^-{T(p_0^\op)}
\ar[1,0]_{T(p_1^\op)}
\ar @{} [1,1]|{\nearrow}
&
T\A^\op
\ar[1,0]^{T(f^\op)}
\\
T(\B^\op)
\ar[0,1]_-{T(g^\op)}
&
T(\C^\op)
}
$$
is exact.
Finally, the square
$$
\xymatrixcolsep{4pc}
\xymatrix{
(T(\P^\op))^\op
\ar[0,1]^-{(T(p_1^\op))^\op}
\ar[1,0]_{(T(p_0^\op))^\op}
\ar @{} [1,1]|{\nearrow}
&
(T(\B^\op))^\op
\ar[1,0]^{(T(g^\op))^\op}
\\
(T(\A^\op))^\op
\ar[0,1]_-{(T(f^\op))^\op}
&
(T(\C^\op))^\op
}
$$
is exact by Example~\ref{ex:dual}
and this is what we were supposed to prove.
\item
Suppose that $T$ preserves the exactness
of the square~\refeq{eq:exact_square}.
We prove that $\LL T$ does preserve it again.

It suffices to prove that $\LL$ satisfies the Beck-Chevalley
Condition. The image of square~\refeq{eq:exact_square} under
$\LL$ is the square
$$
\xymatrix{
\LL\P
\ar[0,1]^-{\LL p_1}
\ar[1,0]_{\LL p_0}
\ar @{} [1,1]|{\nearrow}
&
\LL\B
\ar[1,0]^{\LL g}
\\
\LL\A
\ar[0,1]_-{\LL f}
&
\LL\C
}
$$
We employ Example~\ref{ex:square+adjoints}:
both $\LL f$ and $\LL p_1$ are left adjoints with
$(\LL f)^r=[f^\op,\V]$ and $(\LL p_1)^r=[p_1^\op,\V]$.
Hence it suffices to prove that there is an isomorphism
$$
\LL p_0\cdot [p_1^\op,\V]
\cong
[f^\op,\V]\cdot \LL g
$$
Moreover, by the density of representables, i.e., of
functors of the form $\B({-},b_0)$ in $\LL\B$, and by
the fact that all the functors $\LL p_0$, $[p_1^\op,\Two]$,
$[f^\op,\Two]$, $\LL g$ preserve colimits
(since they all are left adjoints), it suffices
to prove that
\begin{equation}
\label{eq:L}
(\LL p_0\cdot [p_1^\op,\V])(\B({-},b_0))
\cong
([f^\op,\V]\cdot \LL g)(\B({-},b_0))
\end{equation}
holds for all $b_0$.

The left-hand side is isomorphic to
$$
\LL p_0 (\B(p_1{-},b_0))
=
a\mapsto \bigvee_w \A(a,p_0 w)\tensor \B(p_1 w,b_0)
$$
By exactness of~\refeq{eq:exact_square}, this means that
$$
\LL p_0 (\B(p_1{-},b_0))
=
a\mapsto\C(fa,gb_0)
$$
{}Observe further that
$$
(\LL g)(\B({-},b_0))
=
c\mapsto
\bigvee_b \C(c,gb)\tensor \B(b,b_0)
$$
hence
$$
(\LL g)(\B({-},b_0))
=
c\mapsto
\C(c,gb_0)
$$
by the Yoneda Lemma.

The right hand side of~\refeq{eq:L} is therefore
isomorphic to
$$
([f^\op,\V]\cdot \LL g)(\B({-},b_0))
=
[f^\op,\V](c\mapsto \C(c,gb_0))
=
a\mapsto\C(fa,gb_0)
$$
Hence the isomorphism in~\refeq{eq:L} takes place, and
therefore $\LL$ preserves exactness of the square~\refeq{eq:exact_square}.
\end{enumerate}
\end{proof}

We write $\UU$ for $\LL^\partial$. In the case $\V=\Two$, the two
functors $\LL$ and $\UU$ are the usual lower- and upper-sets
functors. In domain theory this distinction is known under the names
of Hoare and Smyth powerdomains \cite[Prop 6.2.12]{abra-jung:handy}.

Next we show how to use the formula for the relation lifting from
Remark~\ref{rem:formula_for_lifting} in order to compute the relation
lifting of $\LL$ and $\UU$.

\begin{example}[\bf Relation liftings of $\LL$ and $\UU$]
By Example~\ref{ex:Kripke_polynomial}, the 2-functor
$\LL:\Vcat\to\Vcat$ satisfies the BCC.
Since $\LL f\dashv [f^\op,\V]$, the relation lifting $\ol{\LL}$
of $\LL$ is given by the formula
\begin{eqnarray*}
\ol{\LL}(R)(B,A)
&=&
\LL\Coll{R}(\LL i_0(B),\LL i_1(A))
\\
&=&
\LL\B(B,[i_0^\op,\V]\cdot\LL i_1(A))
\\
&=&
\bigwedge_b [Bb,\bigvee_a \Coll{R}(i_0(b),i_1(a))\tensor Aa]
\\
&=&
\bigwedge_b [Bb,\bigvee_a R(b,a)\tensor Aa]
\end{eqnarray*}
for every $\xymatrix@1{R:\A\ar[0,1]|-{\object @{/}}&\B}$, $A:\A^\op\to\V$ and $B:\B^\op\to\V$.

The dual $\LL^\partial=\UU$ is explicitly given by $\UU\A=[\A,\V]^\op$ and $\UU f:\UU\A\to\UU\B$
is defined, for $f:\A\to\B$, as the right Kan extension
along $f$. Hence we have an adjunction $[f,\V]^\op\dashv \UU f$.
Therefore the relation lifting $\ol{\UU}$ of $\UU$
is given by the formula
\begin{eqnarray*}
\ol{\UU}(R)(B,A)
&=&
\UU\Coll{R}(\UU i_0(B),\UU i_1(A))
\\
&=&
\UU\A([i_1,\V]^\op \UU i_0(B),A)
\\
&=&
[\A,\V](A,[i_1,\V]^\op \UU i_0(B))
\\
&=&
\bigwedge_a [Aa,\bigvee_b \Coll{R}(i_0 (b),i_1 (a))\tensor Bb]
\\
&=&
\bigwedge_a [Aa,\bigvee_b R(b,a)\tensor Bb]
\end{eqnarray*}
for every $\xymatrix@1{R:\A\ar[0,1]|-{\object @{/}}&\B}$, $A:\A\to\V$ and $B:\B\to\V$.
\end{example}

\begin{remark} For further reference, given
    $\xymatrix@1{R:\A\ar[0,1]|-{\object @{/}}&\B}$, we have
\begin{equation}\label{eq:LU-lift}
\begin{array}{c}
\ol{\LL}(R)(B,A)=\bigwedge_b [Bb,\bigvee_a R(b,a)\tensor Aa] \\[1.2ex]
\ol{\UU}(R)(B,A)=\bigwedge_a [Aa,\bigvee_b R(b,a)\tensor Bb]
\end{array}
\end{equation}
Note that in the case $\V=\Two$ these two formulas become, as expected,
\begin{equation}\label{eq:LU-lift-2}
\begin{array}{c}
\ol{\LL}(R)(B,A)=\forall b\in B \;.\; \exists a\in A \;.\; R(b,a) \\[1.2ex]
\ol{\UU}(R)(B,A)=\forall a\in A \;.\; \exists b\in B \;.\; R(b,a)
\end{array}
\end{equation}
\end{remark}

\medskip\medskip Finally, we come to generalising the powerset
  functor. We start from the point of view that in the case $\V=\Two$,
  the natural generalisation of the powerset functor to a functor
  $\PP:\Vcat\to\Vcat$ is given as follows. On objects, a preorder $\A$
  is mapped to the set of \emph{all subsets of the carrier} of $\A$,
  ordered by the so-called Egli-Milner order
\begin{equation}\label{eq:Egli-Milner}
B\le_\textit{EM} A \ \Leftrightarrow \
\left\{
\begin{array}{ll}
&\forall b\in B \;.\; \exists a\in A \;.\; b\le_\A a \\
\wedge&\\
&\forall a\in A \;.\; \exists b\in B \;.\; b\le_\A a
\end{array}\right.
\end{equation}
The similarity with \eqref{eq:LU-lift-2} is not a coincidence, but we
emphasise that for $\PP$ we consider all subsets, not only upper or
lower subsets.
On arrows $\PP$ is given by direct image.

It is clear that $\PP$ is a straightforward generalisation of the
powerset functor. Conceptually, only the choice of the Egli-Milner
order needs justification. One such justification has been given in
\cite{vele-kurz:presentations} where it was shown that the definition
of $\PP$ is just a special instance of a general construction of
lifting functors from $\Set$ to $\Pre$ (that paper discusses the
category $\Pos$ of posets instead of the category $\Pre$ of preorders,
but the construction and argument remain the same): In a nutshell, in
the same way as the powerset functor is a quotient of the list functor
on $\Set$ by a set $E$ of equations, the Egli-Milner order arises from
quotienting (in $\Pre$) the list functor on $\Pre$ by the same set $E$ of
equations.

Another justification, which will play a role in the next section, comes from the observation that the $\Pos$-collapse of $\PP$ is the convex powerspace functor, which follows from \cite[Prop 6.2.5.6]{abra-jung:handy}. The convex power functor is known in domain theory as the Plotkin powerdomain, see eg  \cite{abra-jung:handy}, and has been studied in a coalgebraic context in \cite{palmigiano:pml}. In this line of research it is well-known that the convex powerset provides the Kripke semantics for negation-free modal logic in the same way as the usual powerset provides the Kripke semantics for classical modal logic.

Coming back to the task of defining $\PP:\Vcat\to\Vcat$, we want to generalise \eqref{eq:Egli-Milner} and notice the similarity with the right hand sides of  \eqref{eq:LU-lift-2}, which we know to be generalised by the right hand sides of \eqref{eq:LU-lift}. We are thus led to \eqref{eq:P-Egli-Milner} below.

\newcommand{\ddiff}[1] {\begin{color}{magenta} {}{#1}\end{color}}

\begin{example}
\label{ex:power-funt}
The \emph{generalised power 2-functor}
$\PP:\Vcat\to\Vcat$ is defined as follows.
We start with some terminology and notation. Given a $\V$-category
$\A$, a \emph{$\V$-subset of $\A$} is an arbitrary $\V$-functor from
the discrete underlying $\V$-category $|\A|$ to $\V$.  Recall that
$|\A|$ has the same objects as $\A$ and
$$
|\A|(a,a')
=
\left\{
\begin{array}{ll}
I, & \mbox{ if $a=a'$,}\\
\bot, & \mbox{ else.}
\end{array}
\right.
$$
Hence a $\V$-subset $\phi:|\A|\to\V$ of $\A$ is a mere collection
$\{\phi (a)\}$ of elements of $\V_o$, indexed by objects of $\A$.
For a $\V$-subset $\phi$ of $\A$, we define
$$
\phi^\uparrow:|\A|\to\V
\quad
\mbox{and}
\quad
\phi^\downarrow:|\A|\to\V
$$
by the formulas
$$
\phi^\uparrow (a)
=
\bigvee_{a'} \phi (a')\tensor\A(a',a),
\quad
\phi^\downarrow (a)
=
\bigvee_{a'} \phi (a')\tensor\A(a,a')
$$
Intuitively, $\phi^\uparrow$ is the ``upper-set''
generated by $\phi$, considered as a mere $\V$-subset
of $\A$. Similarly, $\phi^\downarrow$ is the ``lower-set''
generated by $\phi$.

\medskip\noindent
The 2-functor $\PP:\Vcat\to\Vcat$ is now defined as follows:
\begin{enumerate}[(1)]
\item
The objects of $\PP\A$ are arbitrary $\V$-subsets
$\phi:|\A|\to\V$ of $\A$.
For any $\phi,\psi:|\A|\to\V$ put
$$
\PP\A(\phi,\psi)
=
[|\A|,\V](\phi,\psi^\downarrow)
\tensor
[|\A|,\V](\psi,\phi^\uparrow)
$$
or, in a detailed formula, by
\begin{equation}
\label{eq:P-Egli-Milner}
\PP\A(\phi,\psi)
=
\bigwedge_a[\phi(a),\bigvee_{a'}\psi(a')\tensor\A(a,a')]
\tensor
\bigwedge_{a'} [\psi(a'),\bigvee_a\phi(a)\tensor\A(a,a')]
\end{equation}
that can be perceived as
the ``Egli-Milner condition in the $\V$-setting''.
\item
Given a $\V$-functor $f:\A\to\B$ and
$\V$-subset $\phi:|\A|\to\V$, define
$\PP f(\phi):|\B|\to\V$ by
$$
b\mapsto \bigvee_a |\B|(fa,b)\tensor\phi a.
$$
In other words, $\PP f(\phi)$ is the value of a left
Kan extension of $\phi$ along $|f|:|\A|\to |\B|$.
In particular, the equality
$$
[|B|,\V](\PP f(\phi),\psi)
=
[|\A|,\V](\phi,\psi\cdot |f|)
$$
holds for all $\phi:|\A|\to\V$, $\psi:|\B|\to\V$.
\end{enumerate}
Long but standard lattice-theoretical computations show that $\PP$ is indeed a $2$-functor.  Notice
that when $\V$ is $\Two$, we obtain the power 2-functor on $\Pre$
defined in~\cite[Example~6.3]{bkpv:calco11}.

\medskip Moreover, if $\V$ is such that $\tensor$ is $\wedge$,
then $\PP$ satisfies the Beck-Chevalley Condition.
Consider an exact square
$$
\xymatrix{
\P
\ar[0,1]^-{p_1}
\ar[1,0]_{p_0}
&
\B \ar[1,0]^{g}
\\
\A
\ar[0,1]_-{f}
\ar @{} [-1,1]|{\nearrow}
&
\C
}
$$
We have to show that the equality
\begin{equation}
\label{eq:pp-bcc-1}
\PP\C(\PP f(\phi),\PP g(\psi))
=
\bigvee_\theta \PP\A(\phi,\PP p_0(\theta))\tensor\PP\B(\PP p_1(\theta),\psi)
\end{equation}
holds for all $\phi$ and $\psi$.
By~\refeq{eq:near_exactness}, the inequality
$$
\PP\C(\PP f(\phi),\PP g(\psi))
\geq
\bigvee_\theta \PP\A(\phi,\PP p_0(\theta))\tensor\PP\B(\PP p_1(\theta),\psi)
$$
holds {\em always\/}, hence we only need to prove the reversed
inequality.
To that end, consider the $\V$-subset $\theta:|\P|\to\V$,
defined by
$$
\theta(w)
=
\phi^\uparrow(p_0 w)
\wedge
\psi^\downarrow(p_1 w)
$$
Since the equalities
\begin{eqnarray*}
\PP\C(\PP f(\phi),\PP g(\psi))
&=&
[|\C|,\V](\PP f(\phi),\PP g(\psi)^\downarrow)
\tensor
[|\C|,\V](\PP g(\psi),\PP f(\phi)^\uparrow)
\\
&=&
[|\A|,\V](\phi,\PP g(\psi)^\downarrow\cdot |f|)
\tensor
[|\B|,\V](\psi,\PP f(\phi)^\uparrow\cdot |f|)
\\
&=&
\bigwedge_a [\phi(a),\bigvee_b \C(fa,gb)\tensor \psi(b)]
\tensor
\bigwedge_b [\psi(b),\bigvee_a \C(fa,gb)\tensor \phi(a)]
\end{eqnarray*}
and
$$
\bigvee_\theta \PP\A(\phi,\PP p_0(\theta))\tensor\PP\B(\PP p_1(\theta),\psi)
=
\bigvee_\theta \PP\A(\phi,\PP p_0(\theta))\tensor\PP\B(\theta,\psi\cdot |p_1|)
$$
hold by definition, it suffices to prove the inequalities
\begin{equation}
\label{eq:jv1}
\bigwedge_a [\phi(a),\bigvee_b\psi(b)\tensor\C(fa,gb)]
\leq
\bigwedge_a [\phi(a),\bigvee_w\theta(w)\tensor\A(a,p_0 w)]
\leq
\PP\A(\phi,\PP p_0(\theta))
\end{equation}
and
\begin{equation}
\label{eq:jv2}
\bigwedge_b [\psi(b),\bigvee_a \C(fa,gb)\tensor \phi(a)]
\leq
\bigwedge_b [\psi(b),\bigvee_w\theta(w)\tensor\B(p_1 w,b)]
\leq
\PP\B(\theta,\psi\cdot |p_1|)
\end{equation}
The first inequality in~\refeq{eq:jv1} is proved
using the equality $[x,x\tensor y]=[x,y]$ in $\V_o$
(that holds since $\tensor$ is assumed to be $\wedge$)
and the exactness equality
$\C(fa,gb)=\bigvee_w \A(a,p_0 w)\tensor\B(p_1 w,b)$,
the second inequality in~\refeq{eq:jv1} follows
from $\tensor$ being $\wedge$.
The inequalities~\refeq{eq:jv2} follow analogously.

This finishes the proof that,
for $\V$ with $\tensor=\wedge$, it satisfies the Beck-Chevalley
Condition. We recall that $\tensor=\wedge$ holds in generalised
ultrametric spaces but not in generalised metric spaces. \qed
\end{example}

{We conclude by showing that the relation lifting of $\PP$ indeed
  combines the two formulas for $\LL$ and $\UU$ from
  \eqref{eq:LU-lift}.}

{
  \begin{example}[\bf Relation lifting of $\PP$]
    Let $\V$ be a quantale such that $\tensor=\wedge$. By
    Example~\ref{ex:power-funt} we have that $\PP$ satisfies the BCC.
    For all $A:|\A|\to\V$ and $B:|\B|\to\V$, the relation lifting
    $\ol{\PP}$ of $\PP$ is given by
    \begin{eqnarray*}
      \ol{\PP}(R)( B , A )
      &=&
      \PP\Coll{R}(\PP i_0( B ),\PP i_1( A )
      \\
      &=&
      \bigwedge_{y\in\Coll{R}}[\PP i_0( B )(y),\bigvee_{x\in\Coll{R}}\PP i_1( A )(x)\tensor\Coll{R}(y,x)]\tensor
      \\
      & &
      \bigwedge_{x\in\Coll{R}}[\PP i_1( A )(x),\bigvee_{y\in\Coll{R}}\PP i_0( B )(y)\tensor\Coll{R}(y,x)]
      \\
      &=&
      \bigwedge_{y\in\A}[\bot,\bigvee_{x\in\Coll{R}}\PP i_1( A )(x)\tensor\Coll{R}(y,x)]
      \wedge
      \\
      & &
      \bigwedge_{y\in\B}[ B (y),\bigvee_{x\in\Coll{R}}\PP i_1( A )(x)\tensor\Coll{R}(y,x)]\wedge
      \\
      & &
      \bigwedge_{x\in\A}[\PP i_1( A )(x),\bigvee_{y\in\Coll{R}}\PP i_0( B )(y)\tensor\Coll{R}(y,x)]\wedge
      \\
      & &
      \bigwedge_{x\in\B}[\bot,\bigvee_{y\in\Coll{R}}\PP i_0( B )(y)\tensor\Coll{R}(y,x)]
      \\
      &=&
      \bigwedge_{b\in\B}[ B (b),\bigvee_{a\in\A} A (a)\tensor R(b,a)]\wedge
      \bigwedge_{a\in\A}[ A (a),\bigvee_{b\in\B} B (b)\tensor R(b,a)]
    \end{eqnarray*}
  \end{example}
}

{
\section{Moss's cover modality $\nabla$ over $\Vcat$}
\label{sec:nabla}
\newcommand{\lcal}{\mathcal{L}} In this section we apply the material
collected so far and investigate the semantics of Moss's cover
modality $\nabla$. We first extend the standard notion of bisimilarity
to coalgebras for functors $T$ on $\Vcat$ and then show that $\nabla$
is invariant under bisimilarity if $T$ satisfies the Beck-Chevalley
Condition.

\subsection{Coalgebras and bisimilarity}

\begin{definition}
\label{def:basic-coalg}
  A $T$-coalgebra is a $\V$-functor $\xi:\X\to T\X$. Elements of $\X$
  are called states and $\xi$ is the transition structure. A coalgebra
  morphism from $\xi:\X\to T\X$ to $\xi':\X'\to T\X'$ is $\V$-functor
  $f:\X\to\X'$ such that $\xi'\cdot f= Tf\cdot \xi$.  The category of
  $T$-coalgebras is denoted by $\Coalg(T)$ and we write
  $U:\Coalg(T)\to\Vcat$ and $V:\Vcat\to\Set$ for the respective
  forgetful functors.
\end{definition}

The current setting is rich in examples. For functors $T$ we can
choose at least those of Example~\ref{ex:Kripke_polynomial}. For $\V$
we can choose any commutative quantale. So we certainly encompass
coalgebras over posets and over ultrametric spaces, which play a
fundamental role in domain theory and the semantics of programming
languages \cite{abra-jung:handy,bakk-vink:cfs}. In a specific
coalgebraic context, they were investigated eg in
\cite{rutten:cmcs98,worrell:cmcs00,hugh-jaco:simulations,klin:phd,levy:fossacs11,bala-kurz:calco11}.

The following example illustrates in the case of generalised metric
spaces the effect of the coalgebra structure being non-expansive.

\begin{example}
Consider $\X\in\GMet$ and $\xi:\X\to\LL\X$. The requirement that $\xi$
is a $\V$-functor, ie a non-expansive map, gives us
\begin{eqnarray*}
  X(x,x') & \ge_\mathbb{R} & \LL\X(\xi(x),\xi(x'))\\
& = & [\X^\op,\V](\xi(x)(y),\xi(x')(y))\\
& = & \bigwedge_{y\in\X} [\xi(x)(y),\xi(x')(y)]\\
& = & \sup_{y\in\X} (\xi(x')(y) \dotdiv \xi(x)(y))
\end{eqnarray*}
This is equivalent to
\begin{equation*}
\forall y\in\X\;.\ \xi(x')(y) \ \le_\mathbb{R} \ X(x,x') + \xi(x)(y).
\end{equation*}
Similarly, for $\xi:\X\to\UU\X$, we obtain
\begin{equation*}
\forall y\in\X\;.\ \xi(x)(y) \ \le_\mathbb{R} \ X(x,x') + \xi(x')(y).
\end{equation*}
We see that non-expansiveness of $\xi$ corresponds to a
triangle-inequality relating ``internal moves in $\X$'' with
``external moves in the coalgebra''. Note that the direction of the
``internal moves'' (dotted below) is different in both cases, in a
picture:
\begin{equation*}
\xymatrix@R=10pt{
\xi:\X\to\LL\X &\quad\quad& \xi:\X\to\UU\X  \\
x\ar[rd]^{\xi(x)(y)} && x\ar@{..>}[dd]_{\X(x,x')}\ar[rd]^{\xi(x)(y)} &   \\
  & y && y \\
x'\ar@{..>}[uu]^{\X(x,x')}\ar[ru]_{\xi(x')(y)}&& x'\ar[ru]_{\xi(x')(y)}&  \\
}
\end{equation*}\vspace{-30 pt}
\end{example}\qed\bigskip

\noindent Any object $\A$ of $\Vcat$ induces a coalgebra
$\A\to\LL\A$ via the correspondence of $\A(-,-):\A^\op\tensor \A \to
\V$ with $\A\to[\A^\op,\V]=\LL(\A)$ and a coalgebra $\A\to\UU\A$ via
the correspondence of $\A(-,-):\A^\op\tensor \A \to \V$ with
$\A\to[\A,\V]^\op=(\LL(\A^\op))^\op=\UU(\A)$.

\begin{example}\label{exle:LU-coalgebras}
  In case of $\A$ being $A^\infty$ as in item 3 of
  Example~\ref{exle:metric-spaces} we have the following instances of
  the previous example.
\begin{enumerate}[(1)]
\item The coalgebra $\A\to\LL\A$ maps a state $w$ to the predicate
  $A^\infty(-,w)$. Intuitively, from $w$ one can move at no costs to
  any prefix $v\le w$ and from there to any extension $v'$ of $v$ at
  cost $\sum_{i=|v|+1}^{|v'|}$. This also illustrates
  how $\LL$ generalises the lower-sets functor known from preorders,
  ie $\V=\Two$.
\item The coalgebra $\A\to\UU\A$ maps a
  state $w$ to the predicate $A^\infty(w,-)$. Intuitively, from $w$
  one can move at no costs to any extension $v$ with $w\le v$, but one
  can move to other $v$ only at a cost which measures ``the amount of
  information lost by erasing elements from $w$''. Note how $\UU$
  generalises the upper-sets functor known from preorders (or posets).
\end{enumerate}
\end{example}

\noindent In the case of coalgebras over $\Set$ there are different ways to
define bisimilarity for coalgebras. One can define it via the notion
of a largest bisimulation, which does exists if $T$ preserves weak
pullbacks (=satisfies BCC), see \cite{rutten:uc-j} for an introduction
and \cite{staton:bisim} for a recent survey. A more conceptual
approach defines bisimilarity as what remains invariant under
coalgebra morphisms. This idea can be made precise in different ways,
eg via the final coalgebra, via cospans, or as in the next definition
taken from \cite{kurz-rosi:coequations}. An additional benefit of this
definition is that it does not depend on special properties of $T$ or
$\Vcat$. Recall the forgetful functors $U:\Coalg(T)\to\Vcat$ and 
$V:\Vcat\to\Set$ introduced in Definition~\ref{def:basic-coalg}.
\begin{definition}\label{def:bisimilarity}
  Bisimilarity, or behavioural equivalence, is the smallest
  equivalence relation on elements of coalgebras generated by pairs
\begin{equation}\label{eq:def:bisimilarity}
(x,VUf(x))
\end{equation} where $x$ is an element
  of a coalgebra and $f$ is a coalgebra morphism.
\end{definition}


The carrier of that relation is a proper class as we allow us to
compare states from any two given coalgebras.
Next, we show that bisimilarity is the equivalence relation classified
by the final coalgebra.

\begin{proposition}
If the final $T$-coalgebra exists, then its carrier is given by
$\colim VU$.
\end{proposition}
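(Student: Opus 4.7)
The plan is to verify that $VZ$, the carrier of the final $T$-coalgebra $Z$ equipped with the maps induced by finality, satisfies the universal property of $\colim VU$ in $\Set$. Since colimits in $\Set$ are constructed as quotients of disjoint unions by the smallest equivalence relation generated by pairs $(x, Ff(x))$ for morphisms $f$ in the index category, taking $F=VU$ yields precisely the equivalence relation of Definition~\ref{def:bisimilarity}. Hence identifying $VZ$ with $\colim VU$ also confirms that bisimilarity classes correspond to elements of $VZ$, which is the announced ``classification by the final coalgebra''.

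First I will construct the cocone. For each $T$-coalgebra $\xi:\X\to T\X$, finality of $Z$ gives a unique coalgebra morphism $!_\xi:\xi\to Z$; applying $VU$ yields a map $VU!_\xi:VU\xi\to VZ$. Naturality in coalgebra morphisms is automatic: for any $f:\xi\to\xi'$ in $\Coalg(T)$, the composite $!_{\xi'}\cdot f$ is a coalgebra morphism into $Z$, hence equals $!_\xi$ by uniqueness, so $VU!_{\xi'}\cdot VUf = VU!_\xi$.

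To show the cocone is colimiting, suppose $(c_\xi:VU\xi\to S)_\xi$ is any cocone on $VU$. Define the mediator $\bar c := c_Z : VZ\to S$. For each $\xi$ one has $\bar c \cdot VU!_\xi = c_Z\cdot VU!_\xi = c_\xi$, using the cocone condition applied to the morphism $!_\xi$. Uniqueness is immediate from $!_Z=\id_Z$ (by finality): any mediator $\bar c'$ satisfies $\bar c' = \bar c'\cdot VU!_Z = c_Z = \bar c$.

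The one point requiring care is that $\Coalg(T)$ is not small, so $\colim VU$ need not exist a priori in $\Set$; the argument in fact shows that whenever the final coalgebra exists, this class-indexed diagram does admit a colimit in $\Set$, with carrier $VZ$. Unwinding the resulting quotient description of $VZ$ then yields exactly the bisimilarity classes of Definition~\ref{def:bisimilarity}, since $VU!_\xi(x) = VU!_{\xi'}(x')$ in $VZ$ iff $x$ and $x'$ are connected by a zig-zag of pairs of the form~\refeq{eq:def:bisimilarity}.
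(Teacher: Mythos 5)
Your proof is correct. Every step checks out: the cocone $(VU!_\xi)_\xi$ is well defined by uniqueness of the morphisms into the final coalgebra $Z$, the mediator $c_Z$ satisfies the required triangle identities by the cocone condition applied to $!_\xi:\xi\to Z$, and uniqueness follows from $!_Z=\id_Z$. The route is genuinely different from the paper's, though. The paper argues modularly: it cites the classical fact that a terminal object is $\colim\Id$ of the identity functor on its own category, and then transports this along $U$ and $V$ using that both preserve colimits ($U$ because colimits of coalgebras are created by the forgetful functor, $V$ because it has a right adjoint sending a set $X$ to the indiscrete $\V$-category with all homs $\top$). You instead inline everything into a single direct verification of the universal property of $\colim VU$ at $VZ$. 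Your version is more elementary and self-contained --- it needs no colimit-preservation facts and no external citation --- and it has the additional virtue of explicitly establishing that the class-indexed colimit exists at all, a size issue the paper passes over in silence; the paper's version is shorter on the page and isolates reusable general facts. Both are valid proofs of the proposition.
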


\begin{proof}
  If the final $T$-coalgebra exists, then it is $\colim\Id$ where
  $\Id$ is the identity functor on $\Coalg(T)$, see
  \cite[Chapter X.1]{maclane}. Since $U$ preserves colimits, we have
  $U(\colim\Id)=\colim U$. Moreover, since $V$ has a right-adjoint
  (given by equipping a set $X$ with homs $X(x,y)=\top$), it preserves
  colimits, so we have $VU(\colim\Id)=V\colim U= \colim VU$.
\end{proof}

From the way that colimits are computed in $\Set$ it follows that
elements of $\colim VU$ are equivalence classes of
bisimilarity. Therefore, if the final coalgebra exists, then \emph{two
  states are bisimilar iff they are identified by the unique morphisms
  into the final coalgebra.}

We finish the section with a brief discussion of the notion of
similarity from \cite{rutten:cmcs98,worrell:cmcs00}, where a relation
$\xymatrix@1{R:\X\ar[0,1]|-{\object @{/}}&\Y}$ is a $T$-simulation
between coalgebras $\xi:\X\to T\X$ and $\nu:\Y\to T\Y$ if
\begin{equation}\label{eq:simulation}
\vcenter{
\xymatrix{
\X
\ar[0,1]|-{\object @{/}}^-{\xi_\diamond}
\ar[1,0]|-{\object @{/}}_-{R}
&
T\X
\ar[1,0]|-{\object @{/}}^-{\ol{T}(R)}
\\
\Y
\ar[0,1]|-{\object @{/}}_-{\nu_\diamond}
\ar @{} [-1,1]|{\nearrow}
&
T\Y
}
}
\end{equation}
is a lax square. This is a direct generalisation of the notion of
bisimulation for $\Set$ and \cite{rutten:cmcs98,worrell:cmcs00} show a
coinduction theorem stating that the internal hom of the final
coalgebra is given by the largest $T$-simulation on the final
coalgebra. The following example shows that this coinduction theorem
does not capture bisimilarity. Intuitively, the reason is that two-way
simulation and bisimilarity are different notions: We can have two
elements of the final coalgebra, each simulating the other
without being bisimilar.

\begin{example}
  Consider $\V=\Two$, ie $\Vcat=\Pre$, and the functor $T:\Pre\to\Pre$
  mapping $\X$ to $2\times\X$ where $2=\{0,1\}$ is pre-ordered via
  $0\le 1$ and $1\le 0$. The final $T$-coalgebra has as elements the
  streams over $2$, so that two streams are bisimilar iff they are
  equal. But according to \eqref{eq:simulation}, any stream can
  simulate any other.
\end{example}

As explained in \cite{worrell:cmcs00}, the notion of simulation from
\eqref{eq:simulation} is well-behaved whenever $\ol{T}$ is a lax
lifting.

\subsection{Review of Moss's cover modality $\nabla$ over $\Set$} \


\medskip\noindent In
his seminal paper \cite{moss:cl}, Moss defined a logic $\mathcal{L}$
for coalgebras for $T:\Set\to\Set$ which is parametric in $T$. First
$\mathcal{L}$ is defined inductively by closing under infinite
conjunctions and by closing under the functor $T$ itself, so that the
logic can be seen as an algebra
$$(\mathcal{P}+T)(\mathcal{L})\to\mathcal{L}$$
The component
$$ T\mathcal{L}\stackrel{\nabla}{\longrightarrow}\mathcal{L}$$
can be understood in more conventional logical terms as stating that
$\mathcal{L}$ is closed under $\nabla$, or, more precisely, as
stipulating that if $\gamma\in T(\mathcal{L})$ then
$\nabla\gamma\in\mathcal{L}$. The semantics of the logic w.r.t.\ a
coalgebra $\xi:X\to TX$ and a state $x$ in $X$ is described via a
relation
$$ \Vdash_\xi{\subseteq} \ X\times\mathcal{L}$$
where we will drop the subscript $\xi$ in the following. The semantics
of the operator $\nabla$ is then given using the relation lifting
$\ol{T}$ via the inductive clause
\begin{equation}\label{eq:Vdash-set}
x\Vdash\nabla\gamma \ \Leftrightarrow \ \ol{T}(\Vdash)(\xi(x),\gamma).
\end{equation}
This can be written as a commuting diagram in the category
$\mathsf{Rel}$ of sets and relations as follows
$$
\xymatrix{
T\mathcal{L} \ar[r]^{\nabla} \ar[d]|-{\object @{/}}_{\ol{T}(\Vdash)}&
\mathcal{L} \ar[d]|-{\object @{/}}^{\Vdash} \\
TX \ar[r]|-{\object @{/}} & X
}
$$
where the lower row is the converse of the graph of $\xi$. One of the
basic results of Moss is that the logic is invariant under
bisimilarity if $T$ preserves weak pullbacks.

\subsection{The cover modality $\nabla$ over $\Vcat$.} \

\medskip\noindent We are going to redevelop the previous subsection in
the enriched setting. First, we take $\Vdash$ from above now to be of
the form
\begin{equation}\label{eq:Vdash}
{\Vdash}: \X\tensor \mathcal{L}\to \V, \quad\quad \textrm{ that is, }
\quad\quad \xymatrix@1{{\Vdash}:\mathcal{L}\ar[0,1]|-{\object
    @{/}}&\X^\op}.
\end{equation}
Note that in the case $\V=\Two$, this conforms with the usual set-up
of logics over preorders or posets as eg in \cite{abra-jung:handy}. In
particular, we have $\phi\le\psi \ \Rightarrow\
{\Vdash}(x,\phi)\le{\Vdash}(x,\psi)$ showing that the order on
$\mathcal{L}$ behaves semantically as implication and we have $x\le y
\ \Rightarrow\ {\Vdash}(x,\phi)\le{\Vdash}(y,\phi)$ showing that
semantically $\phi$ behaves like an up-set. These observations
  continue to make sense in the metric setting, for example,
  $\mathcal{L}(\phi,\psi)$ behaves as a $\V$-valued implication which
  can be seen from that fact that we always have
  ${\Vdash}(x,\phi)\tensor\mathcal{L}(\phi,\psi) \le
  {\Vdash}(x,\psi)$. For example, in the case of $\Vcat=\GMet$, it
  says ${\Vdash}(x,\phi)+ \mathcal{L}(\phi,\psi) \ge_\mathbb{R}
  {\Vdash}(x,\psi)$, or more informally, if $\phi$ is $\true$ then
  $\psi$ cannot deviate from $\true$ by more than
  $\mathcal{L}(\phi,\psi)$.

Next we want to assume that $\mathcal{L}$ comes equipped with a
$\nabla$-operator. As will become clear shortly the $^\op$ in
\eqref{eq:Vdash} makes it necessary to take formulas of the kind
$\nabla\gamma$ not from $T\mathcal{L}$ but from
$T^\partial\mathcal{L}$, see
Example~\ref{ex:Kripke_polynomial}. Recall that
$T^\partial(\X)=(T(\X^\op))^\op$, so that $T$ and $T^\partial$ agree on
discrete $\X$. So we assume that we have an algebra
$$T^\partial\mathcal{L}\to\mathcal{L}$$
and we define as before in \eqref{eq:Vdash-set}
\begin{equation}
  \label{eq:nabla-sem-0}
  {\Vdash}(x,\nabla\gamma)=\ol{T^\partial}({\Vdash})(\xi(x),\gamma)
\end{equation}
which can be expressed diagrammatically as
\begin{equation}\label{eq:nabla-sem-modules}
  \vcenter{
    \xymatrix@C=100pt{
      T^\partial\mathcal{L} \ar[r]|-{\object
        @{/}}^{\nabla_\diamond} \ar[d]|-{\object
        @{/}}_{\ol{T^\partial}(\Vdash) \ }&
      \mathcal{L} \ar[d]|-{\object @{/}}^{\ \Vdash} \\
      T^\partial (X^\op)=(TX)^\op \ar[r]|-{\object @{/}}^{\ \ (\xi^\op)^\diamond} & X^\op
    }
  }
\end{equation}
(For the notation $(-)^\diamond$ and $\ol{(-)}$ we refer back to
Remark~\ref{rmk:upper-diamond} and
Remark~\ref{rem:formula_for_lifting}, respectively.)

\subsection{The modality $\nabla$ is invariant under bisimilarity.} \

\medskip\noindent Our next aim is to show that $\nabla$ is invariant under
bisimilarity. To this end we use some of the previously developed
machinery, in particular from Section~\ref{sec:Kleisli},  in order to bring \eqref{eq:nabla-sem-modules} into an
equivalent but sometimes more useful form \eqref{eq:nabla-sem}.


Using the adjunction $({-})_\diamond\dashv ({-})^\dagger$ of
Proposition~\ref{prop:adj-KZ}, to give a relation
$\xymatrix{ \Vdash: \lcal \ar[0,1]|-{\object @{/}} & \X^\op }$
is to give to a $\V$-functor $\sem{\cdot}:\lcal\to[\X,\V]$.  We
consider the $T^\partial$-algebra $$([\X,\V],
[\xi,\V]\cdot\delta_{\X^\op})$$ where
$\delta_{\X^\op}:T^\partial\LL\X^\op\to\LL T^\partial\X^\op$ is the
distributive law of
Corollary~\ref{cor:lifting=distributive_law}. Recall from
Remark~\ref{rem:comp-distributive-law} that $\delta_{\X^\op}$
corresponds to the relation $\ol{T^\partial}(\ev_{\X^\op})$, where
$\xymatrix{ \ev_{\X^\op}: [\X,\V] \ar[0,1]|-{\object @{/}} & \X^\op }$
is the elementhood relation.


\begin{proposition}\label{prop:sem-nabla}
  Let $\xi:\X\to T\X$ be a coalgebra and
  $T^\partial\mathcal{L}\to\mathcal{L}$ an algebra. Consider a
  relation $\xymatrix{ {\Vdash}: \lcal \ar[0,1]|-{\object @{/}} & \X^\op
  }$ with $\sem{\cdot}:\lcal\to[\X,\V]$ being the corresponding
  $\V$-functor given by Proposition~\ref{prop:adj-KZ}. If $T$ satisfies
  BCC, then the diagram
\begin{equation}
  \label{eq:nabla-sem}
\vcenter{
\xymatrix{
  T^\partial\mathcal{L}\ar[rr]^{\nabla}
  \ar[d]_{T^\partial\sem{\cdot}}
  & &
  \mathcal{L}\ar[d]^{\sem{\cdot}}
  \\
  T^\partial [\X,\V]\ar[r]^{\delta_{\X^\op}} & [T\X,\V] \ar[r]^{[\xi,\V]}  & [\X,\V]\\
}
}
\end{equation}
commutes if and only if \eqref{eq:nabla-sem-modules}
commutes. Moreover, the relation ${\Vdash}$ is the
composition
$$
\xymatrix{
\mathcal{L}
\ar[0,1]|-{\object @{/}}^-{(\sem{\cdot}_\xi)_\diamond}
&
[\X,\V]
\ar[0,1]|-{\object @{/}}^-{\ev_{\X^\op}}
&
\X^\op
}
$$
\end{proposition}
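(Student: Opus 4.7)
The plan is to prove the ``Moreover'' statement first, then deduce the equivalence of \eqref{eq:nabla-sem-modules} and \eqref{eq:nabla-sem} by transporting both legs of \eqref{eq:nabla-sem-modules} along the bijection provided by the 2-adjunction $({-})_\diamond\dashv({-})^\dagger$ of Proposition~\ref{prop:adj-KZ}. The ``Moreover'' part is immediate: by Proposition~\ref{prop:adj-KZ} the counit at $\X^\op$ is $\ev_{\X^\op}:[\X,\V]\to\X^\op$, and the module that corresponds to $\sem{\cdot}:\lcal\to[\X,\V]=(\X^\op)^\dagger$ under the triangle identity is exactly $\ev_{\X^\op}\cdot(\sem{\cdot})_\diamond$; by the setup of the proposition this module is $\Vdash$.

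For the equivalence, I will bring both legs of \eqref{eq:nabla-sem-modules} into the shape $\ev_{\X^\op}\cdot g_\diamond$ for a suitable functor $g:T^\partial\lcal\to[\X,\V]$. The top-right leg is straightforward from the ``Moreover'' part and functoriality of $({-})_\diamond$:
$$
\Vdash\cdot\nabla_\diamond \;=\; \ev_{\X^\op}\cdot(\sem{\cdot})_\diamond\cdot\nabla_\diamond \;=\; \ev_{\X^\op}\cdot(\sem{\cdot}\circ\nabla)_\diamond .
$$
The left-bottom leg is more involved. Since $T$ satisfies BCC, so does $T^\partial$ (by the self-duality of exactness in Example~\ref{ex:dual}), so by Corollary~\ref{cor:ext-thm} the relation lifting $\ol{T^\partial}:\Vmod\to\Vmod$ exists as a genuine $2$-functor. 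Using that $\ol{T^\partial}$ extends $T^\partial$ and preserves composition, that $\ol{T^\partial}(\ev_{\X^\op})=\ev_{(T\X)^\op}\cdot(\delta_{\X^\op})_\diamond$ by Remark~\ref{rem:comp-distributive-law}, and the naturality-style identity
$$
(\xi^\op)^\diamond\cdot\ev_{(T\X)^\op} \;=\; \ev_{\X^\op}\cdot[\xi,\V]_\diamond ,
$$
I plan to compute
$$
(\xi^\op)^\diamond\cdot\ol{T^\partial}(\Vdash) \;=\; \ev_{\X^\op}\cdot\bigl([\xi,\V]\circ\delta_{\X^\op}\circ T^\partial\sem{\cdot}\bigr)_\diamond .
$$

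To conclude, the assignment $g\mapsto\ev_{\X^\op}\cdot g_\diamond$ is a bijection between functors $T^\partial\lcal\to[\X,\V]$ and modules $T^\partial\lcal\to\X^\op$ by Proposition~\ref{prop:adj-KZ}; it sends the two legs of \eqref{eq:nabla-sem} to the two legs of \eqref{eq:nabla-sem-modules}, so the two diagrams commute simultaneously. The main obstacle I foresee is justifying the naturality identity $(\xi^\op)^\diamond\cdot\ev_{(T\X)^\op}=\ev_{\X^\op}\cdot[\xi,\V]_\diamond$; however, I expect this to reduce to a short co-Yoneda calculation, since both sides evaluated at $(x,\phi)\in\X\times[T\X,\V]$ collapse to $\phi(\xi(x))$.
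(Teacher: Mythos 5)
Your proposal is correct and follows essentially the same route as the paper: both arguments rest on factoring ${\Vdash}$ as $\ev_{\X^\op}\cdot(\sem{\cdot})_\diamond$, identifying $\delta_{\X^\op}$ with $\ol{T^\partial}(\ev_{\X^\op})$ via Remark~\ref{rem:comp-distributive-law}, and using that $\ol{T^\partial}$ (which exists and is functorial by BCC) preserves composition. The only difference is presentational: the paper carries out the comparison pointwise at $(x,\alpha)$ --- where your naturality identity $(\xi^\op)^\diamond\cdot\ev_{(T\X)^\op}=\ev_{\X^\op}\cdot[\xi,\V]_\diamond$ collapses to the trivial step $[\xi,\V](\Phi)(x)=\Phi(\xi(x))$ --- whereas you transport both legs along the Kleisli bijection of Proposition~\ref{prop:adj-KZ}; your anticipated co-Yoneda verification of that identity is indeed all that is needed.
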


\begin{proof}
  The commutativity of diagram~(\ref{eq:nabla-sem}) amounts to
\begin{eqnarray*}
 {{\Vdash}}(x,\nabla\alpha)
&= &
\sem{\nabla\alpha}(x)
\\
&= &
[\xi,\V]\delta_{\X^\op}(T^\partial\sem{\alpha})(x)
\\
& =&
\delta_{\X^\op}(T^\partial\sem{\alpha}\xi)(x)
\\
& =&
\ol{T^\partial}(\ev_{\X^\op})(\xi(x),T^\partial\sem{\alpha}).
\end{eqnarray*}
On the other hand, using the fact that $\ol{T^\partial}$ preserves composition we have
\begin{eqnarray*}
\ol{T^\partial}(\ev_{\X^\op})(\xi(x),T^\partial\sem{\alpha})
&=&
\bigvee_{\phi\in[\X,\V]} \ol{T^\partial}(\ev_{\X^\op})(\xi(x),\phi)\tensor[\X,\V](\phi,T^\partial\sem{\alpha})
\\
& =&
\bigvee_{\phi\in[\X,\V]}\ol{T^\partial}(\ev_{\X^\op})(\xi(x),\phi)\tensor(T^\partial\sem{\cdot})_\diamond(\phi,\alpha)
\\
&=&
\bigvee_{\phi\in[\X,\V]}\ol{T^\partial}(\ev_{\X^\op})(\xi(x),\phi)\tensor\ol{T^\partial}({\sem{\cdot}}_\diamond)(\phi,\alpha)
\\
&=&
\ol{T^\partial}(\ev_{\X^\op})\cdot\ol{T^\partial}({\sem{\cdot}})_\diamond(\xi(x),\alpha)
\\
&=&
\ol{T^\partial}(\ev_{\X^\op}\cdot({\sem{\cdot}})_\diamond)(\xi(x),\alpha)
\\
&=&
\ol{T^\partial}({\Vdash})(\xi(x),\alpha).
\end{eqnarray*}
To conclude, notice that diagram~\eqref{eq:nabla-sem-modules} commutes if and only if the equality
$$ {{\Vdash}}(x,\nabla\alpha) = \ol{T^\partial}({\Vdash})(\xi(x),\alpha)$$ holds.
\end{proof}

As a corollary, we obtain that the semantics of $\nabla$ is invariant
under bisimilarity.

\begin{proposition}\label{prop:nabla-bisimilarity}
  If $T$ satisfies BCC, then $\nabla$ is invariant under bisimilarity.
\end{proposition}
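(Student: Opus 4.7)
The plan is to reduce invariance of $\nabla$ under bisimilarity to the fact that every coalgebra morphism $f:(\X,\xi)\to(\X',\xi')$ preserves the semantics. By Definition~\ref{def:bisimilarity}, bisimilarity is the equivalence closure of pairs $(x,VUf(x))$ generated by coalgebra morphisms, so such preservation (together with the routine cases for propositional connectives) yields the full result. Pointwise the statement to prove reads ${\Vdash_\xi}(x,\alpha) = {\Vdash_{\xi'}}(fx,\alpha)$ for all $x$ and $\alpha$, and since each $\sem{\alpha}_{\xi'}:\X'\to\V$ is a $\V$-functor, the co-Yoneda identity $\sem{\alpha}_{\xi'}(fx) = \bigvee_{x'}\X'(x',fx)\tensor\sem{\alpha}_{\xi'}(x')$ rephrases it as the module equality
\[
{\Vdash_\xi} \ = \ (f^\op)^\diamond \cdot {\Vdash_{\xi'}}
\]
between modules from $\mathcal{L}$ to $\X^\op$.

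The strategy is then to set $\Vdash' := (f^\op)^\diamond\cdot{\Vdash_{\xi'}}$ and show that $\Vdash'$ satisfies the characterising diagram~\eqref{eq:nabla-sem-modules} for the coalgebra $\xi$. Since $\mathcal{L}$ carries the initial algebra for the syntax, $\Vdash_\xi$ (equivalently $\sem{\cdot}_\xi$) is the unique module fitting into that diagram for $\xi$, so this forces $\Vdash' = {\Vdash_\xi}$, which is precisely the required identity.

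The diagram check for $\Vdash'$ is a short chase combining three ingredients: (i) the contravariance of the upper diamond $(-)^\diamond$ on $\Vcat$; (ii) the coalgebra morphism equation $\xi'\cdot f = Tf\cdot\xi$, which after taking opposites gives $(\xi')^\op\cdot f^\op = (Tf)^\op\cdot\xi^\op$; and (iii) the facts that the relation lifting $\ol{T^\partial}:\Vmod\to\Vmod$ preserves composition and extends $T^\partial$ on upper diamonds, so $\ol{T^\partial}((f^\op)^\diamond) = ((Tf)^\op)^\diamond$. For (iii) one invokes that $T^\partial$ inherits BCC from $T$, by Example~\ref{ex:dual} combined with the dual-functor clause of Example~\ref{ex:Kripke_polynomial}, whence Corollary~\ref{cor:ext-thm} supplies $\ol{T^\partial}$. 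Applying~\eqref{eq:nabla-sem-modules} for $\xi'$ to $\Vdash'\cdot\nabla_\diamond$, and expanding $(\xi^\op)^\diamond\cdot\ol{T^\partial}(\Vdash')$ using that $\ol{T^\partial}$ preserves composition, both sides reduce to $(\xi^\op)^\diamond\cdot((Tf)^\op)^\diamond\cdot\ol{T^\partial}({\Vdash_{\xi'}})$.

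The main obstacle is not analytical but notational: tracking the variances (whether $f$ or $f^\op$, lower or upper diamond, $T$ or $T^\partial$ appears where) and selecting the right opposite of the coalgebra morphism equation to apply. Once those conventions are fixed the diagram chase is essentially a one-line computation, and the reduction of the final statement to the module equality is immediate from the co-Yoneda identity.
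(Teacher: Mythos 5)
Your argument is correct, but it follows the route of the paper's \emph{Remark} after Proposition~\ref{prop:nabla-bisimilarity} rather than the paper's official proof. You assume that $\mathcal{L}$ is the initial $T^\partial$-algebra and derive ${\Vdash_\xi}=(f^\op)^\diamond\cdot{\Vdash_{\xi'}}$ by exhibiting the right-hand side as another solution of the characterising diagram~\eqref{eq:nabla-sem-modules} for $\xi$ and then appealing to uniqueness of morphisms out of the initial algebra; your module-level chase (contravariance of $({-})^\diamond$, the coalgebra morphism equation, functoriality of $\ol{T^\partial}$ and $\ol{T^\partial}((f^\op)^\diamond)=((Tf)^\op)^\diamond$) is sound, and BCC enters exactly where it must, namely through the existence and strict functoriality of $\ol{T^\partial}$. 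The paper's proof deliberately avoids the initiality assumption: it proves only the inductive step --- if every $\phi\in\mathcal{L}$ is invariant under bisimilarity then so is every $\gamma\in T^\partial\mathcal{L}$ --- by encoding invariance as naturality of $\sem{\cdot}_\xi:\mathcal{L}\to[U\xi,\V]$ in $\xi$ and pasting three squares (naturality of $\sem{\cdot}$, naturality of the distributive law $\delta$, and the coalgebra morphism equation); this is the same computation as yours transported along the Kleisli correspondence of Corollary~\ref{cor:lifting=distributive_law}, with $\delta$ playing the role of $\ol{T^\partial}(\ev)$. What each approach buys: the paper's version applies to any logic equipped with a $\nabla$-clause (no initial algebra needed, and it isolates the reusable one-step statement), while yours is shorter when the initial algebra exists, collapsing the induction into a single appeal to initiality --- which is precisely what the paper's subsequent Remark records. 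Two small points: the aside about ``routine cases for propositional connectives'' is outside the formal setting here, since $\mathcal{L}$ is only given a $T^\partial$-algebra structure; and for the uniqueness step you should note explicitly that $(f^\op)^\diamond\cdot{\Vdash_{\xi'}}$ corresponds under Proposition~\ref{prop:adj-KZ} to the $\V$-functor $[Uf,\V]\cdot\sem{\cdot}_{\xi'}$, so that Proposition~\ref{prop:sem-nabla} really does let you compare it with $\sem{\cdot}_\xi$ as algebra morphisms.
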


\begin{proof}
  For the purposes of this proof, we do not assume that $\mathcal{L}$ is closed under $\nabla$. Rather we show that
  if all $\phi\in\mathcal{L}$ are invariant under
  bisimilarity, then so are all $\gamma\in
  T^\partial\mathcal{L}$, where the semantics of $\gamma$ wrt a coalgebra $\xi$ is given by the lower-left path of \eqref{eq:nabla-sem}, that is, by $[\xi,\V]\cdot\delta_{\X^\op}\cdot T^\partial\sem{\cdot}_\xi(\gamma)$.

  The assumption that all $\phi\in\mathcal{L}$
  are invariant under bisimilarity can be expressed by saying that $\sem{\cdot}_\xi:\mathcal{L}\to[U\xi,\V]$ is a natural
  transformation $\sem{\cdot}:\mathcal{L}\to[U-,\V]$. This follows
  immediately from spelling out the naturality condition and comparing
  with the definition of bisimilarity in
  Definition~\ref{def:bisimilarity}, see also
  \cite{kurz-rosi:coequations}. We thus have to show that
  $$[\xi,\V]\cdot\delta_{\X^\op}\cdot T^\partial\sem{\cdot}_\xi:{T^\partial}\mathcal{L}\to[U\xi,\V]$$
  is natural in $\xi$ if $$\sem{\cdot}_\xi:\mathcal{L}\to[U\xi,\V]$$ is natural in $\xi$. To this end let $f:\X\to\X'$ be a coalgebra
  morphism from $\xi:\X\to T\X$ to $\xi':\X'\to T\X'$. Consider
\[
\xymatrix@C=50pt{
T^\partial \mathcal{L}
\ar[r]^{T^\partial\sem{\cdot}_{\xi'}}
\ar[d]_{\id}
&
T^\partial [U\xi',\V]
\ar[r]^{\delta_{\xi'}}
\ar[d]^{T^\partial[Uf,\V]}
&
[TU\xi',\V]
\ar[r]^{[U\xi ',\V]}
\ar[d]_{[TUf,\V]}
&
[U\xi',\V]\ar[d]_{[Uf,\V]}
\\
T^\partial \mathcal{L}
\ar[r]^{T^\partial\sem{\cdot}_\xi}
&
T^\partial [U\xi,\V]
\ar[r]^{\delta_{\xi}}
&
[TU\xi,\V]
\ar[r]^{[U\xi,\V]}
&
[U\xi,\V]
}
\]
We need to show that the lower
row of the diagram is natural in $\xi$. But this is indeed the
case. The left square commutes since $\sem{\cdot}_{-}:\lcal\to[U-,\V]$
is natural by assumption, the middle square commutes since $\delta$ is
natural (which follows from BCC by
Corollary~\ref{cor:lifting=distributive_law} and
Corollary~\ref{cor:ext-thm}), and the right square commutes because
$f$ is a coalgebra morphism.
\end{proof}

\begin{remark}
If the initial $T^\partial$-algebra exists we denote it by $T^\partial\mathcal{L}\stackrel{\nabla}{\to}\mathcal L$ and take its carrier $\mathcal L$ as the collection of all formulas. We can then prove directly that all formulas are invariant under bisimilarity. As in the proposition above, invariance under bisimilarity amounts to naturality of $\sem{\cdot}_{-}:\lcal\to[U-,\V]$, that is, to the commutativity of the right-hand column of
\[
\xymatrix@C=50pt{
T^\partial \mathcal{L}
\ar[rr]^{\nabla}
\ar[d]_{}
& &
\mathcal{L}\ar[d]_{\sem{\cdot}_{\xi'}}
\ar@/^30pt/[dd]^{\sem{\cdot}_{\xi}}
\\
T^\partial [U\xi',\V]
\ar[r]^{\delta_{\xi'}}
\ar[d]^{T^\partial[Uf,\V]}
&
[TU\xi',\V]
\ar[r]^{[U\xi',\V]}
\ar[d]_{[TUf,\V]}
&
[U\xi',\V]\ar[d]_{[Uf,\V]}
\\
T^\partial [U\xi,\V]
\ar[r]^{\delta_{\xi}}
&
[TU\xi,\V]
\ar[r]^{[U\xi,\V]}
&
[U\xi,\V]
}
\]
But the right-hand column does indeed commute since homomorphisms out of initial algebras are unique.
\end{remark}

\subsection{Examples} \

\medskip\noindent
In the following examples we are going to look at coalgebras
$$\xi:\X\to T\X$$
for various functors $T$. In each case, we compute the semantics of
$\nabla$, which will follow directly from the formulas obtained for
the relation liftings in Section~\ref{sec:examples}. In the special case of $\V=\Two$ the results agree with those obtained in \cite{bkpv:calco11}.

\begin{remark}\label{rmk:many-valued}
  To see the generalisation from $\Two$ to $\V$ we note that in the
  examples below
\begin{enumerate}[(1)]
\item the transition structure $\xi$ is $\V$-valued,
\item the distance between two states of $\X$ is $\V$-valued,
\item in $\nabla\gamma$, the `observations' $\gamma$ are weighted in
  $\V$.
\end{enumerate}
As far as we know, existing work on many-valued modal logic considers
only the first of the three items. A useful overview can be found in
\cite{begr:mv-modal}.
\end{remark}

\noindent For the examples below let us call $\xi$ and $\gamma$
\emph{discrete} or \emph{crisp} if they take values only in the subset
$\{I,\bot\}$ of the complete lattice $\V$ and let us call them
\emph{fuzzy} otherwise. We also say that $\X$ is discrete if the
distance takes values only in $\{I,\bot\}$.

\medskip\noindent For the remainder of the subsection we assume that a
logic $\lcal$ for $T$-coalgebras $\X\to T\X$ is given, either as a
module ${\Vdash}:\X\tensor\lcal\to\V$ or as $\V$-functor
$\sem{\cdot}:\lcal\to[\X,\V]$, both forms being equivalent by
Proposition~\ref{prop:sem-nabla}. As pointed out in the proof of
Proposition~\ref{prop:nabla-bisimilarity}, the latter form allows us
to express the invariance of the logic under bisimilarity by saying
that
$$\sem{\cdot}_\xi:\mathcal{L}\to[U\xi,\V]$$
is natural in $\xi$. In each example below, given $\sem{\cdot}_\xi$,
we will compute
$$\sem{\nabla \cdot}_\xi:{T^\partial}\mathcal{L}\to[U\xi,\V].$$

  \begin{example} Given a $\UU$-coalgebra $\xi:\X\to\UU\X$ and
  $\gamma\in\UU^\partial(\lcal)=\LL(\lcal)$ we have
    \begin{eqnarray*}
  {{\Vdash}}(x,\nabla\gamma)
&=&
\ol{\UU^\partial}({\Vdash})(\xi(x),\gamma)
\\
&=&
\bigwedge_{y\in\X} [\xi(x)(y),\bigvee_{\phi\in\lcal} {\Vdash}(y,\phi)\tensor\gamma(\phi) ]
    \end{eqnarray*}
For discrete $\xi,\gamma,\X$ this becomes as expected
$$ {{\Vdash}}(x,\nabla\gamma) \ = \ \forall_{y\in\xi(x)}
\exists_{\phi\in\gamma} \ {\Vdash}(y,\phi),$$ which in turn simplifies,
if $\gamma=\lcal(-,\phi)$\footnote{$\lcal(-,\phi)$ can be thought as the lowerset spanned by $\phi$.}, to the usual semantics of the
$\Box$-operator\footnote{Over $\Set$ and for $T$ the powerset
  functor, we have $\nabla\{\phi\}=\Box\phi\wedge\Diamond\phi$. This
  is different here because $\UU$ is only ``one half'' of the
  power functor.}
$$ {{\Vdash}}(x,\nabla\lcal(-,\phi)) \ = \ \forall_{y\in\xi(x)} \
{{\Vdash}}(y,\phi).$$
If we allow $\xi$ to be fuzzy, we obtain
$$ {{\Vdash}}(x,\nabla\lcal(-,\phi)) \ = \ \bigwedge_{y\in\X}
[\xi(x)(y),{\Vdash}(y,\phi) ]$$
which is the semantics of $\Box$ from \cite{begr:mv-modal}. Note that
the same formula also accounts for $\X$ which are not discrete, in
which case $\xi$ and $\phi$ need to be non-expanding (a condition
which is void for discrete $\X$). Finally, we point out that if the
coalgebra structure $\xi:\X\to\UU\X$ externalises an internal hom
(=identity relation) as in Example~\ref{exle:LU-coalgebras}, then $
{{\Vdash}}(x,\nabla\lcal(-,\phi)) \ = \
{{\Vdash}}(x,\phi)$ by the Yoneda lemma.
\end{example}

\begin{example} Given a $\LL$-coalgebra $\xi:\X\to\LL\X$ and
  $\gamma\in\LL^\partial(\lcal)=\UU(\lcal)$ we have
    \begin{eqnarray*}
      {{\Vdash}}(x,\nabla\gamma)
      &=&
      \ol{\LL^\partial}({\Vdash})(\xi(x),\gamma)
      \\
      &=&
      \bigwedge_{\phi\in\lcal} [\gamma(\phi),\bigvee_{y\in\X}
      {\Vdash}(y,\phi)\tensor \xi(x)(y)]
\end{eqnarray*}
For discrete $\xi,\gamma,\X$ this becomes as expected
$$ {{\Vdash}}(x,\nabla\gamma) \ = \ \forall_{\phi\in\gamma}
\exists_{y\in\xi(x)} \ {\Vdash}(y,\phi),$$ which in turn simplifies,
if $\gamma=\lcal(\phi,-)$, to the usual semantics of the
$\Diamond$-operator
$$ {{\Vdash}}(x,\nabla\lcal(\phi,-)) \ = \ \exists_{y\in\xi(x)} \ {{\Vdash}}(y,\phi).$$
If we allow $\xi$ to be fuzzy, we obtain
$$ {{\Vdash}}(x,\nabla\lcal(\phi,-)) \ = \ \bigvee_{y\in\X}
{\Vdash}(y,\phi)\tensor \xi(x)(y)$$ which is the semantics of
$\Diamond$ from \cite{begr:mv-modal}.
Again the same formula also
accounts for non-discrete $\X$.
\end{example}

\begin{example}
  Consider a quantale $\V$ such that $\tensor=\wedge$.  Given a
    $\PP$-coalgebra $\xi:\X\to\PP\X$, the $\nabla$-semantics wrt $\xi$
    is given as follows. Observe that $\PP=\PP^\partial$, thus $\lcal$
    is a $\PP$-algebra.  For every $x\in\X$ and
    $\gamma\in\PP^\partial(\lcal)=\PP(\lcal)$ we have
    \begin{eqnarray*}
  {{\Vdash}}(x,\nabla\gamma)
&=&
\ol{\PP^\partial}({\Vdash})(\xi(x),\gamma)
\\
&=&
\bigwedge_{y\in\X} [\xi(x)(y),\bigvee_{\phi\in\lcal} {\Vdash}(y,\phi)\tensor\gamma(\phi) ] \tensor
\bigwedge_{\phi\in\lcal} [\gamma(\phi),\bigvee_{y\in\X}
      {\Vdash}(y,\phi)\tensor \xi(x)(y)]
    \end{eqnarray*}
For discrete $\xi,\gamma,\X$ this becomes as expected
$$ {{\Vdash}}(x,\nabla\gamma) \ =
 \ \forall_{y\in\xi(x)}
\exists_{\phi\in\gamma} \ {\Vdash}(y,\phi)
\ \wedge
 \ \forall_{\phi\in\gamma}
\exists_{y\in\xi(x)} \ {\Vdash}(y,\phi)$$
which coincides with the usual semantics of $\nabla$ for the powerset
functor.
%
%
If we keep $\gamma$ discrete, but allow the formulas $\phi\in\gamma$
to be fuzzy, then we obtain, in the case of generalised ultrametric
spaces, ie $\Vcat=\GUlt$,
$$ {{\Vdash}}(x,\nabla\gamma) \ =
 \ \max(\ \sup_{y\in\xi(x)}\ \inf_{\phi\in\gamma} \ \ {\Vdash}(y,\phi)\ , \
 \ \ \sup_{\phi\in\gamma} \ \inf_{y\in\xi(x)} \ \ {\Vdash}(y,\phi))$$
 which was proposed as a many-valued semantics of $\nabla$ in
 \cite{dost-kurz:techrep} (with the difference that in this paper the
 order on $[0;1]$ is reversed).
\end{example}

\section{Conclusion}

The emphasis of the work reported here has been on a category
theoretic study of relation liftings in the setting of categories
enriched over a complete commutative quantale. In particular, it was
shown that the familiar characterisations known for functors
$T:\Set\to\Set$ can be extended to functors $T:\Vcat\to\Vcat$. First,
we proved (Corollary~\ref{cor:lifting=distributive_law}) that the
relation lifting $\ol{T}:\Vmod\to\Vmod$ is functorial iff there is a
distributive law $$T\cdot\LL\to\LL\cdot T.$$ This parallels the known result
from $\Set$ where the ``lower-set functor'' $\LL$ replaces the
covariant powerset functor. We also proved
(Corollary~\ref{cor:ext-thm}) that the relation lifting
$\ol{T}:\Vmod\to\Vmod$ is functorial iff $T$ satisfies the
Beck-Chevalley Condition, that is, if $T$ preserves exact squares,
which replaces the condition from $\Set$ that $T$ weakly preserves
pullbacks.

One motivation of our investigations stems from coalgebraic logic. In
particular, we could show that once we have the relation lifting
working, the semantics of Moss's cover modality $\nabla$ could be
defined essentially as over $\Set$. Moreover, although computations
get more complicated, the machinery developed here works smoothly
enough and a number of concrete examples have been worked out.

An exciting direction for future work is to see whether our insights
can be applied to give a systematic account of many-valued modal
logics. Whereas many-valued logics have a long tradition and a deep
and beautiful algebraic theory, see eg \cite{cdm:mv-algebras}, much
less work has been done on extensions of these logics by modal
operators. It will be interesting to see how far we can go with
transferring results from coalgebraic logic over $\Set$ to $\Vcat$,
thus developing many-valued coalgebraic logic.



\end{document}